\newtheorem{theorem}{Theorem}
\newtheorem{corollary}{Corollary}
\newtheorem{lemma}{Lemma}
\newtheorem{proposition}{Prop.}
\theoremstyle{definition}
\theoremstyle{remark}\newtheorem{remark}{Remark}
\theoremstyle{remark}
\newcommand{\balpha}{\overline\alpha}
\newcommand\bzpo{\bar{x}_{U,1}}
\newcommand\balphao{\balpha_1}
\DeclareMathOperator{\bzp}{\bar x_U}
\DeclareMathOperator{\hzp}{\hat x_U}
\DeclareMathOperator{\hzph}{\hat x_{U,h}}
\DeclareMathOperator{\hzpl}{\hat x_{U,l}}
\DeclareMathOperator{\halpha}{\hat\alpha}
\DeclareMathOperator{\btheta}{\overline\theta}
\DeclareMathOperator{\itheta}{\theta}
\newcommand{\rhou}{p^U}
\newcommand{\de}{\mathrm{d}}
\newcommand{\Iff}{\Leftrightarrow}
\newcommand{\Then}{\Rightarrow}
\newcommand{\eps}{\varepsilon}
\title{Designing Autonomous Markets for Stablecoin Monetary Policy\footnote{%
		The design from this paper is implemented as part of the upcoming Gyroscope stablecoin system under the name \emph{Dynamic Stability Mechanism (DSM)}. The source code will be made available later.
}}
\author{Ariah Klages-Mundt\thanks{Researcher at Superluminal Labs; in a separate capacity, a PhD student at Cornell University.}   \ \ \ \ \ \ \ \
	Steffen Schuldenzucker\thanks{Researcher at Superluminal Labs}
}
\date{December 2022}
\begin{document}
	
\renewcommand\algorithmicrequire{\textbf{Input:}}
\renewcommand\algorithmicensure{\textbf{Output:}}
	
\maketitle

\begin{abstract}
	We develop a new type of automated market maker (AMM) that helps to maintain stability and long-term viability in a stablecoin.
This primary market AMM (P-AMM) is an autonomous mechanism for pricing minting and redemption of stablecoins in all possible states and is designed to achieve several desirable properties.
We first cover several case studies of current ad hoc stablecoin issuance and redemption mechanisms, several of which have contributed to recent stablecoin de-peggings, and formulate desirable properties of a P-AMM that support stability and usability.
We then design a P-AMM redemption curve and show that it satisfies these properties, including bounded loss for both the protocol and stablecoin holders.
We further show that this redemption curve is path independent and has properties of path deficiency in extended settings involving trading fees and a separate minting curve.
This means that system health weakly improves relative to the path independent setting along any trading curve and that there is no incentive to strategically subdivide redemptions.
Finally, we show how to implement the P-AMM efficiently on-chain.

\end{abstract}

\section{Introduction}

The design of non-custodial stablecoins has faced several recent turning points, both in the Black Thursday crisis in Dai and the recent churn of algorithmic stablecoins. These have both pointed toward the importance of designing good primary markets for stablecoins ---i.e., mechanisms for pricing minting and redeeming of stablecoins. The terminology here is borrowed from ETF market structure and contrasts ``primary market'', where shares of a fund are minted and redeemed for underlying assets, and ``secondary market'', where existing shares are traded for other assets (and where ordinary exchange trading takes place).
A primary market for a stablecoin helps to maintain peg by allowing users to exchange stablecoins with the protocol near the peg price, providing a means to arbitrage other markets back toward the peg value, should they deviate.
In a blockchain protocol, primary markets are automated; we call the mechanism that controls mint and redeem prices the \emph{primary algorithmic market maker (P-AMM)} of the stablecoin. Every stablecoin design has a P-AMM, be it intentionally designed or not.

 Black Thursday in March 2020 saw a $\sim\!50\%$ crash in ETH in the day. This triggered a deleveraging spiral, a short squeeze effect that causes the price of Dai to increase as borrowers need to buy it to reduce exposure. This was shown to amplify collateral and liquidity drawdown and cause instability in Dai \cite{klages2019stability,klages2020while}.
This demonstrated fundamental problems around deleveraging, liquidity, and scaling in stablecoins like Dai, in which supply depends on an underlying market for leverage as opposed to a primary market.\footnote{
	In this market, a speculator can post collateral and borrow Dai against this collateral to achieve a risky leveraged position. As a result, the supply of Dai will depend on the demand for leverage, which can and does plummet in a crisis.
} 

\paragraph{PSM: a primary market for Dai}
Patching the deleveraging problem has been a major topic since Black Thursday. Several approaches have been pursued, the most prominent of which is the tethering of Dai to the custodial stablecoin USDC.\footnote{
	Two other notable approaches are using negative rates to equilibrate supply and demand at the target (e.g., Rai) and using dedicated liquidity pools to smooth the effects of deleveraging (e.g., Liquity and the solution proposed in \cite{klages2020while}). The former leads to questions of liquidity and equilibrium participation under negative rate regimes, and the latter is not a full fix as it smooths but postpones spirals.
}
This takes the form of Maker's peg stability module (PSM), which maintains exchangeability of Dai with USDC via a protocol-held USDC reserve. The PSM in this way effectively becomes a primary market for minting and redeeming Dai, backed by USDC reserves. The PSM has greatly enhanced the liquidity around Dai's peg and its resilience to deleveraging spirals, as evidenced in Figure~\ref{fig:dai-spiral-vs-psm}, which plots Dai price for days $t$ since the major ETH shocks of 12 March 2020 (w/o PSM) and 19 May 2021 (w/ PSM).
However, this has further exposed the scaling problem of the original Dai mechanism: the leverage market doesn't necessarily scale with demand. Since the May 2021 crisis, Dai backed by USDC has grown from 17\% to now over 60\% of the Dai supply (Figure~\ref{fig:dai-psm}). This arguably compromises the decentralization of Dai by importing the custodial and regulatory risks of USDC.

\begin{figure*}
	\centering
	\begin{subfigure}[b]{0.49\textwidth}
		\centering
		\includegraphics[width=\textwidth]{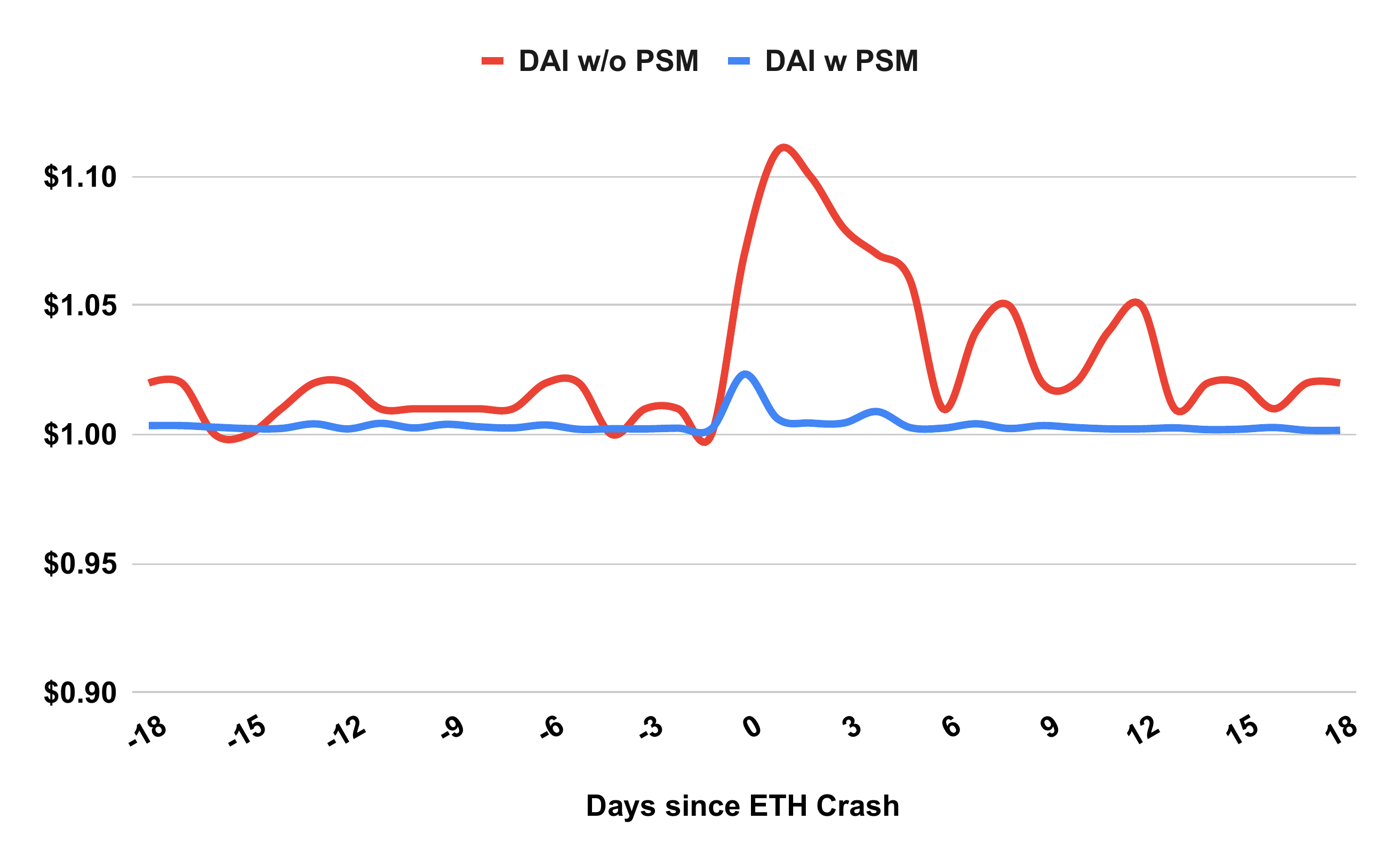}
		\caption{}\label{fig:dai-spiral-vs-psm}
	\end{subfigure}
	\begin{subfigure}[b]{0.49\textwidth}
		\centering
		\includegraphics[width=\textwidth]{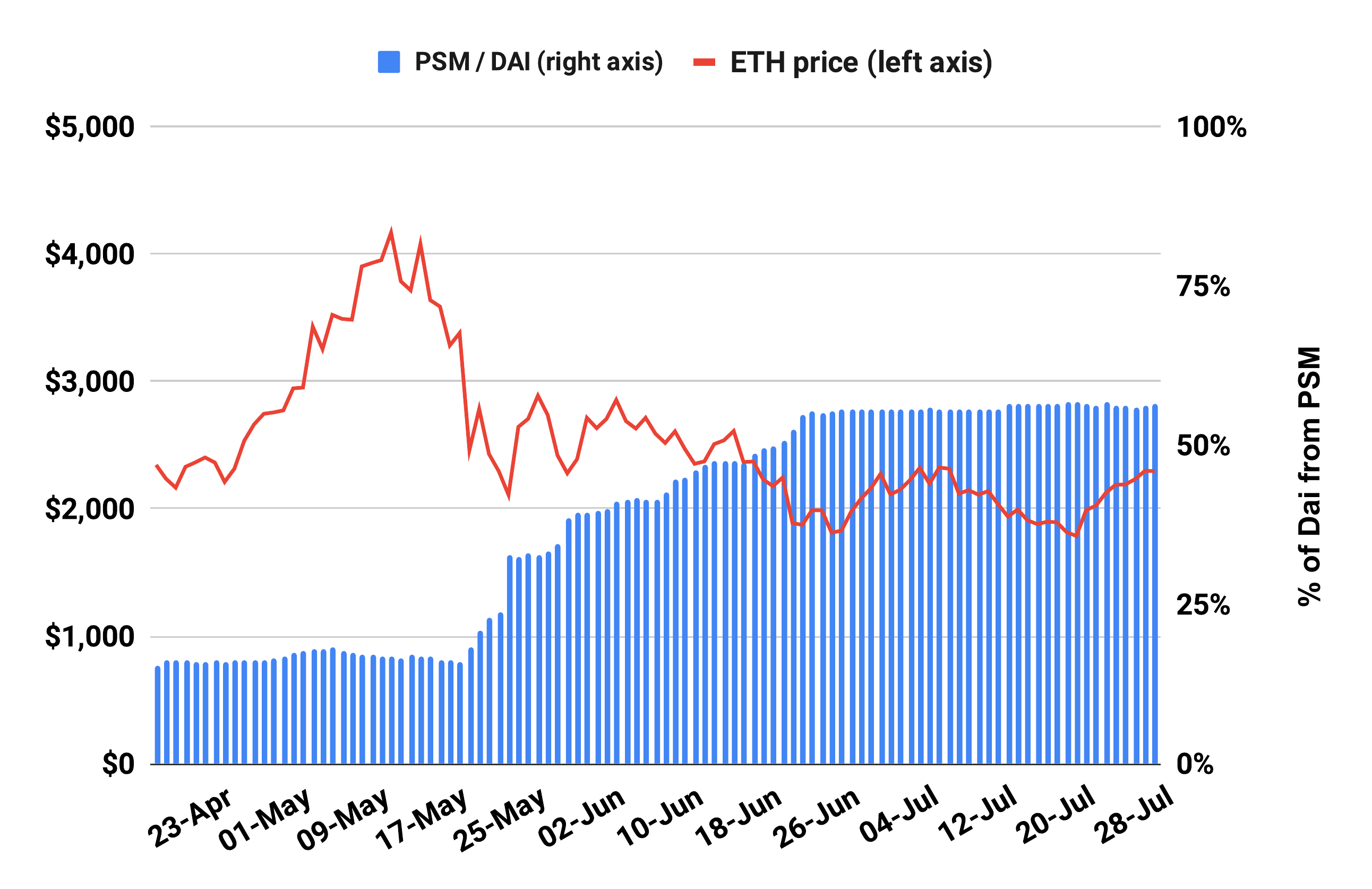}
		\caption{}\label{fig:dai-psm}
	\end{subfigure}
	\caption{Effects of the Dai PSM. (a) Dai price for days $t$ since major ETH shocks w/ and w/o the PSM, (b) The portion of Dai issued through the PSM has grown $>3\times$ since the May 2021 ETH shock.}\label{fig:dai-study}
\end{figure*}

\paragraph{Algorithmic Stablecoins}
The problems with Dai, including its most recent USDC centralization issue, has also motivated a wave of algorithmic stablecoins, which aim to keep the stablecoin supply in line with demand algorithmically. 
Mostly, these designs have either no or dubious degrees of asset backing. These designs have almost universally experienced depegging events, as depicted in Figure~\ref{fig:algo-montage} due to susceptibility to downwards spirals arising from the lack of asset backing and ad hoc primary market structure when under-reserved.

\begin{figure}
	\centering
	\includegraphics[width=0.8\textwidth]{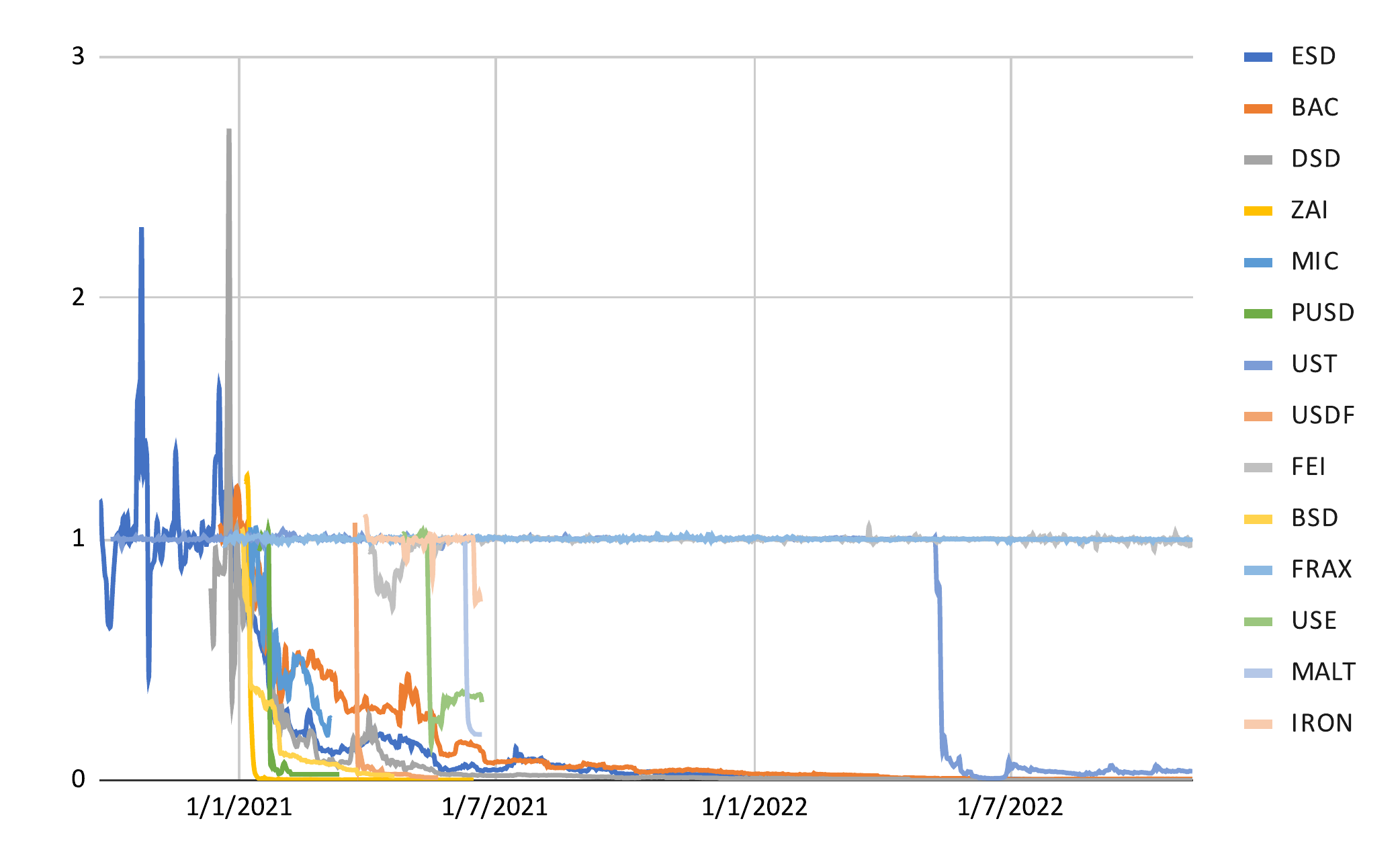}
	\caption{The recent churn of algorithmic stablecoins.}\label{fig:algo-montage}
\end{figure}

Algorithmic stablecoins are best understood in the context of currency peg models, such as \cite{morris1998unique}. In a simplified sense, these systems are backed by two sources of value: (i) asset backing in a currency reserve and (ii) economic usage, an intangible value that represents the demand to hold the currency as it unlocks access to an underlying economy.
Supposing these two values are together great enough, a currency peg is maintainable; otherwise, it is susceptible to breaking.
A peg break can also be triggered by a speculative attack that is profitable for the attacker, akin to the attack on the British pound on Black Wednesday.

Algorithmic stablecoins have encountered several fundamental problems, which contribute challenges to primary market design. Many are under-reserved by design while having no native economic usage, leading to many observed depeggings through downwards spirals, often exhausting the asset backing.
Further, the composition of reserve assets that can be held on-chain are inherently risky. 
In some cases, these assets are non-existent (e.g., Basis). In seigniorage shares--style designs (e.g., Terra and Iron), the backing is effectively the value of ``equity shares'', which have an endogenous/circular price with the expected growth of the system and can spiral to zero \cite{klages2020stablecoins}.
A better equipped limit case is a stablecoin backed fully by a portfolio of exogenous, but risky and possibly illiquid, assets, which could enter an under-reserve state if assets experience shocks.
Any design that could experience under-reserved states requires a good policy for how the protocol applies reserve assets to maintain liquidity near the peg in sustainable ways, factoring in the state of reserve backing.
This challenge effectively becomes the problem of designing a primary market for the stablecoin.

\subsection{This Paper}

In this paper, we study the rigorous design of stablecoin primary markets with these desirable properties.
In particular, a well-designed curve will be able to adapt shape/pricing autonomously to achieve these properties.
Such a formulation will require minimal intervention by governance, further limiting risks from governance extractable value \cite{lee2021gov}.
We make the following contributions.
\begin{itemize}
	\item We introduce our analysis framework of the \emph{redemption curve} of a stablecoin and conduct a case study of different existing P-AMM designs (Section~\ref{sec:case-studies}).
	
	\item We formulate the P-AMM desiderata (Section~\ref{sec:desiderata}).
	
	\item We design a new P-AMM redemption curve implicitly (Section~\ref{sec:design}). The design is parameterized by a virtual \emph{anchor point} that captures system health and redemption pressure. We specify the curve explicitly as a function of the anchor point (Section~\ref{sec:calc-params}) and we show that the anchor point is uniquely determined by the current system state (Section~\ref{sec:reconstruction-uniqueness}), implying that the overall design is well-defined.
	We also establish that the shape of the redemption curve satisfies several of our desiderata directly.
	Furthermore, we formulate a simpler redemption curve that satisfies many, but not all, desiderata (Section~\ref{sec:discrete-decay} and Appendix~\ref{apx:discrete-decay}).
	
	\item We show that our P-AMM redemption curve is path independent and we prove several path deficiency properties in an extended setting with trading fees and minting. In particular, we show that system health weakly improves relative to the path independent setting along any trading curve and there is no incentive to strategically subdivide redemptions (Section~\ref{sec:path-properties}).
	
	\item Finally we show that our P-AMM can be implemented efficiently on-chain. Specifically, the implementation only requires a constant number of basic arithmetic operations and at most two square roots to evaluate. The implementation makes use of additional structural results  (Section~\ref{sec:implementation}).
\end{itemize}

The result of this paper is an autonomously adapting P-AMM that satisfies the desired properties throughout the possible state space. The P-AMM formulation contains a few select hyperparameters, which can in principle be tuned by governance; however, the desired properties stand over the entire parameter space. In particular, if parameters are tuned, it does not need to happen on-the-fly, thus still minimizing reliance on governance intervention.

Note that the stablecoin peg target is implicitly \$1. However, the mechanics remain fully functional under any arbitrary target within a block. Through adapting this implicit parameter, the system could also implement much more arbitrary monetary policy while retaining desirable properties.

We make two important, but not contentious, assumptions in this work.
First, we exclude endogenous/circularly priced collateral by assuming that reserve assets are exogenously priced.
Second, we assume that the system has an accurate oracle that provides the price of reserve assets in USD. The need for an oracle is inescapable in a stablecoin that pegs to outside assets, so this is not an unusual assumption. Since oracles can provide manipulation surfaces (see, e.g., \cite{werner2021sok}), it is important to incorporate other protective mechanisms; however, these are separate from the P-AMM mechanism itself.


\subsection{Primary Markets: Related Work}

Our work is most closely related to currency peg models in international economics (e.g., \cite{morris1998unique,guimaraes2007risk}) as well as models of pegged money market mutual funds (e.g., \cite{parlatore2016fragility}). Although these types of models have been discussed and adapted recently in the context of stablecoins (e.g., \cite{routledge2021currency,li2020managing}), there is no prior work building a cryptocurrency mechanism that can adapt the lessons of good currency peg policies. Our work constructs such a mechanism from first principles that functions in a novel way as an autonomous primary market for stablecoin issuance.
For further academic background on stablecoins, we refer to \cite{klages2020stablecoins,bullmann2019search,pernice2019monetary} and references therein.

For an overview of theory work on AMMs, we refer to \cite{angeris2020improved,angeris2020does,capponi2021adoption}. Further background and references are available in \cite{werner2021sok}. Current AMMs resemble Uniswap \cite{adams2021uniswap} and Curve \cite{egorov2019stableswap}, in which liquidity providers add pairs of assets to a pool that quotes trading prices algorithmically depending on the state of the pool. Such AMMs are \emph{secondary markets} in which assets that already exist are traded. In contrast, we develop a new type of AMM that plays the role of a \emph{primary market}, in which assets are minted and redeemed against the protocol itself. No existing work analyzes such constructions for stablecoins.

A well-designed primary market for a stablecoin can be interpreted as an autonomous version of open market operations, comparing with how central banks interact with markets to shape monetary policy.
See Appendix~\ref{apx:monetary-theory} for further discussion on this comparison.
Designing an algorithmic primary market presents a challenge akin to designing a rules-based monetary policy (i.e., a \emph{Taylor Rule}; see, e.g., \cite{orphanides2010taylor}).
However, rather than setting nominal interest rates in a quasi-algorithmic way, as in normal central bank monetary policy, an algorithmic primary market is setting prices in a programmatic way.



\section{Primary Markets: Case Studies}\label{sec:case-studies}
In contrasting the shapes of primary market mechanisms, it will be useful to interpret these as automated market makers (AMMs), which price the exchange of assets algorithmically along a curve as a function of reserves and possibly other state variables.\footnote{
	For recent background on AMMs, see \cite{angeris2020improved}, which focuses on constant function market makers (CFMMs). However, primary markets will not fit this category in general.
}
Sometimes these are explicit AMM curves implemented by the protocol, while other times we must factor in the effects of several mechanisms to find an implicit AMM curve that describes the primary market.
This AMM structure will depend on the assets backing the system. In some cases, these assets are \emph{implicitly} backing the system, such as in seigniorage shares systems. Other times, they are a portfolio of assets more explicitly. We will refer to this asset backing as the \emph{reserve assets}.

\begin{figure*}
	\centering
	\begin{subfigure}{0.49\textwidth}
		\includegraphics[width=\textwidth]{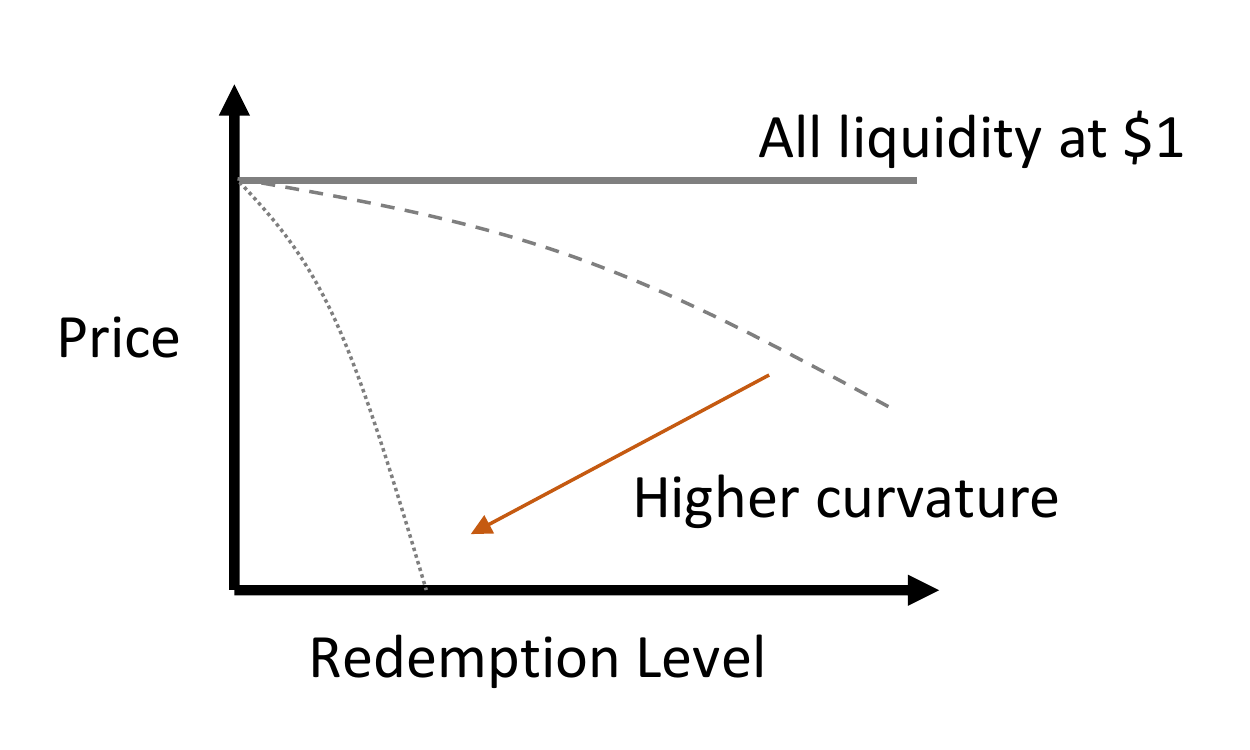}
		\caption{}\label{fig:stylized-curves1}
	\end{subfigure}
	\begin{subfigure}{0.49\textwidth}
		\includegraphics[width=\textwidth]{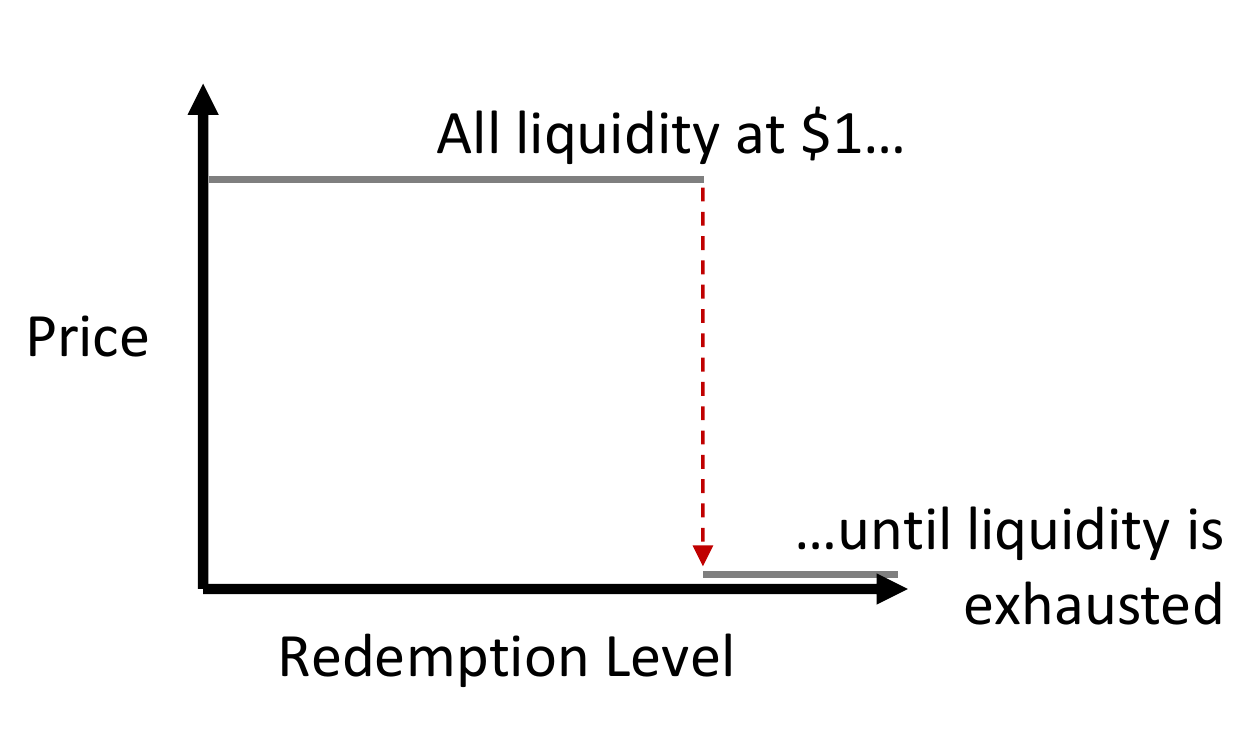}
		\caption{}\label{fig:stylized-curves2}
	\end{subfigure}
	\caption{Stylized primary market redemption curves.}\label{fig:stylized-curves}
\end{figure*}

A primary market can be separated into two curves (possibly coinciding): a minting curve and a redemption curve. These curves price the stablecoin in terms of underlying reserve assets as a function of system state (e.g., level of redemptions and reserve health). To illustrate, Figure~\ref{fig:stylized-curves1} shows some stylized possible redemption curves, plotted as a function of redemption level. An advanced redemption curve might shift the curvature of the 2-d curve as other variables in the state change (e.g., reserve health). Note that, should the reserve assets backing the system be exhausted, the redemption curve becomes flat at \$0, as depicted in Figure~\ref{fig:stylized-curves2}.

\paragraph{USDC/USDT}
Custodial stablecoins like USDC and USDT have flat redemption curves at $\sim\$1$. Note that this primary market has a large off-chain component, where dollars are actually exchanged for stablecoins.
Because of this off-chain component, users must trust the issuer to maintain the primary market. There are various reasons why this may not happen or may not be possible, including custodial and regulatory risks as well as potential loss on risky reserve assets. Note that Dai's PSM discussed above essentially borrows USDC's primary market by wrapping USDC, and so the PSM redemption curve is similar.


\paragraph{Basis}
In Basis-type designs, including Basis Cash and Empty Set Dollar, there is an implicit redemption curve for ``coupons'', which promise to be redeemable for a multiple of new stablecoins later. Often, these coupons also expire a certain time after creation.
However, since there is no asset backing of the system (all income from selling newly minted stablecoins is disbursed to various stakeholders), there is no redemption available for exogenous assets.
In the event that stablecoin demand and willingness to speculate on growth of the system deteriorates, the value of these coupons circularly goes to zero, and the redemption curve becomes flat at \$0.
As seen in Figure~\ref{fig:algo-montage}, these systems did not maintain a peg both because of this solely circular value structure and the lack of a supporting primary market mechanism.

\begin{figure*}
	\centering
	\begin{subfigure}[b]{0.49\textwidth}
		\includegraphics[width=\textwidth]{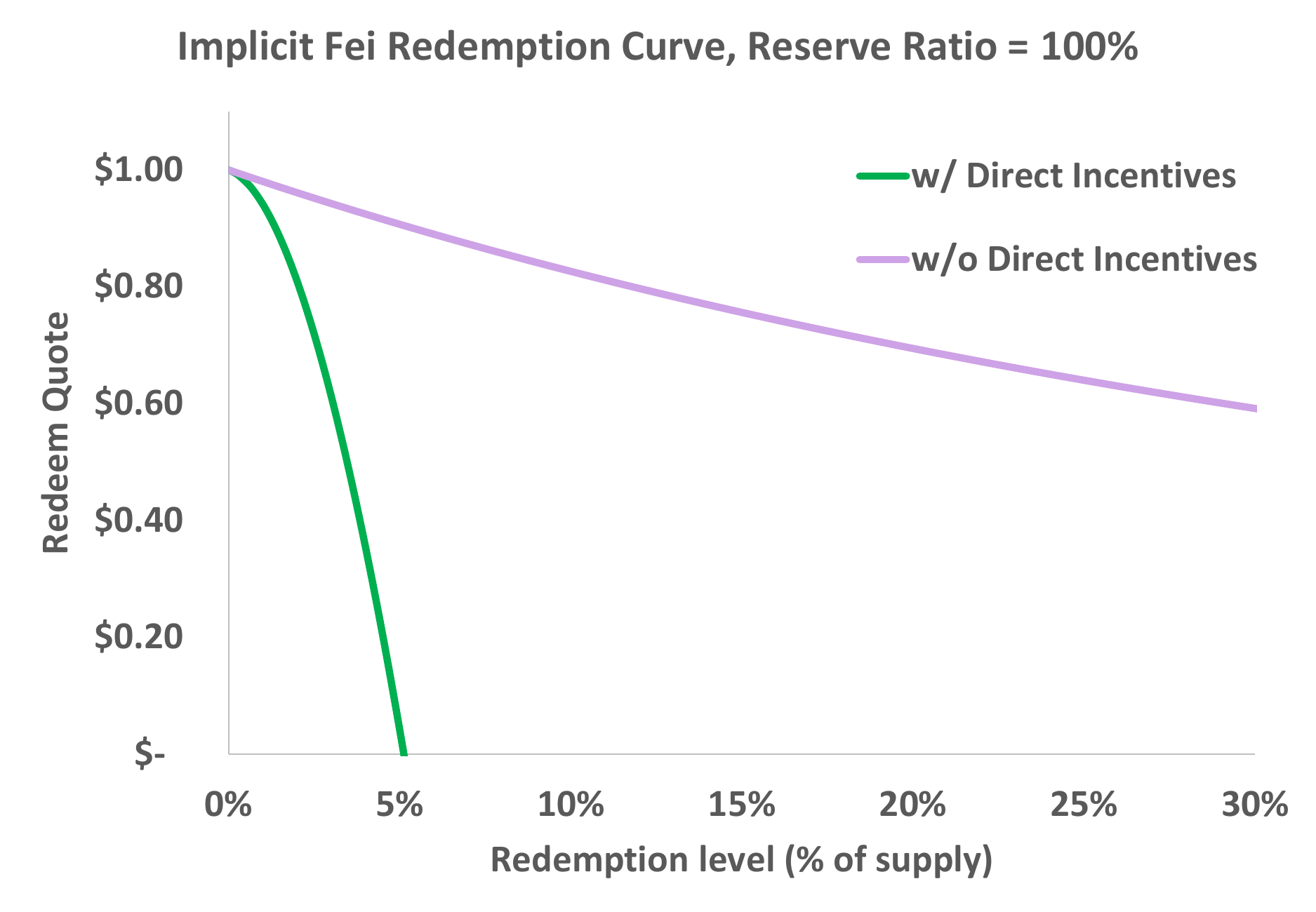}
		\caption{}\label{fig:fei-redeem-curve}
	\end{subfigure}
	\begin{subfigure}[b]{0.49\textwidth}
		\includegraphics[width=\textwidth]{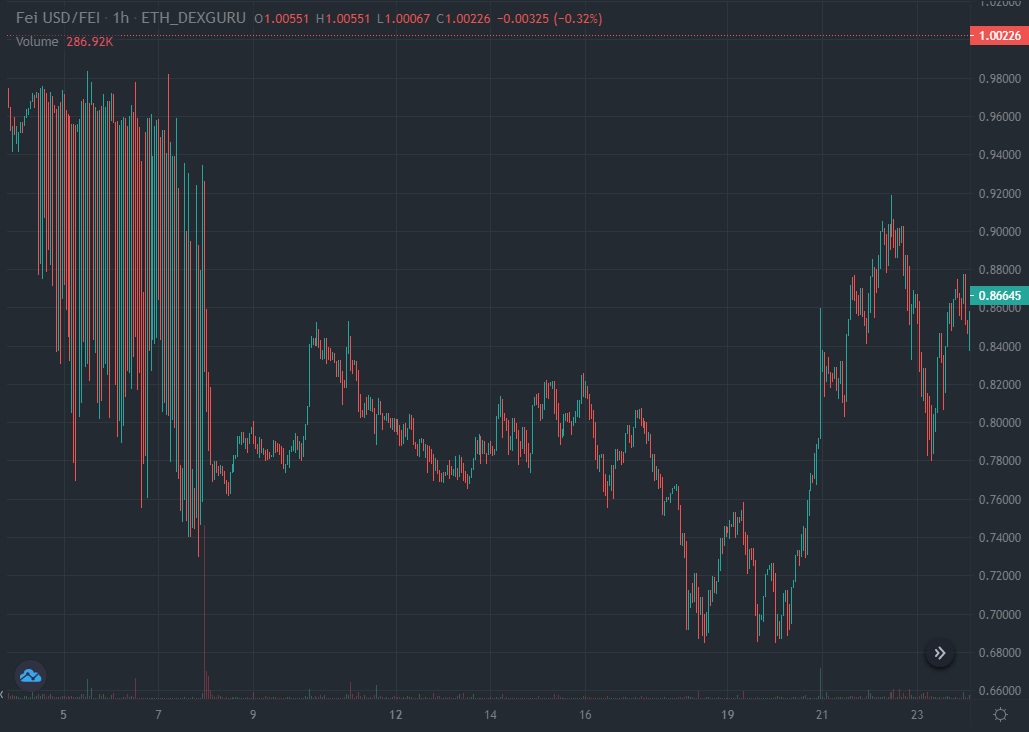}
		\caption{}\label{fig:fei-chart}
	\end{subfigure}
	\caption{Fei case study. (a) implicit redemption curve shape w/ and w/o direct incentives. (b) depegging following large redemptions at launch.}\label{fig:fei-study}
\end{figure*}

\paragraph{Fei}
The Fei stablecoin places reserve assets in a constant-product AMM pool. The action akin to redemption is to sell Fei for ETH in this pool. At launch, this pool was designed with a fee that grows quadratically in the amount of redemptions (``direct incentives''). This has the effect of distorting the implicit Fei redemption curve into a poor shape (see Figure~\ref{fig:fei-redeem-curve}) that essentially guarantees low primary market liquidity during a supply contraction, leaving Fei holders entrapped once secondary market liquidity dries up, even under good reserve health.
Following the Fei launch in Apricl~2021 and despite an effectively high reserve ratio, the implicit redemption curve was unable to handle the size of sought redemptions, leading the stablecoin to deviate erratically from the peg (see Figure~\ref{fig:fei-chart}).\footnote{
	The direct incentive mechanism was later removed, shifting the implicit redemption curve to the constant product curve visualized in Figure~\ref{fig:fei-redeem-curve}. Later, Fei governance chose to forgo this implicit redemption curve by offering direct redemptions at \$0.95.
}


\paragraph{UST / Seigniorage Shares}

The TerraUSD (UST) stablecoin was intended to be backed by a seigniorage shares-style token, LUNA. UST was redeemable for newly minted LUNA, inflating the supply of the latter. The redemption curve was intended to be flat at \$1. However, since LUNA was an endogenous/circularly priced asset, there was no guarantee that speculators' demand would be enough to support its UST backing.
In Feb. 2022, a partial backing of Bitcoin was added. Despite this, in May 2022, a downwards spiral was triggered, in which UST saw mass redemptions, the Bitcoin reserve was exhausted, and the value of LUNA went to zero, removing essentially all support from UST and allowing it to crash to mere cents on the dollar (see Figure~\ref{fig:ust-study}). This is a variation on the stylized redemption curve in Figure~\ref{fig:stylized-curves2}.

\begin{figure*}
	\centering
	\begin{subfigure}[b]{.9\textwidth}
		\includegraphics[width=\textwidth]{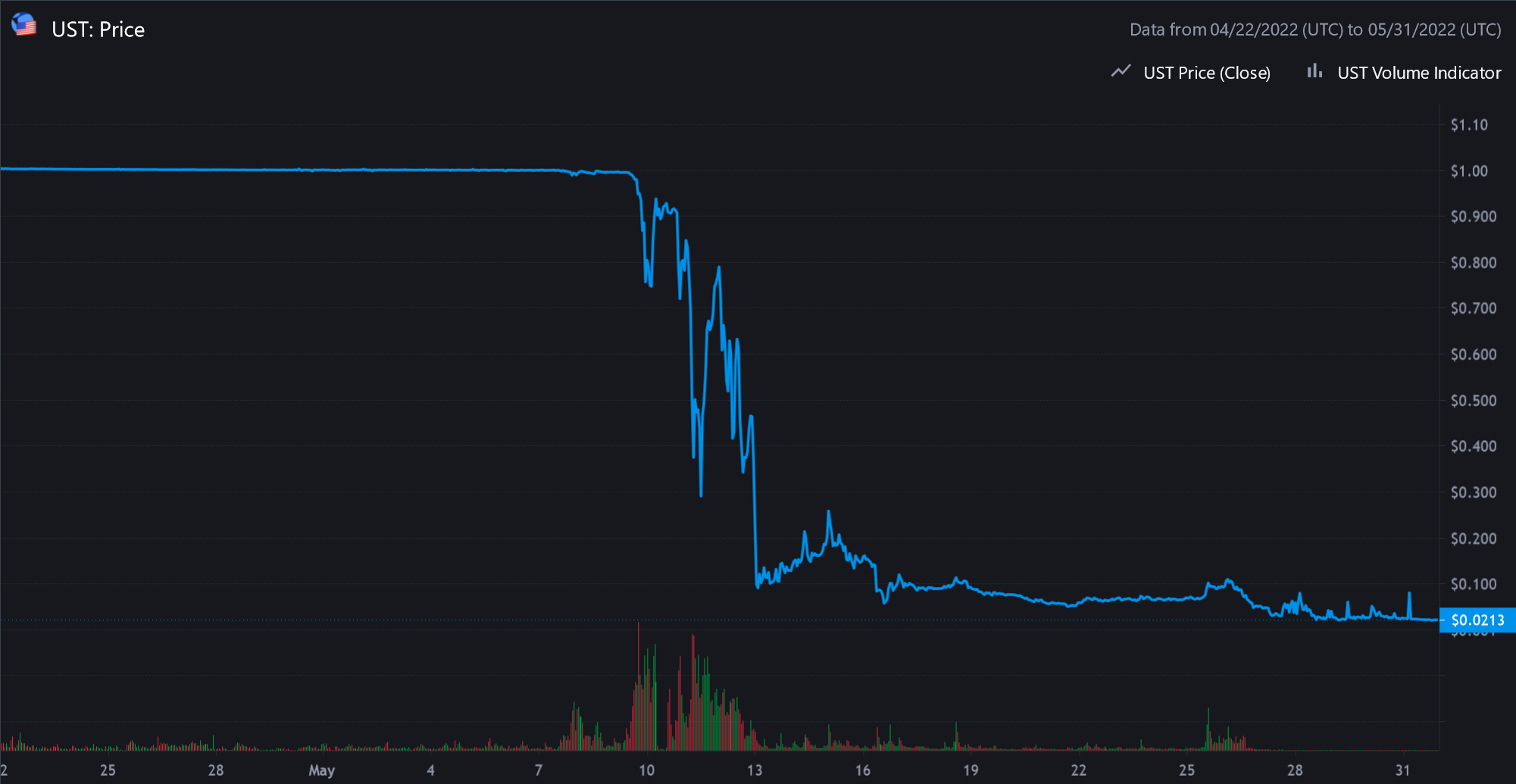}
		\caption{}\label{fig:UST}
	\end{subfigure}
	\begin{subfigure}[b]{.9\textwidth}
		\includegraphics[width=\textwidth]{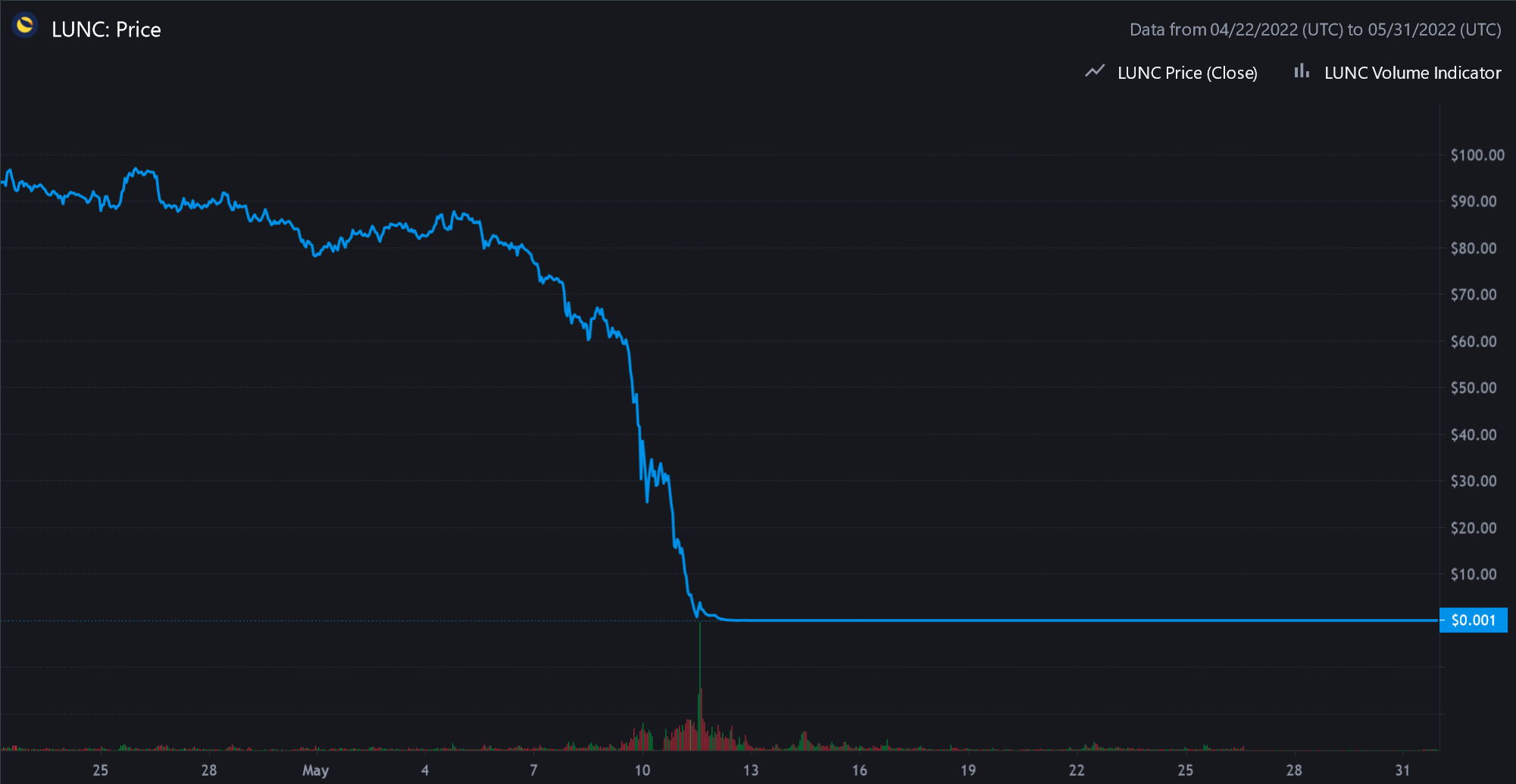}
		\caption{}\label{fig:LUNA}
	\end{subfigure}
	\caption{UST case study. (a) following a downwards spiral, UST price near \$0. (b) LUNA price going to zero at the same time.}\label{fig:ust-study}
\end{figure*}



\section{Desiderata for P-AMM Design}\label{sec:desiderata}

A major missing piece in the current space is the rigorous design of stablecoin primary markets developed from first principles.
As we've seen in the previous case studies, the primary market design plays a large role in the stability of these systems.
Up until this paper, it is not well-specified what the desirable properties are in designing these primary markets.
We first tackle this issue before continuing on the design of our
primary market automated market maker (P-AMM) redemption curves in the remainder of this paper.

We strive for several desiderata in the design of a primary market mechanism that has good properties from stability and usability perspectives. We separate these into properties of the P-AMM within a block and more general properties.

Within a block, the following properties are desirable.
\begin{enumerate}
	\item The relative collateralization (i.e., reserve ratio) of the protocol is guaranteed to stay above a lower bound (unless this is impossible because of an exogenous shock to reserve assets).
	
	\item The P-AMM normally maintains a region of open market operations in which the stablecoin price is $\sim \$1$.
	
	\item The worst-case P-AMM redemption rate is lower bounded.
	
	\item The P-AMM redemption curve is continuous and not too steep, unless this would violate the other desiderata.
	
	\item There is no incentive for redeemers to strategically subdivide redemptions.
\end{enumerate}

These desiderata are informed by results about currency peg models, in which the shape of optimal monetary policies involve peg support up to a threshold and reserve preservation past that.
The first property means that the loss for the protocol is bounded, as the reserve is never exhausted in the operation of the P-AMM unless all reserve asset prices exogenously go to zero.
The second property means that the stablecoin can support a possible equilibrium price at the dollar target.
The third property means that the loss for stablecoin holders who redeem is bounded.

The fourth property reduces risk for traders and incentives for potential speculative attacks. Supposing the peg is not maintainable, the system can't know what the equilibrium price of the stablecoin will be. It only knows upper and lower bounds on it, considering the target and the level of reserves. A continuous P-AMM enables the new equilibrium price to be found along the curve and helps smooth flows around it.

The fourth and fifth properties are also motivated by usability: it is simpler for traders to use when the pricing is continuous and predictable and strategy in optimal order reporting is simple and minimized.
As we will see, the fifth property is related to notions of path independence and path deficiency.

More generally, and across blocks, we desire the following properties.
\begin{enumerate}\setcounter{enumi}{5}
	\item Over many blocks, the reserve can only be exhausted over a long time period.
	
	\item Over many blocks, a de-pegged stablecoin has a path toward regaining peg.
	
	\item The P-AMM mechanism can be efficiently implemented and computed on-chain.
\end{enumerate}

The sixth property ensures that the stablecoin's asset backing persists well into the future (e.g., speculative attacks on the system cannot exhaust the reserve).
The seventh property means that there are credible reasons why speculators could decide to bet on re-pegging of the stablecoin during a crisis.
The eighth property ensures that the mechanism could be used under the severe computational restrictions of real world smart contract systems (e.g., is not prohibitive in terms of gas fees on Ethereum).


\section{Redemption Curve Design}\label{sec:design}

We now discuss our P-AMM design. We first introduce the dynamical system framework within which our design will take place.
We then describe a greatly simplified redemption curve design that satisfies all desiderata but one: its redemption curve is not continuous but has a steep discontinuity. Finally, we introduce our actual, more sophisticated continuous redemption curve design.

\subsection{Dynamical System, Anchor Point}

For the purpose of designing the P-AMM, we model the stablecoin system along three dimensions: an outstanding stablecoin (SC) supply $y$, a total reserve value backing the stablecoin $b$, and a level of stablecoin redemptions from the reserve $x$. These state variables are summarized in the following table.
\begin{center}
	\begin{tabular}{c|l}
		\textbf{State Variable}	&	\textbf{Definition} \\
		\hline
		$b$			&	total reserve value (in USD) \\
		$y$			&	outstanding SC supply \\
		$x$			&	level of SC redemptions
	\end{tabular}
\end{center}
We model the system as a dynamical system, in which $x$ is the independent variable that drives the system. Put another way, $x$ will represent the ``current point along the trading path'' of  the P-AMM. We will also be interested in the \emph{reserve ratio}
$r(x) := b(x)/y(x)$, which describes the reserve value per outstanding stablecoin.\footnote{
	We will write $b,y,x$ and $r$ referring to the ``current' state of these variables. In contrast, we will write $b(x)$ etc for the value at some point of the driving variable $x$ and based on other system parameters.
}
Observe that $y(x) = y_a - x$ and $b(x) = b_a - \int_0^x p(x') \mathrm{d}x'$, where $p(x')$ is the marginal redemption price offered by the P-AMM at redemption level $x'$.

The dynamical system models P-AMM trades that occur within a single block. At the beginning of the block, we will have initial conditions $(x_0, b_0, y_0)$. Here, $x_0$ represents a measure of redemption history in previous blocks. Net redemptions within the modeled block will increase $x$ from $x_0$. The final P-AMM will evolve over many blocks using this same intra-block model; however, $x_0$ at the start of each block will be computed as an exponentially time-discounted sum over all past SC redemptions in previous blocks. For our analysis in this paper, we restrict ourselves to the context of a single block, in which $x_0$ is a fixed initial condition. The initial conditions are summarized in the following table.
\begin{center}
	\begin{tabular}{c|l}
		\textbf{Init. Condition}	&	\textbf{Definition} \\
		\hline
		$x_0$			&	level of SC redemptions at block start \\
		$b_0$			&	reserve value at block start ($b_0=b(x_0)$) \\
		$y_0$			&	SC supply at block start ($y_0=y(x_0)$)
	\end{tabular}
\end{center}

Note that we will generlaly not have $x_0 = 0$ in practice. However, it will be useful to reference a fictitious initial condition that would describe a starting point of 0. We call this the \emph{anchor point}, which is formally the triplet
$$(0, b_a, y_a),$$
where $b_a = b(0)$ and $y_a = y(0)$.
Many times, we will be interested in the reserve ratio at the anchor point $r_a = b_a/y_a$.
Figure~\ref{fig:rr_curves} visualizes what the reserve ratio curves will look like as a function of $x$ for various values of $r_a$. As we will see, each curve will have a unique anchor point $r_a$, which corresponds to the starting point of the curve.

\begin{figure}
	\centering
	\includegraphics[width=9cm]{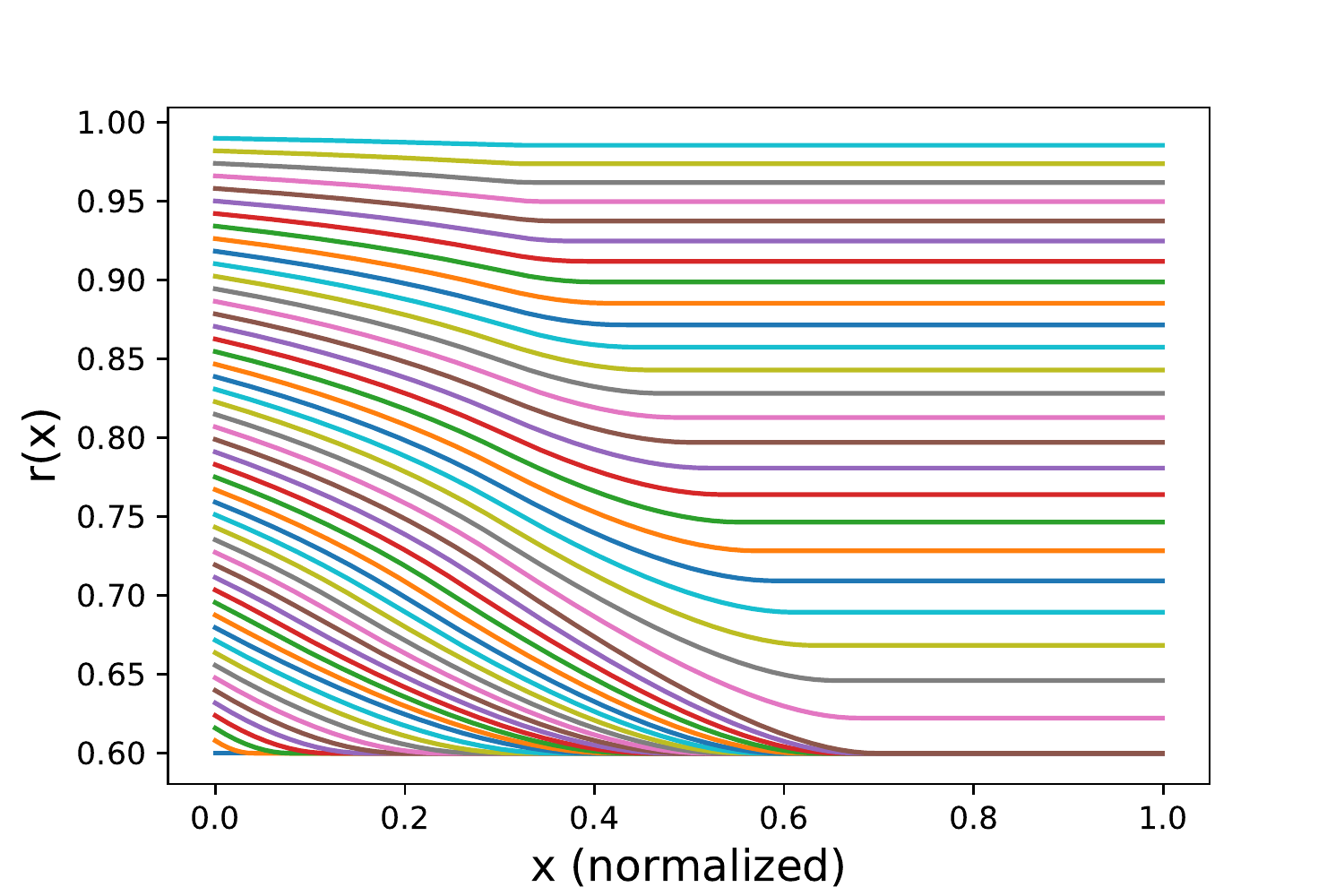}
	\caption{Reserve ratio curves as a function of $x$ for different values of $r_a$ (starting points) in the “normalized” case where $y_a=1$.}\label{fig:rr_curves}
\end{figure}

The pricing curve $p$ is, in general, a function of $b_a$ and $y_a$.
We will see later that the anchor point can be expressed in terms of the current state alone. With this in mind, it will be analytically useful to define the evolution of the dynamical system in terms of the current state $(x,b,y)$ directly.
Toward this, we will construct an abstract pricing function $\rho(x,b,y)$ that we will show is equivalent to the function $p$ in case $b/y < 1$; in the (trivial) case where $b/y \ge 1$, we set $\rho(x, b, y)=1$.
The dynamical system is then described by the following system of ordinary differential equations:
\begin{equation}\label{eq:odes}
	\begin{aligned}
		\frac{\de b(x)}{\de x} &= -\rho\big(x, b(x), y(x)\big) \\
		\frac{\de y(x)}{\de x} &= -1.
	\end{aligned}
\end{equation}

Our P-AMM, conceptually speaking, solves the initial value problem defined by $(x_0, b_0, y_0)$ and the system \eqref{eq:odes}. We then transition to the new state $(x_0 + X, b(x_0 + X), y(x_0 + X)) = (x_0 + X, b(x_0 + X), y_0 - X)$ and the redemption amount (which is paid out to the redeemer) is $b(x_0 + X) - b_0$.

\subsection{Simplified, Discontinuous Redemption Curve Design}
\label{sec:discrete-decay}

We now discus a simplified P-AMM redemption curve as a pedagogical starting point. This simplified curve has discrete price decay (i.e., the curve is discontinuous: a portion of the curve is at \$1 and another portion maintains a sustainable reserve ratio) and is very simple to reason about. This will fullfil many desirable properties except for continuity. We provide a semi-formal treatment here; see Appendix~\ref{apx:discrete-decay} for the formal details.

To define the redemption curve, we assume that the anchor point $(b_a, y_a)$ is fixed and given. We will discuss in Sections \ref{sec:reconstruction-uniqueness} and \ref{sec:implementation} how the anchor point can be chosen based on the current state of the system. Assume WLOG $b_a < y_a$ (i.e., $r_a < 1$) since otherwise, we always use the trivial redemption curve $p(x)=1$. For simplicity of analysis, we will for now disregard any trading fees that may be added to the P-AMM.

\begin{figure}
	\centering
	\includegraphics[width=0.6\textwidth]{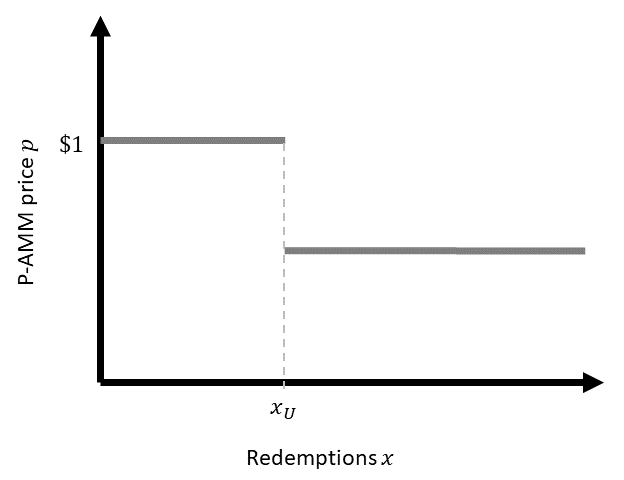}
	\caption{Simplified redemption curve with discrete price decay.}
	\label{fig:pamm_discrete_decay}
\end{figure}

Our simplified redemption curve is illustrated in Figure~\ref{fig:pamm_discrete_decay} and defined by the following function:
\[
	p(x) :=
	\begin{cases}
		1 &\text{if } x \le x_U
		\\
		r_U := r(x_U) &\text{if } x > x_U,
	\end{cases}
\]
Observe that $r(x_U)$ can be computed from only $b_a$, $y_a$, and $x_U$, so that $p(x)$ is a well-defined function.

This redemption curve maintains a redemption price of 1 up until some redemption pressure value $x_U \in [0, y_a]$ and then drops the redemption price to the reserve ratio below that value. From this point onwards, the reserve ratio will stay constant (this is easy to see) and in particular, we will be able to provide redemptions at price $r_U$, potentially until the whole SC supply has been redeemed.

The parameter $x_U$ is a \emph{dynamic parameter}. Concretely, this means that, given a specific value of this parameter, the shape of the redemption curve is fixed. However, the dynamic parameters itself depends on the state of the system. More in detail, it is a function of the anchor point $(b_a, y_a)$.

The choice of the parameter $x_U$ introduces a trade-off: when it is high, we maintain a price of 1 for a long time (in terms of redemption pressure). This allows the stablecoin to maintain the peg for longer, but the eventual redemption price $r_U$ will be low. Lower $x_U$ values enable a higher eventual redemption price but weaken the peg more quickly. To weigh this trade-off, we assume that two \emph{static parameters} are set externally (e.g., by the protocol's governance system):
\begin{center}
	\begin{tabular}{c|l}
		\textbf{Parameter}	&	\textbf{Definition} \\
		\hline
		$\bar x_U$			&	$\in [0,\infty]$ upper bound on $x_U$ ($x_U \leq \bar x_U$) \\
		$\bar\theta$		&	$\in [0,1]$ target reserve ratio floor
	\end{tabular}
\end{center}

$x_U$ is then chosen such that the following conditions hold:
\begin{enumerate}
	\item At any point $x$ of the redemption curve, $r(x) \ge \bar\theta$, if this is possible, and $r(x)$ maximal otherwise. Equivalently, we want $r_U \ge \bar\theta$ if possible and $r_U$ maximal otherwise.
	\item $x_U \le \bar x_U$.
	\item Among the  $x_U$ values that satisfy these conditions, $x_U$ is chosen maximal.
\end{enumerate}
One can show that this is the case iff
\[
	x_U = \min\left(\bar x_U,\,\max\left(0,\, y_a \frac {r_a - \btheta} {1-\btheta}\right)\right)
\]

The static parameter $\bar x_U$ is optional and can be set to $\infty$ to turn off this feature. In this case, the choice of $x_U$ will always create a situation where $r_U=\min(r_a,\bar\theta)$, i.e., we always choose the lowest acceptable eventual reserve ratio to maximize $x_U$. Setting $\bar x_U < \infty$ makes the trade-off less extreme and allows $r_U > \bar\theta$ when $b_a$ is large.

It is easy to see that the redemption curve design from this section satisfies our desiderata 1.--3. from Section~\ref{sec:desiderata}, but it obviously violates desideratum~4 (continuity). This is undesirable for traders and it creates a potential target for speculative attacks on the stablecoin price. This is why in the following, we discuss a continuous redemption curve design.

\subsection{Continuous Redemption Curve}

\begin{figure}
	\centering
	\includegraphics[width=0.6\textwidth]{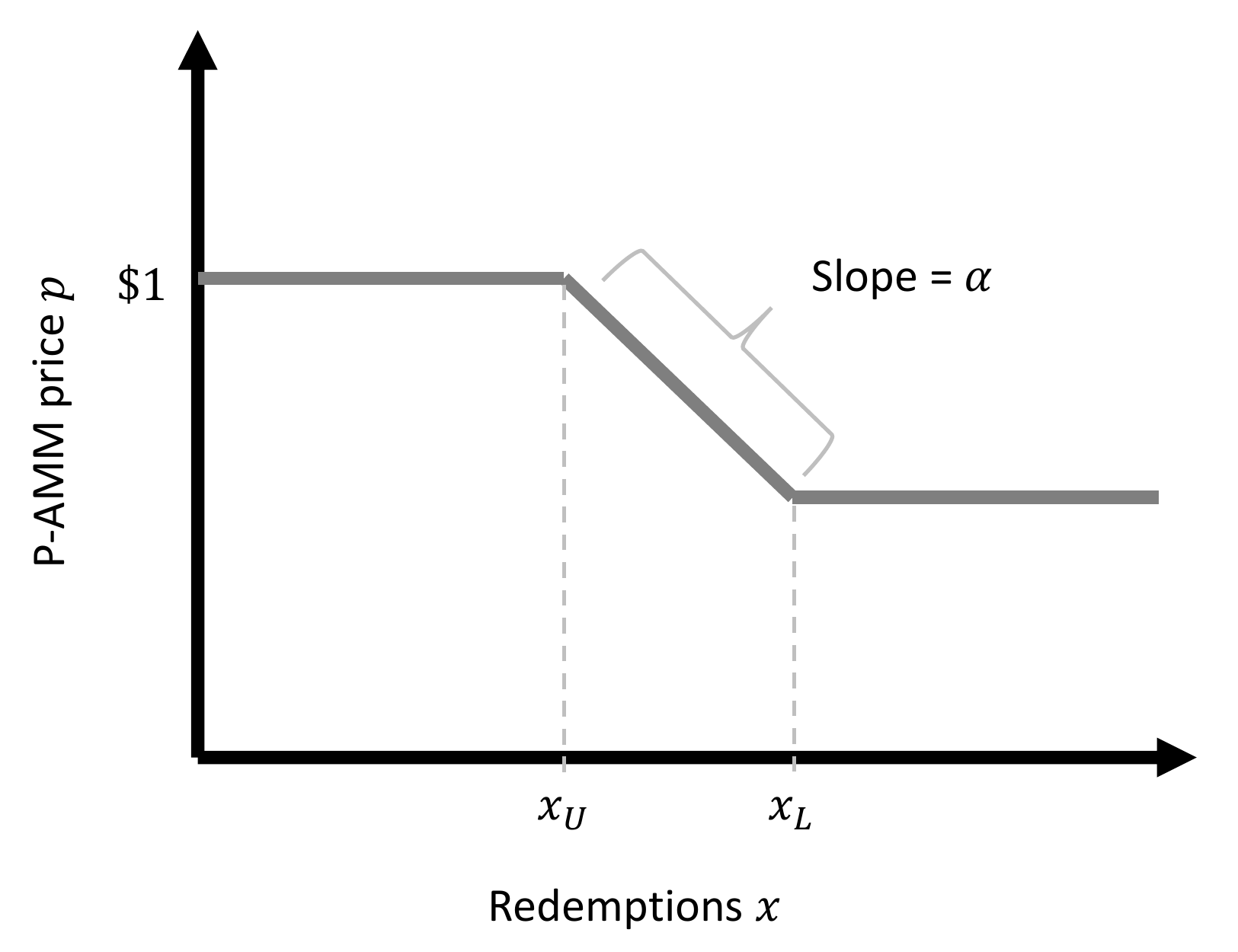}
	\caption{Our redemption curve design with piecewise-linear price decay.}\label{fig:pamm_linear_decay}
\end{figure}

To maintain continuity, we alter the design from the previous sub-section to receive a three-piece-wise curve design. This requires some more sophisticated machinery. Again, we assume in this section that an anchor point $(b_a, y_a)$ is given.

We now parameterize the curve in terms of \emph{three} dynamic parameters.
Conceptually, the three dynamic parameters describe three regions of the P-AMM pricing curve $p$ as a function of $x$, as visualized in Figure~\ref{fig:pamm_linear_decay}.
For a small redemption level $x$, the P-AMM provides redemptions at \$1 up until a redemption level $x_U$ is reached. In a second region, the P-AMM pricing decays linearly with slope $\alpha$ as more redemptions occur up until a redemption level $x_L$. Then, in a third region, the P-AMM provides redemptions at the new reserve ratio (reserve value per outstanding stablecoin), which is fully sustainable for the entire remaining stablecoin supply. The dynamic parameters are summarized in the following table.
\begin{center}
	\begin{tabular}{c|l}
		\textbf{Dynamic Params}	&	\textbf{Definition} \\
		\hline
		$\alpha$		&	decay slope of redemption curve \\
		$x_U$			&	point at which redemption deviates from \$1 \\
		$x_L$			&	point at which redemption stops decaying \\
		& at new reserve ratio
	\end{tabular}
\end{center}
The resulting P-AMM pricing curve, as a function of $x$ and parameterized by the anchor point, is, in the case that $b_a < y_a$,
\begin{equation}\label{eq:p_curve}
p(x; b_a, y_a) =
\begin{cases}
	1, & x\leq x_U \\
	1 - \alpha(x-x_U), & x_U \leq x \leq x_L \\
	r_L, & x\geq x_L
\end{cases},
\end{equation}
where $r_L = r(x_L)$. In the other case that $b_a \geq y_a$, we will simply set $p(x) = 1$. Notice that the dynamic parameters $\alpha, x_U, x_L$ are, in fact, functions of the anchor point $(b_a, y_a)$. We discuss the rules by which the dynamic parameters are chosen in Section~\ref{sec:calc-params}.

We define three static parameters that constrain the shape of the curve and inform the choice of the dynamic parameters. These are the only parameters that are set externally. We define a lower bound $\bar\alpha$ to the linear decay slope $\alpha$, an upper bound $\bar x_U$ to $x_U$, and a target reserve ratio floor $\bar\theta$. The target reserve ratio floor is the minimum reserve ratio that the P-AMM curve can decay to, and it is the value of the reserve ratio in the third region. In case that the initial reserve ratio $b_0/y_0$ is smaller than $\btheta$, the P-AMM only offers redemptions at the initial reserve ratio (i.e., $x_L = 0$). These parameters are summarized in the following table.
\begin{center}
	\begin{tabular}{c|l}
		\textbf{Parameter}	&	\textbf{Definition} \\
		\hline
		$\bar\alpha$		&	$\in (0,\infty)$ lower bound on decay slope ($\alpha \geq \bar \alpha$) \\
		$\bar x_U$			&	$\in [0,\infty]$ upper bound on $x_U$ ($x_U \leq \bar x_U$) \\
		$\bar\theta$		&	$\in [0,1]$ target reserve ratio floor
	\end{tabular}
\end{center}

Note that an implicit fourth static parameter is the target for the stablecoin price, thus far assumed to be \$1. In general, this could take different values (and could be changed over time by governance) to adjust monetary policy. The underlying mechanics and our essential results would stay the same.
In this paper, we are agnostic to how the static parameters were set and consider them fixed, but arbitrary.
One particularly useful choice is to set them proportional to the anchored outstanding SC supply $y_a$, which we believe would minimize the need for governance interaction as the outstanding amount changes. We explore this option further in Section~\ref{sec:implementation}.

\section{Calculation of Dynamic Parameters for a given anchor point}
\label{sec:calc-params}

We first establish how to calculate the dynamic parameters of the redemption curve when the anchor point $(b_a, y_a)$ is given.
Recall that this anchor point is a mathematical modelling tool and is not known as a real quantity at any point in time. Instead, it will be “reconstructed” from the current market state (see Section~\ref{sec:reconstruction-uniqueness} below).
Assume for non-triviality that $1 > b_a/y_a > \btheta$.
Then the dynamic parameters are chosen as follows.

\begin{itemize}
	\item $x_L$ is chosen such that $p(x_L) = r(x_L)$, i.e., the computed redemption price (in the linear segment) equals the reserve ratio at this point. If $x_L$ is chosen like this, all remaining stablecoin units could be redeemed at this price without running out of reserves. $x_L$ is a function of $(b_a, y_a)$, $x_U$, and $\alpha$. Note that $r(x_L)$ is the lowest reserve ratio on the curve.\footnote{%
		One can show that $r(x_L)$ increases when $\alpha$ is increased (i.e., the reserve is protected when redemption prices decay more steeply) and when $x_U$ is decreased (i.e., the reserve is protected when the redemption price starts to decay earlier).
	}
	\item $\alpha$ and $x_U$ are chosen to guarantee that $r(x_L) \ge \btheta$ while minimizing price decay.
	\item Among the $(\alpha, x_U)$ pairs satisfying the previous condition, we prioritize making $\alpha$ as small as possible (but at least $\balpha$), i.e., making price decay as mild as possible in the linear part. If $\alpha=\balpha$, we then choose $x_U$ as large as possible (but at most $\bzp$) while always ensuring $r(x_L) \ge \btheta$.
\end{itemize}

This leads to the following equations (see Appendix~\ref{apx:calc-params} for details; let $\Delta_a := y_a - b_a$):
\begin{align*}
	\alpha &= \max(\balpha, \halpha)\ \text{where}\\
	&\phantom{={}}\halpha =
\begin{cases}
	\halpha_{H}:=2\frac{1-r_{a}}{y_{a}}, & r_{a}\ge\frac{1+\btheta}{2}\\
	\halpha_{L}:=\frac{1}{2}\frac{\itheta^{2}}{b_{a}-\btheta y_{a}}, & r_{a}\le\frac{1+\btheta}{2}
\end{cases}\\[5pt]
	x_U &= \min(\bzp, \hzp)\ \text{where}\\
	&\phantom{={}}\hzp =
\begin{cases}
	\hzph:=y_{a}-\sqrt{2\frac{\Delta_{a}}{\alpha}}, & \alpha\Delta_{a}\le\frac{1}{2}\itheta^{2}\\
	\hzpl:=y_{a}-\frac{\Delta_{a}}{\itheta}-\frac{1}{2\alpha}\itheta & \alpha\Delta_{a}\ge\frac{1}{2}\itheta^{2}.
\end{cases}\\[5pt]
	x_{L}	&= y_{a}-\sqrt{(y_{a}-x_{U})^{2}-\frac{2}{\alpha}\Delta_a} \\
	r_{L}	&= 1-\alpha(x_{L}-x_{U})
\end{align*}

\section{Uniqueness of Reconstruction}\label{sec:reconstruction-uniqueness}

We now move on to show that we can construct the anchor point $(b_a, y_a)$ uniquely from the current state $(x, b, y)$, which proves that our dynamical system \eqref{eq:odes} is in fact well-defined. Recall that, in that dynamical system, $\rho$, which is a function solely of the current state, `reconstructs' $p$, which is also a function of the anchor point $(b_a, y_a)$.

Our proof in this section is indirect: we show that that each state $(x, b, y)$ can only have arisen from one specific anchor point (see Section~\ref{sec:implementation} for an explicit reconstruction technique).
Uniqueness of $y_a$ is trivial because $y = y_a - x$, so $y_a = y + x$; it remains to show that $b_a$ is unique.
We show that this is the case because, for fixed $x$, the reserve value
\[
b(x; b_a, y_a) = \int_0^x p(x'; b_a, y_a) \;\de x'
\]
is a strictly monotonic function of $b_a$, whenever that state is non-trivial.\footnote{%
	Equivalently, the reserve ratio $r(x)$ is strictly monotonic in $b_a$, since it is a strictly monotonic transformation of $b(x)$ for fixed $x$.
	In this section, we consider the static parameters $\btheta$, $\balpha$, $\bzp$ fixed while the dynamic parameters $\alpha$ and $x_U$ take on values dependent on the anchor point as discussed in Section~\ref{sec:calc-params}.
}

\begin{theorem}
	\label{thm:mon-b0} Fix values $y_{a}$, $\btheta$, $\balpha$,
	$\bzp$ and fix some $x\in[0,y_{a})$. Assume that $x_{U}$ and $\alpha$
	are chosen dependent on $b_{a}$ according to the rule described above. Let $b_{a}$ and $b_{a}'$ be such that $b_a < b_a'$ and
	$1 > r(x;b_{a}),\,r(x;b_{a}')>\btheta$ (and in particular $1 > b_{a}/y_{a},\,b_{a}'/y_{a}>\btheta)$.
	Then $b(x;b_{a})<b(x;b_{a}')$.
\end{theorem}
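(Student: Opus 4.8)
The plan is to reduce the statement to strict monotonicity of the single function $f(b_a) := b(x;b_a) = b_a - \int_0^x p(x';b_a)\,\de x'$ on the set of admissible $b_a$. Indeed $y_a = x+y$ is already pinned down by the state, so once $b_a \mapsto b(x;b_a)$ is known to be strictly increasing, the anchor point is recovered uniquely and the theorem follows. Since the dynamic parameters $\alpha(b_a),\,x_U(b_a),\,x_L(b_a),\,r_L(b_a)$ of Section~\ref{sec:calc-params} are continuous in $b_a$ (the piecewise formulas agree at the breakpoints $r_a = \tfrac{1+\btheta}{2}$, $\alpha\Delta_a = \tfrac12\itheta^2$, $\halpha = \balpha$, $\hzp = \bzp$, which one checks directly), $f$ is continuous and piecewise $C^1$, so it suffices to show $f'(b_a) > 0$ wherever the derivative exists.

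Next I would record the closed form of $f$. The hypothesis $r(x;b_a) > \btheta$ forces $x < x_L(b_a)$ whenever $r_L(b_a) = \btheta$ (the reserve ratio equals $r_L$ exactly on $[x_L,\infty)$ and lies strictly above it before $x_L$); the only other possibility is $r_L(b_a) > \btheta$, which puts us in the regime $\alpha = \balpha$, $x_U = \bzp$ with both parameters constant in $b_a$, and there either $x < x_L$ (so $f'=1$) or $x \ge x_L$ and $f(b_a) = r_L(b_a)(y_a-x)$ with $r_L$ strictly increasing in $b_a$ — both immediate. So assume $x < x_L(b_a)$. Integrating the first two pieces of \eqref{eq:p_curve} over $[0,x]$ gives $f(b_a) = b_a - x + \tfrac12\,\alpha(b_a)\big(x - x_U(b_a)\big)_{+}^{2}$, hence on the region $x > x_U(b_a)$
\[ f'(b_a) = 1 + \tfrac12\,\alpha'(b_a)\,(x-x_U)^2 - \alpha(b_a)\,(x-x_U)\,x_U'(b_a). \]
Because $\alpha(b_a)$ is nonincreasing and $x_U(b_a)$ nondecreasing, the two correction terms carry the \emph{right} signs, so it remains to prove $\alpha(x-x_U)x_U' - \tfrac12\alpha'(x-x_U)^2 < 1$.

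This inequality is the heart of the argument, and I would verify it in the four cases dictated by the rule of Section~\ref{sec:calc-params}. If $\alpha = \halpha > \balpha$ then $x_U = 0$ (both $\hzph$ and $\hzpl$ evaluate to $0$ at $\alpha = \halpha$), so $x_U' = 0$ and the claim becomes $-\tfrac12\alpha' x^2 < 1$: for $\halpha = \halpha_H$ one has $\alpha' = -2/y_a^2$, giving $x^2/y_a^2 < 1$ from $x < y_a$; for $\halpha = \halpha_L$ one has $\alpha' = -\tfrac12\itheta^2/(b_a-\btheta y_a)^2$, giving the requirement $x < 2(b_a-\btheta y_a)/\itheta$, and a direct computation of $x_L$ in this regime (using $r_a \le \tfrac{1+\btheta}{2}$ to resolve a square root) shows $x_L = 2(b_a-\btheta y_a)/\itheta$, so $x < x_L$ — i.e.\ the theorem's hypothesis — closes it. If $\alpha = \balpha$ then $\alpha' = 0$; if moreover $x_U = \bzp$ then $x_U' = 0$ and there is nothing to prove, and otherwise $x_U = \hzp$ (so $r_L = \btheta$ and thus $x < x_L$): for $x_U = \hzph$ one gets $x_U' = 1/(y_a - x_U)$ (using $y_a - \hzph = \sqrt{2\Delta_a/\balpha}$), so the claim reads $(x-x_U)/(y_a-x_U) < 1$, again from $x < y_a$; for $x_U = \hzpl$ one gets $x_U' = 1/\itheta$, so the claim reads $x < x_U + \itheta/\balpha = y_a - \Delta_a/\itheta + \itheta/(2\balpha)$, and the formula for $x_L$ gives $x_L \le y_a - \Delta_a/\itheta + \itheta/(2\balpha)$, so $x < x_L$ closes it. In every case $f'(b_a) > 0$, and with the continuity of $f$ this gives strict monotonicity, hence uniqueness.

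The main obstacle is precisely this final case analysis: one must notice that $x_U$ collapses to $0$ exactly when the floor $\balpha$ is not binding, compute the closed forms of $x_L$ in the $\halpha_L$ and $\hzpl$ regimes, and see that the hypothesis $r(x;b_a) > \btheta$ (equivalently $x < x_L$) is exactly the strengthening of the trivial bound $x < y_a$ that is needed in those two regimes and nowhere else. A secondary, routine task is verifying continuity of the dynamic parameters across regime boundaries, so that pointwise positivity of $f'$ really does upgrade to global strict monotonicity.
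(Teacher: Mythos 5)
Your proposal is correct and follows essentially the same route as the paper's proof: it reduces the theorem to showing $\partial b(x;b_a)/\partial b_a>0$ regime by regime under the parameter-selection rule, disposes of the $x\ge x_L$ configurations using $r_L=\btheta$ (resp.\ constancy of the parameters in the $\alpha=\balpha$, $x_U=\bzp$ regime), and verifies the same four derivative inequalities in the $\halpha_H$, $\halpha_L$, $\hzph$, $\hzpl$ regimes that the paper computes. Two cosmetic points only: your intermediate claim that $r_L>\btheta$ forces $\alpha=\balpha$, $x_U=\bzp$ is not literally true (it can also occur in the regimes where $x_L=y_a$, which is precisely why those configurations never have $x\ge x_L$ and are harmless), and your stated $x_U'=1/(y_a-x_U)$ for $\hzph$ is off by a factor $1/\balpha$, though the final inequality $(x-x_U)/(y_a-x_U)<1$ you derive from it is the correct one.
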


The result is less immediate than it may seem at first. While it is easy to see that a lower $b_a$ leads to a lower value of $b(x)$ \emph{when leaving the parameters $\alpha,x_U$ fixed}, the situation we consider here is different. In particular, as we reduce $b_a$, the parameters $\alpha$ and $x_U$ will adjust according to propositions \ref{prop:alpha} and \ref{prop:zp} to ensure our desiderata.
A priori, it might be the case that, through this adjustment, a lower value of $b_a$ leads to a lower value of $b(x)$ at some point $x$, but to a \emph{higher} or the same value of $b(x')$ at some other point $x'$.
Theorem~\ref{thm:mon-b0} proves that this is not the case: the whole reserve value curve $b(\,\cdot\,)$ is strictly monotonic in the anchor reserve value $b_a$ at all non-trivial points.

We segment the space of $(x, b_a)$ pairs along all case distinctions we have made so far. Specifically, we distinguish cases along the following
dimensions. Note that these cases define closed regions and overlap at their boundaries and that within the intersection of any selection of these regions, the function $b(\,\cdot\,)$ is smooth.
Recall that $r_a = y_a/b_a$ is the reserve ratio at the anchor point.
\begin{itemize}
	\item \textbf{Case I--III} depending on the values of $\alpha$ and $x_{U}$.
	Specifically, let $\alpha=\max(\halpha,\balpha)$, write short $\hzp:=\hzp(\alpha)$,
	and let $x_{U}=\min(\hzp,\bzp)$, and define:
	\begin{description}[labelwidth=\widthof{\bfseries III}]
		\item[I] $\halpha\ge\balpha$ and $\hzp\ge\bzp$, so that $\alpha=\balpha$
		and $x_{U}=\bzp$.
		\item[II] $\halpha\ge\balpha$ and $\hzp\le\bzp$, so that $\alpha=\balpha$
		and $x_{U}<\bzp$ except in the equality case.
		\item[III] $\halpha\le\balpha$, so that $\alpha<\balpha$ except in
		the equality case and (in any case) $x_{U}=0$.
	\end{description}
	\item Within case II, \textbf{case II h}, where $\alpha\Delta_{a}\le\frac{1}{2}\itheta^{2}$ and thus $\hzp=\hzph$ and \textbf{case II l}, where $\alpha\Delta_{a}\ge\frac{1}{2}\itheta^{2}$ and thus $\hzp=\hzpl$.
	\item Within case III, \textbf{case III H}, where $r_{a}\ge\frac{1+\btheta}{2}$ and thus $\halpha=\halpha_{H}$
	and \textbf{case III L}, where $r_{a}\le\frac{1+\btheta}{2}$ and thus
	$\halpha=\halpha_{L}$.
	\item \textbf{Case i--iii} depending on the value of $x$ relative to $x_{U}$
	and $x_{L}$. Specifically, define:
	\begin{description}[labelwidth=\widthof{\bfseries iii}]
		\item [{i}] $x\le x_{U}$
		\item [{ii}] $x_{U}\le x\le x_{L}$
		\item [{iii}] $x_{L}\le x$
	\end{description}
\end{itemize}

Note that only the distinction between case i--iii depends on $x$. In the following, we will address cases by a sequence of letters, such as case II h ii. Note that not all 36 potential combinations of these cases need to be addressed one-by-one. Many of these cases are irrelevant or easy to handle.
For instance, in case I, the values of the parameters are known to be the constants $(\balpha,\bzp)$ and thus we do not have to distinguish the H/L or h/l cases.
The following proposition helps us when distinguishing the H/L and h/l cases.
This will be useful for the proof of Theorem~\ref{thm:mon-b0}
Recall that all dynamic parameters we have defined so far, like $x_{L}$, are functions of $b_{a}$.


\begin{proposition}
	\label{prop:HLrm}In case II h and III H, $x_{L}=y_{a}$. In case
	II l and III L, $r_{L}=\btheta$.
\end{proposition}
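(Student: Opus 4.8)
The plan is a direct substitution argument, carried out case by case. I would take the closed forms for $\alpha$ and $x_U$ recorded in Section~\ref{sec:calc-params} and plug them into $x_{L}=y_{a}-\sqrt{(y_{a}-x_{U})^{2}-\tfrac{2}{\alpha}\Delta_a}$ and $r_{L}=1-\alpha(x_{L}-x_{U})$, then simplify. The point that makes this painless is that in each of the four sub-cases the radicand $(y_{a}-x_{U})^{2}-\tfrac{2}{\alpha}\Delta_a$ collapses to a perfect square: in cases II h and III H it is identically $0$ (forcing $x_L=y_a$), and in cases II l and III L it is the square of an explicit linear expression whose sign is pinned down by the sub-case's own defining inequality. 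Throughout I would work in the non-trivial regime $1>r_a>\btheta$ and freely use $\Delta_a=y_a-b_a=(1-r_a)y_a$, the rewrite $b_a-\btheta y_a=\itheta y_a-\Delta_a$, and the identity $\itheta=1-\btheta$.

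For the two statements $x_L=y_a$: in case II h, $\alpha=\balpha$ and $x_U=\hzph=y_a-\sqrt{2\Delta_a/\alpha}$ (this branch being selected precisely because $\alpha\Delta_a\le\tfrac12\itheta^2$), so $(y_a-x_U)^2=2\Delta_a/\alpha$ and the radicand is exactly $0$; hence $x_L=y_a$. In case III H, $x_U=0$ and $\alpha=\halpha_{H}=2(1-r_a)/y_a$, whence $\tfrac{2}{\alpha}\Delta_a=\tfrac{y_a}{1-r_a}(1-r_a)y_a=y_a^{2}=(y_a-x_U)^2$, so again the radicand vanishes and $x_L=y_a$. (This is by design: $\hzph$ and $\halpha_{H}$ are exactly the values of $x_U$ and $\alpha$ that stretch the linear segment out to cover the entire supply.)

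For the two statements $r_L=\btheta$: in case II l, $\alpha=\balpha$ and $x_U=\hzpl=y_a-\tfrac{\Delta_a}{\itheta}-\tfrac{\itheta}{2\alpha}$, so $y_a-x_U=\tfrac{\Delta_a}{\itheta}+\tfrac{\itheta}{2\alpha}$ and a short expansion gives $(y_a-x_U)^2-\tfrac{2}{\alpha}\Delta_a=\bigl(\tfrac{\Delta_a}{\itheta}-\tfrac{\itheta}{2\alpha}\bigr)^{2}$; since the sub-case condition $\alpha\Delta_a\ge\tfrac12\itheta^{2}$ is exactly $\tfrac{\Delta_a}{\itheta}\ge\tfrac{\itheta}{2\alpha}$, the non-negative square root equals $\tfrac{\Delta_a}{\itheta}-\tfrac{\itheta}{2\alpha}$, so $x_L-x_U=(y_a-x_U)-(y_a-x_L)=\tfrac{\itheta}{\alpha}$ and therefore $r_L=1-\alpha(x_L-x_U)=1-\itheta=\btheta$. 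In case III L, $x_U=0$ and $\alpha=\halpha_{L}=\tfrac{\itheta^{2}}{2(\itheta y_a-\Delta_a)}$, whence $y_a^{2}-\tfrac{2}{\alpha}\Delta_a=y_a^{2}-\tfrac{4\Delta_a(\itheta y_a-\Delta_a)}{\itheta^{2}}=\bigl(y_a-\tfrac{2\Delta_a}{\itheta}\bigr)^{2}$; the sub-case condition $r_a\le\tfrac{1+\btheta}{2}$ rearranges to $\itheta y_a-2\Delta_a\le0$, so the non-negative root is $\tfrac{2\Delta_a}{\itheta}-y_a$, giving $x_L=2y_a-\tfrac{2\Delta_a}{\itheta}=\tfrac{2(\itheta y_a-\Delta_a)}{\itheta}$; substituting into $r_L=1-\alpha x_L$ (here $x_U=0$) the common factor $\itheta y_a-\Delta_a$ cancels, leaving $r_L=1-\itheta=\btheta$.

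The main obstacle, such as it is, is bookkeeping rather than insight: I would need to keep the substitutions $\itheta=1-\btheta$, $\Delta_a=(1-r_a)y_a$, and $b_a-\btheta y_a=\itheta y_a-\Delta_a$ consistent, and --- the one step that is not purely mechanical --- determine the sign of the linear term under the square root from the defining inequality of the relevant sub-case rather than by inspection. Once that is settled, each of the four verifications is a couple of lines of elementary algebra.
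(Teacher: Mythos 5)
Your proof is correct and rests on the same key fact as the paper's: at these parameter values the defining constraints are tight --- the radicand in Proposition~\ref{prop:xm} vanishes in cases II h and III H (i.e.\ \eqref{eq:xm-exists} holds with equality), and \eqref{eq:thetafloor-low} holds with equality in cases II l and III L. The paper just cites these equality cases from the proofs of Proposition~\ref{prop:zp}, Proposition~\ref{prop:xm}, and Lemma~\ref{lem:thetafloor-prep} rather than redoing the algebra, whereas you verify them by explicit substitution; your computations, including fixing the sign of the square root from each sub-case's defining inequality, check out.
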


\begin{center} \hyperlink{pf:HLrm}{\texttt{[Link to Proof]}} \end{center}

The result allows us to exclude certain parts of the state space from the analysis because there, the recovery rate is at most our defined floor and thus the mechanism defines that redemption must happen at the reserve ratio.

The result immediately implies that we do not need to consider cases II iii or III iii because the recovery rate is at most our defined floor and thus the mechanism defines that redemption must happen at the reserve ratio.

\begin{corollary}
	\label{cor:iii}In case II iii and III iii, $r(x)\le\btheta$.
\end{corollary}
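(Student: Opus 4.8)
The plan is to reduce the claim to Proposition~\ref{prop:HLrm} together with the elementary fact that, once the marginal redemption price equals the reserve ratio, the reserve ratio remains constant. Concretely, ``case iii'' means $x \ge x_L$, and for such $x$ the P-AMM prices redemptions at the constant value $r_L = r(x_L)$ (the third piece of~\eqref{eq:p_curve}). Differentiating $r(x) = b(x)/y(x)$ and using~\eqref{eq:odes} gives $r'(x) = \bigl(b(x) - p(x)\,y(x)\bigr)/y(x)^2$, which vanishes exactly when $p(x) = r(x)$. By the definition of $x_L$ in Section~\ref{sec:calc-params} we have $p(x_L) = r(x_L) = r_L$, and $p \equiv r_L$ on $[x_L,\infty)$, so $r(x) = r_L$ for every $x \ge x_L$. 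Hence in case iii it suffices to show, in each sub-case, that either $r_L \le \btheta$ or the sub-case is vacuous.

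Next I would split according to the sub-cases supplied by Proposition~\ref{prop:HLrm}. In case II h and case III H the proposition gives $x_L = y_a$; since the standing assumption is $x \in [0, y_a)$, we have $x < x_L$, so these sub-cases contain no point with $x \ge x_L$ and case iii is empty — there is nothing to prove. In case II l and case III L the proposition gives $r_L = \btheta$, and combining with the previous paragraph yields $r(x) = r_L = \btheta \le \btheta$, which is exactly the asserted bound. Since cases II iii and III iii decompose (up to boundary overlaps) into these four sub-cases, this completes the argument.

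I do not anticipate a real obstacle: Corollary~\ref{cor:iii} is essentially a bookkeeping consequence of Proposition~\ref{prop:HLrm}. The only points requiring a little care are (i) making sure the half-open range $x \in [0, y_a)$ is used, so that the endpoint $x = y_a = x_L$ in the H/h sub-cases is genuinely excluded and case iii is truly vacuous there, and (ii) invoking the ``reserve ratio is constant past $x_L$'' fact with the correct boundary value $r(x_L) = r_L$, rather than settling for a weaker inequality. Both are immediate from the design in Section~\ref{sec:design} and the defining relation $p(x_L) = r(x_L)$ from Section~\ref{sec:calc-params}.
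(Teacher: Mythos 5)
Your proof is correct and follows essentially the same route as the paper: both rest on Proposition~\ref{prop:HLrm} together with the fact that $r(x)=r_L$ for $x\ge x_L$, the only cosmetic difference being that you phrase the h/H sub-cases as vacuous while the paper argues directly that $x_L\le x<y_a$ rules out $x_L=y_a$, forcing $r_L=\btheta$.
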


\begin{center} \hyperlink{pf:cor:iii}{\texttt{[Link to Proof]}} \end{center}

\begin{figure*}
	\centering
	\begin{subfigure}[b]{.51\textwidth}
		\includegraphics[width=\linewidth]{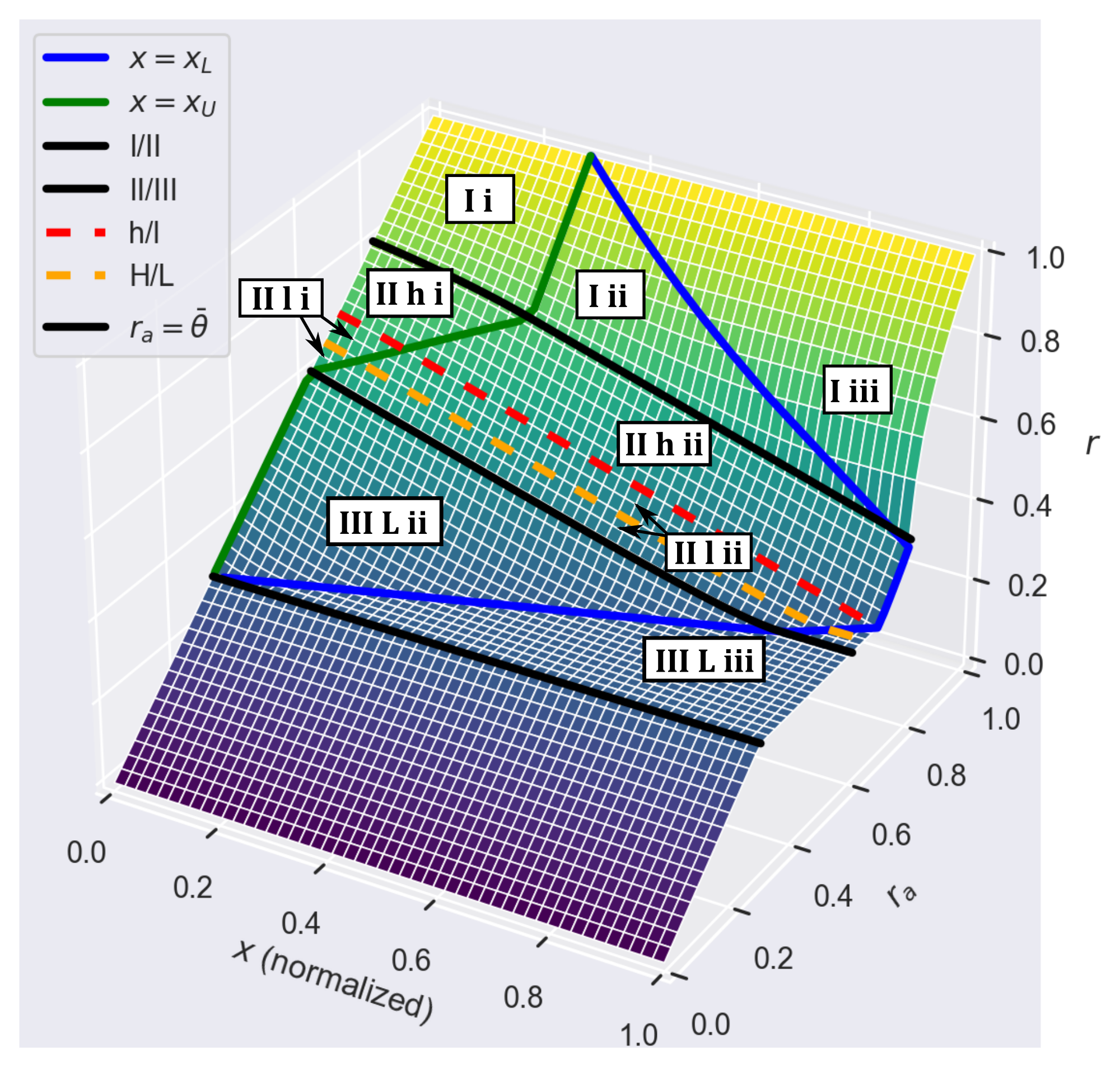}
		\caption{3d curve.}
		\label{fig:regions-1-3d}
	\end{subfigure}\hfill
	\begin{subfigure}[b]{.47\textwidth}
		\includegraphics[width=\linewidth]{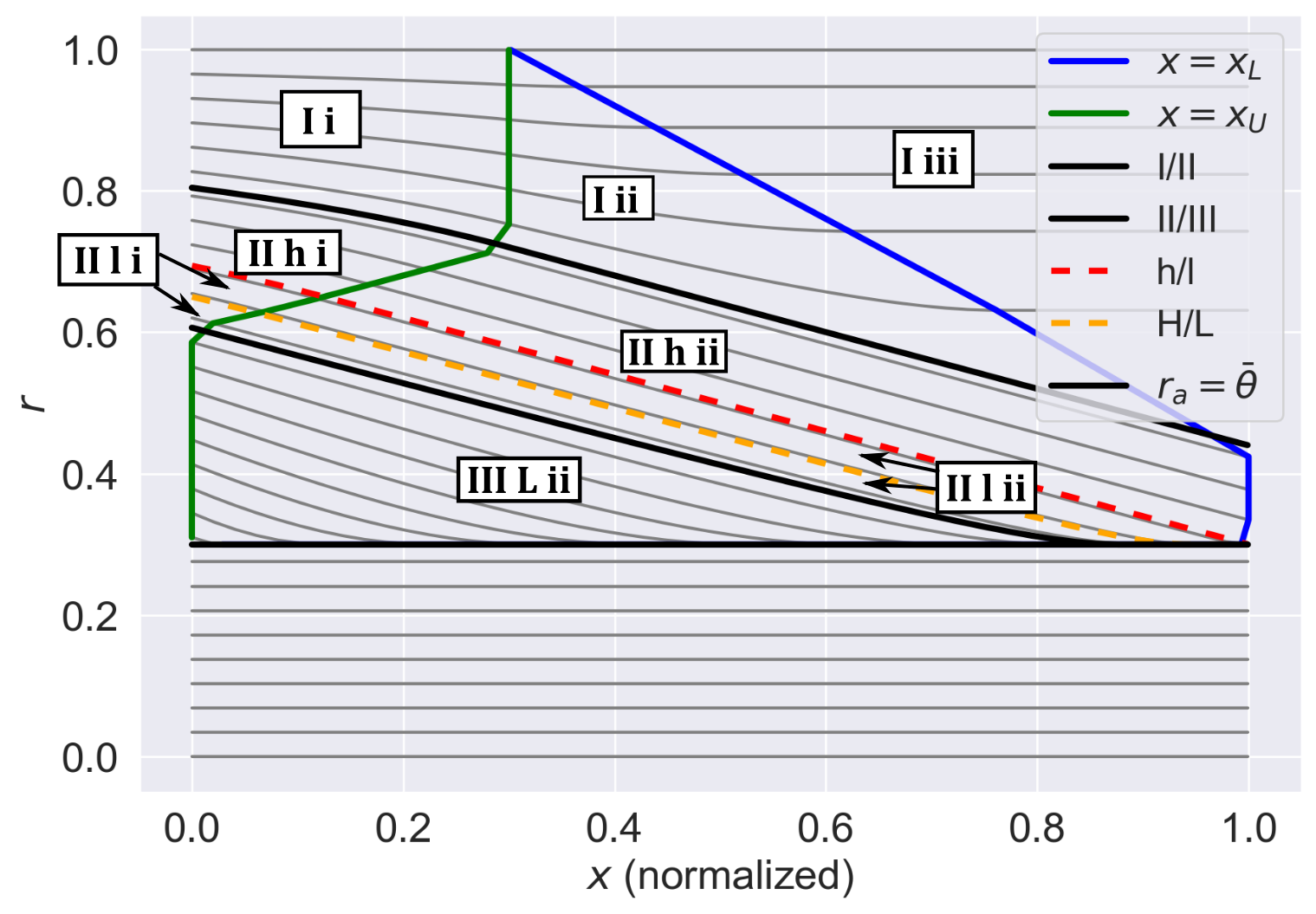}
		\caption{2d projection to the $(x, r)$ plane.}
		\label{fig:regions-1-2d}
	\end{subfigure}
	\caption{Reserve ratio $r$ as a function of the current redemption amount $x$ and the initial reserve ratio $r_a$ for the normalized case $y_a=1$. Due to normalization, we have $r_a=b_a$. The parameters $\btheta = 0.3$, $\balpha=0.8$, and $\bzp=0.3$ were used. The figures also depict $x_L$ and $x_U$ on the X axis as functions of $r_a$}
	\label{fig:regions-1}
\end{figure*}

  Conceptually, we can visualize the function $b(x;b_a)$, with which Theorem~\ref{thm:mon-b0} is concerned, as a 3d surface.
Figure~\ref{fig:regions-1-3d} shows the surface of the reserve ratio as a function of $x$ and $b_a$, where the different regions partition the $(x, b_a)$ space.\footnote{
      Note that $b(x;b_a) = r(x;b_a) \cdot (y_a-x)$, so that there is a simple 1:1 relationship between the reserve value and the reserve ratio. We present the reserve ratio because we find it more illustrative of the different phenomena.}
    We have chosen $y_a=1$ so that $b_a=r_a$ (but in general $b \neq r$). Notice that, as mentioned above, several of the regions can be essentially ignored.
This is for two reasons:
first, some of the lines separating different cases fall into a part of the $(x, r_a)$ space where the respective case is not relevant. For instance, in this example, the line separating the H and L cases itself falls into region~II, so that only case III L is present. In contrast, the line separating cases h and l falls into region II, so that both cases II h and II l are present.
The second reason why distinctions between regions can be ignored is that they fall along a certain flat region of the surface that has zero area in a projection of interest. This projection of interest is the 2-d $(x,r)$ space. This is visualized in Figure~\ref{fig:regions-1-2d}, in which many $r$ curves are plotted (in gray) for differing values of $r_a$, which are the starting points of these curves. In this space, the region boundaries are given by particular cases of $r$ curves, as shown. Notice that the flat section of the 3-d surface mentioned above
disappears in the 2d-projection.

\begin{figure*}
	\centering
	\begin{subfigure}{.49\textwidth}
		\includegraphics[width=\linewidth+5pt]{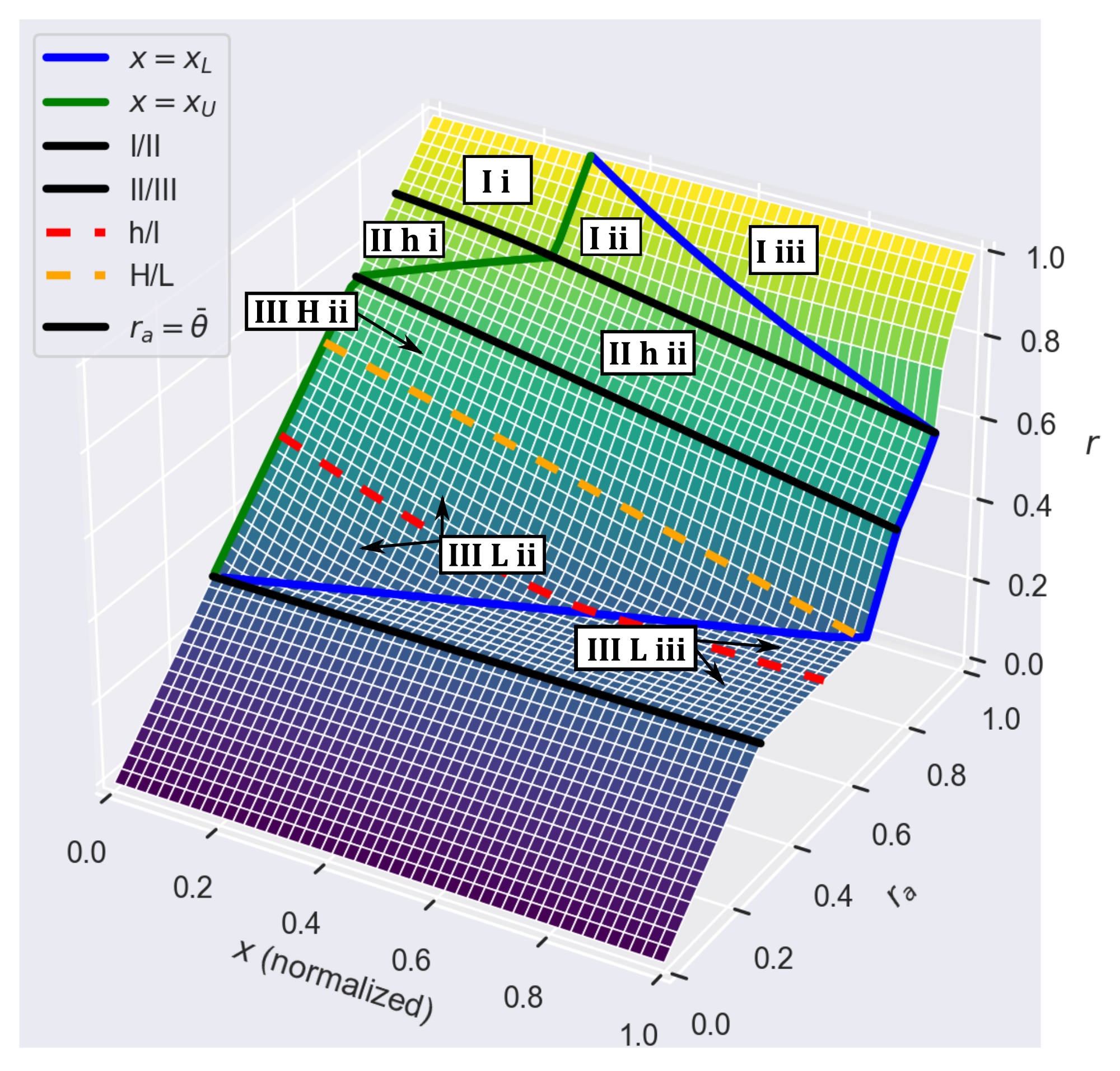}
		\caption{$\btheta=0.3$, $\balpha=0.5$, $\bzp=0.3$. There is a stretch where $x_L=1$.}
		\label{fig:regions-2-05}
	\end{subfigure}\hfill
	\begin{subfigure}{.49\textwidth}
		\includegraphics[width=\linewidth+5pt]{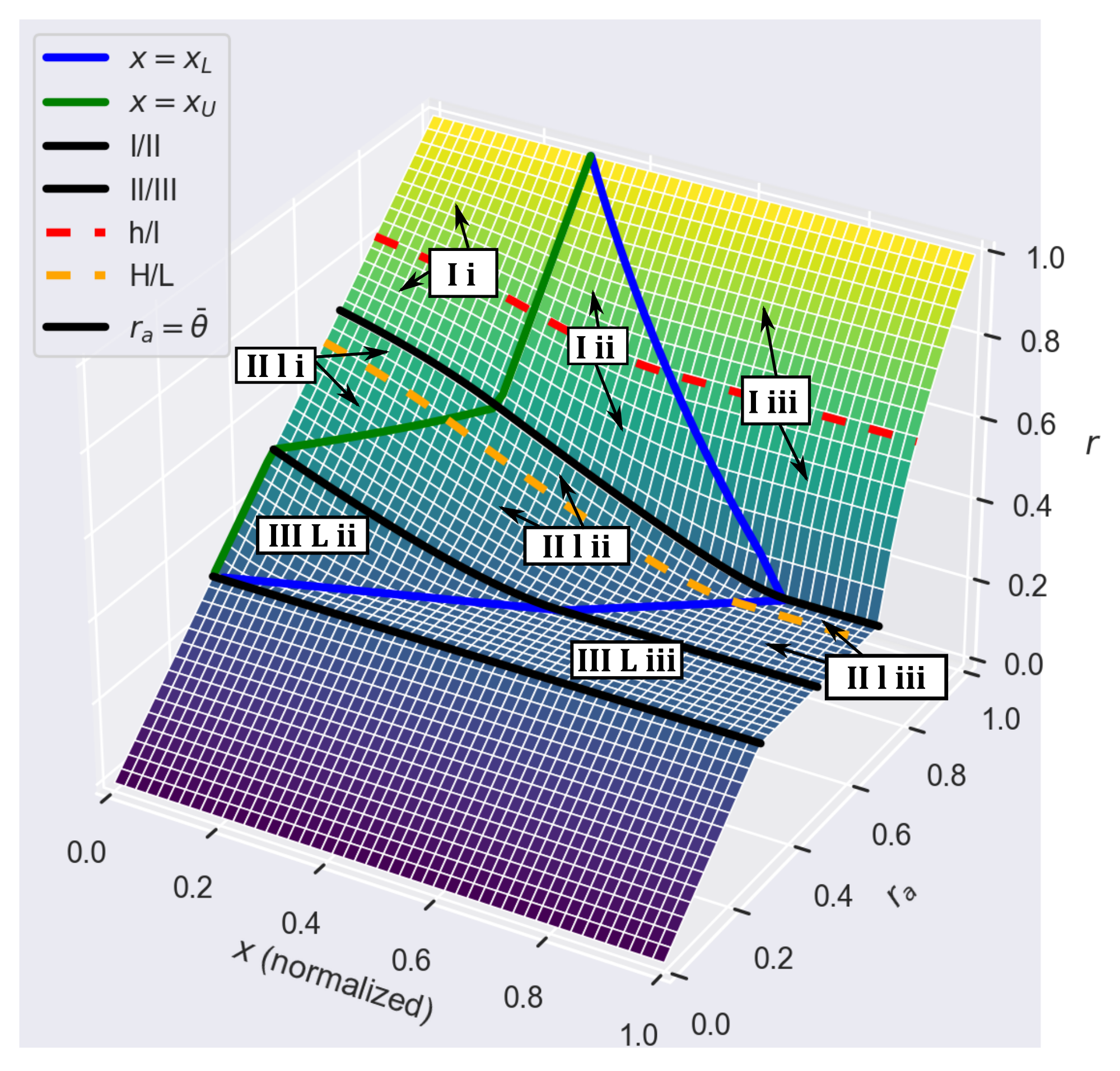}
		\caption{$\btheta=0.3$, $\balpha=1.3$, $\bzp=0.3$. There is no point where $x_L=1$.}
		\label{fig:regions-2-13}
	\end{subfigure}
	\caption{Reserve ratio $r$ as a function of the current redemption amount $x$ and the initial reserve ratio $r_a$ for $y_a=1$ and two choices of parameters.}
	\label{fig:regions-23}
\end{figure*}

In Figure~\ref{fig:regions-1}, region III H does not exist and there is a stretch of $r_a$ values where $x_L=1$. This, however, is not universally the case and depends on the parameters. We illustrate two 3d curves $r(x; r_a)$ where this is and is not the case, respectively, in Figure~\ref{fig:regions-23}.

Using the region partition, we can prove Theorem~\ref{thm:mon-b0}.
Our proof is by case distinction over the different regions. Observe that the function $b(x, b_a)$ is smooth across any given region and we can compute the partial derivative of $b(x, b_a)$ with respect to $b_a$ explicitly.
The main challenge is to handle the adjustment in the dynamic parameters that takes place as $b_a$ is varied.
\begin{center} \hyperlink{pf:mon-b0}{\texttt{[Link to Proof of Theorem~\ref{thm:mon-b0}]}} \end{center}

\section{Path Properties}\label{sec:path-properties}

Now that we've established that the P-AMM design is well-defined and robust in shape, we move on to show that it obeys many useful trading properties, including in settings involving trading fees and a separate minting curve. We characterize these in terms of path independence and path deficiency, which will lead us to two main useful properties:
\begin{itemize}
	\item There is no incentive for redeemers to strategically subdivide redemptions, including in some settings with trading fees.
	
	\item In a wide array of settings involving a P-AMM with minting, redeeming, and trading fees, the protocol itself is only better off in terms of the reserve ratio curve no matter which trading path is realized.
\end{itemize}

\subsection{Path Independence}

We first show that the P-AMM redemption curve, as developed thus far without trading fees, is path independent within a block. This means that the end result of any path of redemptions is the same for any given starting and ending point. On the ground, this is useful for traders as they do not need to worry about how exactly they use the P-AMM within a block: there is no incentive to split up redemptions into smaller chunks or to merge many small redemptions into bigger units.

\begin{theorem} \label{thm:path_ind}
	(Path independence) Let $S_0 := (x_0, b_0, y_0)$ be a state and let $X,Y>0$ such that $X+Y\leq y_0$. Let the state $S_X$ result from redeeming $X$ at $S_0$, $S_{X,Y}$ from redeeming $Y$ at $S_X$, and $S_{X+Y}$ from redeeming $X+Y$ at $S_0$. Let the paid-out amounts be $P_X, P_{X,Y}$, and $P_{X+Y}$ respectively. Then the following hold.
	\begin{enumerate}
		\item $P_X + P_{X,Y} = P_{X+Y}$
		\item $S_{X,Y} = S_{X+Y}$
	\end{enumerate}
\end{theorem}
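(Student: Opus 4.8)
The plan is to reduce everything to a single structural fact --- the mechanism is \emph{memoryless}, i.e.\ $\rho$ in \eqref{eq:odes} depends only on the current state $(x,b,y)$ --- together with uniqueness of anchor-point reconstruction (Theorem~\ref{thm:mon-b0}). First I would dispose of the degenerate regimes in which $\rho$ is constant along the entire path: $b_0/y_0\ge 1$ (then $\rho\equiv 1$) and $b_0/y_0\le\btheta$ (then $x_L=0$ and $\rho$ equals the constant $b_0/y_0$). In both, $b$ is affine in $x$ and both assertions are immediate. So assume $\btheta<b_0/y_0<1$.

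By Theorem~\ref{thm:mon-b0}, in this regime every state $(x,b,y)$ with $\btheta<b/y<1$ reconstructs to a \emph{unique} anchor point: its $y_a$-coordinate is $y+x$, and its $b_a$-coordinate is pinned down by strict monotonicity of $b_a\mapsto b(x;b_a,y_a)$. Let $(b_a,y_a)$ be the anchor point reconstructed from $S_0$. Every state on the curve $x\mapsto\bigl(b(x;b_a,y_a),\,y_a-x\bigr)$ reconstructs to $(b_a,y_a)$ --- that curve passes through it --- so along the curve $\rho$ equals $p(\,\cdot\,;b_a,y_a)$, hence the curve solves \eqref{eq:odes}; by uniqueness of solutions it is \emph{the} solution of the IVP started at any of its points. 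Consequently, redeeming $X$ at $S_0$ lands on $S_X=\bigl(x_0+X,\ b(x_0+X;b_a,y_a),\ y_0-X\bigr)$, paying out the reserve decrease $P_X=b_0-b(x_0+X;b_a,y_a)$. The crux is that $S_X$ reconstructs to the \emph{same} anchor point $(b_a,y_a)$: the identity $y_a=y+x$ is preserved by redemption (the $y_a$-coordinate is $(y_0-X)+(x_0+X)=y_0+x_0$), and $(b_a,y_a)$'s curve passes through $S_X$, so uniqueness forces it. Thus redeeming $Y$ at $S_X$ merely continues the same curve by a further $Y$, giving $S_{X,Y}=\bigl(x_0+X+Y,\ b(x_0+X+Y;b_a,y_a),\ y_0-X-Y\bigr)$, which is precisely the state reached by redeeming $X+Y$ directly at $S_0$; this is part~(2). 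For part~(1), the paid amounts telescope: $P_{X,Y}=b(x_0+X;b_a,y_a)-b(x_0+X+Y;b_a,y_a)$, so $P_X+P_{X,Y}=b_0-b(x_0+X+Y;b_a,y_a)=P_{X+Y}$.

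The step I expect to require genuine care --- the main obstacle --- is the appeal to uniqueness of solutions of \eqref{eq:odes}: one must rule out that ``redeeming $Y$ at $S_X$'' follows some solution through $S_X$ other than the original anchor-point curve. This follows once one checks that $\rho$ is continuous and piecewise-$C^1$, which it is by the explicit three-piece form \eqref{eq:p_curve} and the closed forms for $\alpha,x_U,x_L$ in Section~\ref{sec:calc-params}: $\rho$ is then locally Lipschitz on the interior of each region of Section~\ref{sec:reconstruction-uniqueness}, and a trajectory meets the one-dimensional region boundaries only at isolated $x$-values, across which the solution extends uniquely. (Under the implementation viewpoint in which the mechanism literally reconstructs $(b_a,y_a)$ and then evaluates $b(\,\cdot\,;b_a,y_a)$, this concern vanishes and the argument collapses to ``the reconstructed anchor point is invariant under redemption'' plus telescoping.) Two routine loose ends remain: all intermediate states are legitimate since $X+Y\le y_0$ gives $x_0+X+Y\le y_a$; and the boundary case $b/y=\btheta$, where reconstruction need not be unique, is harmless because there the anchor-point curves have all flattened to the constant $\btheta$ and thus coincide going forward.
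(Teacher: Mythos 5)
Your proof is correct, and it rests on the same underlying principle as the paper's: the dynamics \eqref{eq:odes} are memoryless, so the trajectory continued from $S_X$ is just the continuation of the trajectory from $S_0$, and the payouts telescope. The implementation differs, though. The paper's proof is a two-line translation argument: if $(x^0,b^0,y^0)$ solves the IVP at $S_0$, then the horizontally shifted functions $b^X(\cdot):=b^0(X+\cdot)$ (and likewise $y^X$) satisfy the same differential equations and the initial values defining $S_X$, hence they \emph{are} the solution at $S_X$, giving $S_{X,Y}=S_{X+Y}$ directly; it never mentions anchor points, case splits, or regularity of $\rho$. You instead route the argument through Theorem~\ref{thm:mon-b0}: the reconstructed anchor point $(b_a,y_a)$ is invariant under redemption, so the mechanism keeps evaluating the same explicit curve $b(\,\cdot\,;b_a,y_a)$. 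This buys you something the paper glosses over --- you explicitly dispose of the degenerate regimes $b_0/y_0\ge 1$ and $b_0/y_0\le\btheta$ (including the $b/y=\btheta$ boundary where reconstruction is non-unique), and you confront the uniqueness-of-IVP-solutions issue that the paper's phrase ``form the solution to the IVP at $S_X$'' silently assumes, via the piecewise-Lipschitz structure of $\rho$. The cost is that your argument is tied to the specific three-piece curve and the reconstruction machinery, whereas the paper's translation-invariance argument is shorter and applies verbatim to any system whose right-hand side depends only on the current state. Both are valid; your uniqueness discussion, while still somewhat sketched (crossing of region boundaries at isolated $x$-values), addresses a real gap rather than creating one.
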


\begin{center} \hyperlink{pf:path_ind}{\texttt{[Link to Proof]}} \end{center}

The theorem holds because the right-hand sides of (\ref{eq:odes}) only depend on the current state of the system and not, for instance, on the initial conditions or the specific trading path that led to the current state. In the dynamical systems literature, this property is also known as invariance under horizontal translations.

Note that the theorem only holds for redemptions immediately following each other within the same block ---the context of our model in this paper--- as otherwise we need to consider the time decay of the initial condition $x_0$ across blocks as well as exogenous changes to $b$ and minting operations happening between two redemptions. Additionally, if there are many traders using the P-AMM to redeem, there will at times be an incentive to be earlier in the redemption queue.

\subsection{Path Deficiency Properties with Minting and Fees}

We next discuss two properties analogous to path deficiency in CFMMs \citep[see][]{angeris2020improved} that holds in an extended setting with fees and the possibility of minting tokens  in addition to redemption (where minting occurs at the same or a higher price than redemption).
In this section, we provide an informal discussion of these results. For a formal treatment, see Appendix~\ref{apx:path-properties}.

Our first result is that, for any non-negative fee, trading along any \emph{trading path} (i.e., any sequence of mint and redeem operations) can only improve system health compared to the net of the path. This property is important because it implies that the system is robust even under (and, in fact, benefits from) combinations of mint and redeem operations, which may result from market volatility. We capture “system health” by the reserve ratio $r(x)$ at a given redemption level $x$. This is a function of the anchor point $(b_a, y_a)$ and the static parameters $\bzp$, $\balpha$, $\btheta$. We assume in the following that these values are held fixed.

\begin{theorem}[Path Deficiency Vs.\ Netting, Informal]\label{thm:path-deficiency1-informal}
	Consider any trading path ending in redemption level $x$. Let $r_1$ be the reserve ratio of the system after following the trading path through its mint and redeem operations and let $r_0 := r(x)$ be the reserve ratio at the net minting or redemption amount. Then $r_1 \ge r_0$.
\end{theorem}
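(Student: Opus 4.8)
The plan is to reduce the claim about an arbitrary trading path to the two-step case by induction, and then to analyze that base case using the structural results already in hand —namely path independence (Theorem~\ref{thm:path_ind}) for the pure-redemption segments, and the monotonicity of the reserve value/reserve ratio in the anchor point (Theorem~\ref{thm:mon-b0}). First I would set up notation: a trading path is a finite sequence of mint and redeem operations; write $n$ for the net signed redemption amount, so that the "netting" comparison point is the state reached by a single operation of size $n$ from the same start. Because consecutive operations of the same type (two redeems, or two mints) compose additively and path-independently within a block, I can collapse any path into an alternating sequence of mint/redeem blocks, and then it suffices by induction to show that inserting a single "detour" —a mint followed by a compensating redeem (or vice versa), of equal size, so that the net is unchanged— can only weakly increase the reserve ratio at the final redemption level $x$. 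The key point is that every such detour leaves the stablecoin supply displacement unchanged but, because mint price $\ge$ redeem price $\ge 0$ and fees are non-negative, strictly adds (weakly) to the reserve value relative to doing nothing.

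The heart of the argument is therefore: a mint-then-redeem (or redeem-then-mint) round trip of equal token size, with fees $\ge 0$ and with the invariant that minting happens at a price at least as high as redeeming, returns the system to the same $y$ but with $b$ weakly larger than before the round trip. For the fee part this is immediate: fees are retained by the reserve, so any operation with a positive fee strictly helps $b$. For the price-asymmetry part: a redeem of size $\delta$ pays out $\int p_{\text{redeem}}$, and the subsequent mint of size $\delta$ collects $\int p_{\text{mint}} \ge \int p_{\text{redeem}}$ over the corresponding supply range —one needs here that minting prices dominate redemption prices pointwise in the relevant region, which is part of the extended-setting hypotheses referenced in Appendix~\ref{apx:path-properties}. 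So the round trip is reserve-non-decreasing while supply-neutral, i.e. it moves $(b,y)$ to a point with the same $y$ but weakly larger $b$, hence weakly larger anchor reserve value $b_a$ (since $y_a = y + x$ is unchanged and $b_a$ is a strictly increasing function of the current $b$ at fixed $x$, by the reconstruction in Section~\ref{sec:reconstruction-uniqueness}).

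It then remains to propagate this local improvement forward to the final redemption level $x$. This is exactly where Theorem~\ref{thm:mon-b0} does the work: having a weakly larger anchor reserve value $b_a$ at the current state means the entire reserve value curve $b(\,\cdot\,; b_a)$, and hence the reserve ratio curve $r(\,\cdot\,; b_a)$, lies weakly above the corresponding curve for the un-detoured path, at all non-trivial points —in particular at the eventual redemption level $x$. (In the trivial cases, where one of the curves has $r \ge 1$ or has already hit the floor $\btheta$, the inequality $r_1 \ge r_0$ must be checked directly, but there it is immediate because $r = 1$ dominates, or because being "ahead" in reserves cannot push one below the floor faster.) Chaining the inductive step over all detours that distinguish an arbitrary path from its netting yields $r_1 \ge r_0$.

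The main obstacle I expect is the bookkeeping at the boundary between the "path-independent" regime and the corner cases —specifically, handling detours that move the system across the $b/y = 1$ threshold or across a case boundary where the dynamic parameters $\alpha, x_U, x_L$ switch formulas. Theorem~\ref{thm:mon-b0} is stated for states with $1 > r(x;b_a), r(x;b_a') > \btheta$, so a careful argument must either invoke continuity to pass to the boundary or dispatch the $r \ge 1$ and $r \le \btheta$ regimes by separate, easy monotonicity observations. A secondary subtlety is making precise the hypothesis "minting occurs at the same or a higher price than redemption" and confirming it holds uniformly enough along the path that each round-trip comparison integrand inequality is valid; I would isolate this as a lemma about the extended mint/redeem curves before running the induction.
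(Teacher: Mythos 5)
Your decomposition differs genuinely from the paper's route: the paper works with a continuous path and proves (Lemma~\ref{lem:path-deficiency1}) that the instantaneous drift of $r$ under the redeem-with-fee dynamics \eqref{eq:redeem-odes-fee} and the mint dynamics \eqref{eq:mint-odes} is never below the drift along the fee-free reserve ratio curve through the current point, so the state never falls below its current curve, the reconstructed anchor $r_a$ is non-decreasing along the path, and pointwise monotonicity of the curves in $r_a$ gives the theorem. You instead discretize into operations and induct on cancelling adjacent mint/redeem round trips. The local round-trip step is fine (mint price $\ge$ redeem price and $\gamma \ge 0$ give same $(x,y)$, weakly larger $b$), and the combinatorial reduction of a path to its netting by such cancellations is fine.

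The genuine gap is in the propagation step. After removing a detour in the middle of the path, you have two states with the same $(x,y)$ and ordered reserves $b \le b'$, and the \emph{same remaining operations} (which in general still contain mints and fee-bearing redemptions) are applied to both. To conclude $r_1 \ge r_0$ at the end you need an order-preservation (comparison) property of the \emph{extended} flow: ordered states stay ordered under the dynamics \eqref{eq:redeem-odes-fee} and \eqref{eq:mint-odes}. Theorem~\ref{thm:mon-b0} does not give this --- it only says the \emph{fee-free} curves $b(\,\cdot\,; b_a)$ for different anchors do not cross, i.e.\ it orders outcomes when the suffix is fee-free redemption along the curve. Your phrase ``the entire reserve value curve lies weakly above \ldots in particular at the eventual redemption level $x$'' conflates the value of the fee-free curve at $x$ with the actual reserve after the remaining (mint- and fee-laden) operations. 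To close the argument you must either (i) prove a separate comparison lemma for the scalar ODEs $\de b/\de x = -\rho+\gamma$ and $\de b/\de x = -\varphi$ (which needs uniqueness/regularity assumptions on $\rho$, $\gamma$, $\varphi$ that the statement does not supply), or (ii) restructure the induction so that each step only compares one actual move against the fee-free foliation --- which is essentially a discrete re-derivation of the paper's Lemma~\ref{lem:path-deficiency1}. As written, the inductive step does not follow from the results you cite.
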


Our second result states that, in settings with a proportional fee, there is no incentive for a trader to strategically subdivide a net redemption trade into a sequence of different trades. Note that, clearly, this may not be true when fees are non-linear in the size of a trade.

\begin{theorem}[Path Deficiency Vs.\ Subdivision, Informal]\label{thm:path-deficiency2-informal}
	Assume that a non-negative fee is taken proportional to the redemption amount. Then the following hold:
	\begin{enumerate}
		\item The system is path independent with respect to combinations of only redemption (but not minting) operations even in the presence of fees.
		\item An individual trader has no incentive to subdivide a net redemption within a block into a combination of minting and redemption operations.
	\end{enumerate}
\end{theorem}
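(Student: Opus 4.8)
The plan is to obtain both conclusions from one accounting identity together with Theorems~\ref{thm:path_ind} and~\ref{thm:path-deficiency1-informal}. First I would record two structural facts valid along \emph{any} sequence of mint and redeem operations within a block: (i) the anchored supply $y_a = y + x$ is invariant, since each unit redeemed shifts $(x,y)$ by $(+1,-1)$ and each unit minted by $(-1,+1)$, so a path with net redemption $X$ starting at $S_0 = (x_0,b_0,y_0)$ necessarily ends at redemption level $x_0 + X$ regardless of the path; and (ii) the anchor reconstruction of Section~\ref{sec:reconstruction-uniqueness} continues to identify a well-defined anchor at every intermediate state, so $\rho$ and hence the dynamics are defined throughout. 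This lets me compare any mixed path to the single ``direct'' redemption of $X$ on equal footing, both ending at level $x_0 + X$.

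Next I would fix the value bookkeeping. With a proportional fee rate $f \in [0,1)$ (a redemption paying gross $G$ nets the trader $(1-f)G$ and yields $fG$ of protocol-side fee revenue; a minting of base cost $C$ costs the trader $(1+f)C$, adds $C$ to the reserve, and yields $fC$ of fee revenue), conservation of USD between trader and protocol gives that the trader's net USD gain along a path equals the decrease in total protocol-held USD, namely $(1-f)\sum_i G_i - (1+f)\sum_j C_j$, with $G_i$ the gross redemption payouts and $C_j$ the minting base costs along the path, while the reserve satisfies $b_1 = b_0 - \sum_i G_i + \sum_j C_j$. For the direct redemption this gain is $(1-f) G^*$ with $G^* = \int_{x_0}^{x_0+X} p(\,\cdot\,; b_a, y_a)\,\de x$ for $(b_a,y_a)$ the anchor of $S_0$, and the direct redemption leaves the reserve at $b_0 - G^*$, i.e.\ netted reserve ratio $r_0 = (b_0-G^*)/(y_0-X)$. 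Now Theorem~\ref{thm:path-deficiency1-informal}, applied to the mixed path (which ends at the same level $x_0+X$), gives $r_1 \ge r_0$, hence $b_1 \ge b_0 - G^*$, which by the reserve identity is exactly $\sum_j C_j \ge \sum_i G_i - G^*$. Substituting into the trader's gain and splitting on the sign of $\sum_i G_i - G^*$ (if $\le 0$ the bound is immediate from $C_j \ge 0$; if $>0$ then $(1-f)(\sum_i G_i - G^*) \le (1+f)(\sum_i G_i - G^*) \le (1+f)\sum_j C_j$) yields $(1-f)\sum_i G_i - (1+f)\sum_j C_j \le (1-f)G^*$ --- precisely conclusion~2, that no mint/redeem subdivision beats the direct redemption.

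For conclusion~1 I would observe that a proportional fee on redemptions only rescales the state-to-state transition by constants without destroying its \emph{autonomous} character: the reserve still evolves by an ODE whose right-hand side depends on the current state alone, and post-fee payouts are a fixed multiple of the fee-free ones. Hence the flow/semigroup argument behind Theorem~\ref{thm:path_ind} applies verbatim, giving $S_{X,Y} = S_{X+Y}$ and additivity of post-fee payouts for redemption-only paths; in the bookkeeping above this is the case $\sum_j C_j = 0$, where $\sum_i G_i = G^*$ exactly and the trader is precisely indifferent.

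The main obstacle I anticipate is not a single inequality but pinning down the fee-and-minting accounting so that the reserve identity $b_1 = b_0 - \sum_i G_i + \sum_j C_j$ and the trader-gain formula are exactly the ones used in Appendix~\ref{apx:path-properties} --- several superficially different but equivalent conventions are possible, and whether the redemption fee flows back to the reserve or to a treasury changes which ledger entry it occupies (though not the trader's payoff); a secondary point to check carefully is that Theorem~\ref{thm:path-deficiency1-informal} is invoked against exactly the right netting baseline, and that the reconstruction of Section~\ref{sec:reconstruction-uniqueness} genuinely remains valid at every state visited along a path that mixes minting and redemption.
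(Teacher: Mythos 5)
Your part~1 argument is fine (the right-hand side of \eqref{eq:redeem-odes-fee} is still a function of the current state alone, so the translation argument of Theorem~\ref{thm:path_ind} applies; the paper instead observes that $\gamma=\eps\rho$ turns the system into a rescaled P-AMM with target $1-\eps$ and rescaled $\balpha,\btheta$, which buys a bit more than path independence). For part~2, however, your global accounting argument proves the claim only under your own fee convention, and it has a gap under the paper's. In the paper's formal model the fee is $\gamma=\eps\rho$ \emph{inside} the reserve ODE \eqref{eq:redeem-odes-fee}: the fee is retained in the reserve, the reserve is debited only the net payout, and the trader's gain along any path is exactly $b_0-b_1$. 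Your comparator for the direct redemption is $(1-f)G^*$ with $G^*$ the \emph{fee-free} gross, i.e.\ you implicitly debit the reserve the gross amount and divert the fee to a treasury. Under the paper's convention the direct redemption with fee pays $(1-\eps)G^{**}$ computed along the fee-modified dynamics, and by the same path-deficiency reasoning $(1-\eps)G^{**}\le G^*$, typically strictly. Theorem~\ref{thm:path-deficiency1-informal} only lower-bounds the end reserve of the mixed path by the \emph{fee-free netted} reserve $b_0-G^*$, and the direct-with-fee path satisfies the very same lower bound; so your chain shows both the mixed path and the direct redemption are dominated by $G^*$, which does not order them against each other. To close the argument in the paper's model you would have to strengthen the baseline of Lemma~\ref{lem:path-deficiency1}/Theorem~\ref{thm:path-deficiency1-informal} from the fee-free curve to the fee-inclusive redemption-only curve (legitimate via your own part-1 rescaling remark, since $\varphi\ge 1\ge 1-\eps$ still pushes the rescaled anchor up under minting), and only then compare reserves. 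Also, your parenthetical that the treasury-vs-reserve choice does not change the trader's payoff is not quite right: retaining the fee in the reserve slows the reserve decay, weakly lifts the anchor, and hence weakly raises subsequent redemption prices.

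For contrast, the paper's own proof of part~2 is local rather than global: it notes that over any backtracked stretch in $x$ the net redemption rate $(1-\eps)\rho\le 1\le\varphi$, so a mint followed (or preceded) by a redemption over the same interval costs a non-negative spread, and a simple induction shows the aggregate net redemption is optimal. Your accounting-identity route (anchored supply invariance, reserve bookkeeping, then Theorem~\ref{thm:path-deficiency1-informal}) is a genuinely different and attractive global argument, but as written it needs either the strengthened fee-inclusive baseline above or the paper's spread argument to handle the case where the fee stays in the reserve.
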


For traders, the previous theorem means that interacting with the mechanism is straightforward. For the mechanism, it means that the mechanism cannot be exploited by forming complex trading paths.

Note that there may still be strategic interaction between many traders, but these considerations are limited and fairly simple (and can in fact be avoided with a batch settlement of trades in the block, which is fundamentally possible if more difficult to implement).
There can be an incentive for a given trader to get a redemption trade in earlier than other redemption trades.
There can also be an incentive for a given trader to get a redemption trade in after any minting trades are settled. Note that in many circumstances where this would matter, we are not likely to see mint and redemption transactions in the same block, however, as a trader would likely get a better price for one or the other on a secondary market.
A further concern is whether arbitrage trades, in which the net redemption is zero, are profitable (e.g., a sandwich attack around other trades). This is not endogenously profitable from the P-AMM structure alone since redemption prices are $\leq 1$ while minting prices are $\geq 1$.

\section{Efficient Implementation}
\label{sec:implementation}

We now discuss how to implement our mechanism on-chain. This section provides a high-level overview. The formal details are presented in Appendix~\ref{apx:implementation}. Appendix~\ref{apx:discrete-decay:implementation} discusses the implementation of the simplified redemption curve with discrete price decay.

The P-AMM can be operationalized as follows: we are given the current state of the system $(x, b, y)$, a redemption amount $X$, and we need to compute a redemption amount $B$ that the trader should receive in exchange for $X$ units of the stablecoin.
To do this, the system constructs the anchor point $(b_a, y_a)$ consistent with the current state, constructs the redemption curve from this, and then integrates over the redemption curve to compute the redemption amount.

Most of the implementation operates in a space normalized to $y_a=1$. This is convenient because it allows the system to be configured in a way that is invariant under scaling of the stablecoin supply. We therefore assume that the static parameters refer to the \emph{normalized} space. Specifically, we assume that the following static parameters are given:
\begin{center}
	\begin{tabular}{c|l}
		\textbf{Parameter}	&	\textbf{Definition} \\
		\hline
		$\balphao$		&	$\in (0,\infty)$ lower bound on decay slope in normalized space \\
		$\bzpo$			&	$\in [0,1]$ upper bound on $x_U$ in normalized space \\
		$\bar\theta$		&	$\in [0,1]$ target reserve ratio floor
	\end{tabular}
\end{center}
Note that under the normalization, we assume that the \emph{anchor} stablecoin supply $y_a$ is 1, not the \emph{current} supply $y$. Note also that the target reserve ratio floor is not affected by normalization. In Appendix~\ref{apx:implementation}, we discuss the conversion between normalized and non-normalized space in detail.

The implementation of the algorithm now proceeds as follows (see Algorithm~\ref{alg:redemption} in Appendix~\ref{apx:implementation}).
Assume WLOG that $b/y\in (\btheta, 1)$ since otherwise, the behavior of the P-AMM is trivial.

First, we can easily reconstruct $y_a = y + x$. The algorithm then normalizes all values to $y_a=1$; specifically, we consider the \emph{normalized state} $(x/y_a, b/y_a, y/y_a)$ and at the end of the algorithm, we will undo the normalization by scaling the computed normalized redemption amount by $y_a$. To simplify the exposition, assume in this section WLOG that $y_a = 1$ so that the original state and the normalized state are the same and we don't need to introduce any additional variables.

The algorithm now reconstructs the anchor reserve value $b_a$ consistent with the current state. This is the most complex step of the algorithm.
Recall that $(b_a, y_a)$ together with the static parameters determine the dynamic parameters and thus the full redemption curve. We are looking for a $b_a$ such that
\[
b = b(x; b_a, y_a=1)
.
\]
It follows from continuity that such a $b_a$ must exist and Theorem~\ref{thm:mon-b0} implies that it is unique, but the theorem provides no way of computing it efficiently. To do this, we proceed in two steps:
\begin{enumerate}
	\item First, we determine the \emph{region} of $(x, b, y)$ among the regions introduced in Section~\ref{sec:reconstruction-uniqueness}. Recall that these regions depend on $b_a$ because they depend on the dynamic parameters $x_U$ and $\alpha$. The main insight that enables this step is that the region of a state can still be determined efficiently even if $b_a$ is not known. To enable this, the algorithm (pre-)computes certain $b_a$ values that separate different regions. We also show that $(b_a, y_a)$ can be replaced by $(b, y)$ when computing certain specific thresholds.
	\item Second, given the region, we determine $b_a$. This is now relatively straightforward by replacing the definitions of $\alpha$ and $x_U$ from propositions \ref{prop:alpha} and \ref{prop:zp} as functions of $b_a$ into the definition. Note that the region determines all case distinctions in the definitions, so that the equation $b = b(x; b_a, y_a)$ is smooth. In fact, it turns out that this can be written as a polynomial equation of degree at most 2 in $b_a$ and thus solved easily for each individual region.
\end{enumerate}

With $b_a$ determined, we can compute $\alpha$ and $x_U$ and from this we compute
\[
B = b - b(x + X; b_a, y_a)
.
\]
Note that the involved integral can be computed easily because $p$ is piecewise-linear in $x$.
We finally undo the normalization by returning $y_a \cdot B$.

Regarding computational cost, the algorithm can be implemented using only a constant number of arithmetic operations (no loops) and at most two square roots. While techniques like pre-computation and caching could be used to further speed up the region detection step, their utility would have to be carefully traded off against the cost of storage; our current implementation does not use these techniques.

\section{Concluding Remarks}

We have designed a desirable P-AMM redemption curve based on an anchored state that codifies how a stablecoin can sustainably adapt monetary policy to respond to crisis events without external input. This can work to mitigate currency runs if the stablecoin becomes under-reserved or, with some modification to consider the liquid reserve ratio, to mitigate bank run-like risks if the reserve contains illiquid assets.
We showed how this design satisfies desiderata 1--5 and 8 introduced in Section~\ref{sec:desiderata}.

Desiderata 6 and 7 can be reasoned about considering how the anchored state changes over time according to an exponentially time-discounted sum and using desiderata 1--5.
It is possible to show that desiderata 6 and 7 are satisfied formally. We leave this as the starting point for a wider study of time evolution. In particular, future work should also study how the P-AMM behaves under wider market reactions and other stabilization mechanisms, including a model of exogenous random shocks to reserve value.

The desiderata properties that the P-AMM achieves were informed by results about optimal monetary policies in currency peg models and redemption policies in bank run models. A natural follow-up question is how to optimally parameterize the P-AMM. A first step in this direction would be to adapt currency peg models to a stablecoin setting with the general P-AMM shape defining the monetary policy and optimize the outcomes of the resulting game.

A remaining question is how one should optimally calibrate the P-AMM. This would be best carried out using formal game-theoretic models of speculative attacks on the P-AMM. An example of this type of attack involves an attacker who can take an outside short position and try to profit by triggering a de-peg of the stablecoin. Models of these attacks would integrate the P-AMM, a secondary market, and a shorting market, and is outside the scope of this paper. Models of speculative attacks on fiat currencies, which informed the P-AMM desiderata in the first place, may serve as a good starting point; we discuss these in Appendix~\ref{apx:monetary-theory}.

Another avenue for future work is to explore alternative P-AMM designs that may satisfy the desiderata. For instance, a P-AMM with a sigmoidal shape would have further smoothness. However, such designs would likely present computational issues on-chain, including both the raw number of computational steps required (i.e., gas requirements for on-chain execution) and amplification of rounding errors arising from fixed point arithmetic.

\bibliographystyle{abbrvnat}

\addcontentsline{toc}{section}{References}  

\appendix
\section*{Appendix}
\addcontentsline{toc}{section}{Appendix}  

\section{Parallels with the monetary economics literature.}\label{apx:monetary-theory}
A well-designed primary market for a stablecoin can be interpreted as an autonomous version of open market operations, comparing with how central banks interact with markets to shape monetary policy.
When new stablecoins are sold on the primary market, the balance sheet is expanded, and when stablecoins are redeemed, the balance sheet is contracted. The primary market design determines how much the balance sheet changes, supposing all proceeds of the market go onto the balance sheet.\footnote{
	Notably, many algorithmic stablecoins divert a share of primary market cash flow to holders of ``equity'' tokens ---we consider such systems \emph{insolvent-by-design} as they give away part of the ``seigniorage'' income from purchases of newly minted stablecoins (typically via buybacks of ``equity'' tokens), unlike a bank that maintains full asset-backing of deposits.
	This structure has contributed to many experienced crises for these coins.
}
Notice here that the primary market mechanism essentially solves the scaling issues that arise in leverage-based stablecoins, like Dai: the stablecoin is always able to meet excess demand by expanding the balance sheet (it does not need to match the demand of other agents in doing so).

The P-AMM can be compared to a variant on a crawling or managed float system for a currency peg. The monetary economics literature on these topics provides a starting point to understand this design.

International monetary economics is concerned with balance of payments crises (i.e., sudden changes in capital flows). With a stablecoin as opposed to a national currency, we're less concerned with money flows in/out of a country's economy. The analog for a stablecoin is with flows in/out of the reserve and the level of economic demand for use of the stablecoin as opposed to demand to speculate on the stablecoin. Further, stablecoin monetary policy is simplified to targeting stability relative to the target as opposed to further targeting growth of a national economy.

Speculative attacks on currency pegs are characterized in the global games literature (e.g., \cite{morris1998unique}). In these models, speculators can coordinate to attack the currency while profiting from bets on currency devaluation. High levels of coordination can force the government to abandon the peg. There is a unique equilibrium in such games, shown in \cite{morris1998unique}, given uncertainty in common knowledge of fundamentals (e.g., faith in government policy, economic demand, and health of reserves), which can lead to speculative attacks even when fundamentals are strong.

The curvature of the P-AMM serves to deter speculative attacks by increasing their cost in several ways. In large outflow settings, the curvature of $f$ can allow short-term (though not necessarily long-term) depreciation from the peg. This can be interpreted as raising interest rates for new stablecoin holders. Akin to zero coupon bonds bought at a discount, buyers expect to redeem for a higher price later. This is supported by the fundamental value of the reserve, which, when healthy, tends to shift the coordination equilibrium to \$1 as outflows equilibrate.\footnote{While certain uncollateralized (or ``implicit collateral'', see \cite{klages2020stablecoins}) stablecoins also propose similar narratives as here, they do so without a fundamental force, such as from a reserve, pushing coordination toward the \$1 equilibrium. Accordingly, one may question whether the stable equilibrium may really be \$0 price in such cases.}
Compared to a typical currency peg, the curvature of the P-AMM forces an attacking speculator to redeem at deteriorating prices throughout the attack, after which the redemption rate can bounce back. As a consequence, the crisis has to be stretched over long periods of time, during which speculators incur the spread loss, to have a permanent effect on the peg and reserve health.
Additionally, the funding rate for a short bet on the stablecoin--a prime profit source for a speculative attack--ought to take into account the transparent shape of the P-AMM and state of the reserve. In settings that are otherwise prime for speculative peg attacks (e.g., when reserve value per stablecoin is much less than \$1), the short funding rate ought to be very high to account for the ease of causing short-term devaluations via the P-AMM shape, which serves to further raise the costs of attack.

Lastly, we contrast with the bank run model in \cite{diamond1983bank}. In that model, the bank serves as insurance for two types of agents: one type who will need to withdraw early and another type who will not, but without knowing which is which ahead-of-time. Given the setup of the model, the bank is often prone to bank runs that depletes the bank's liquid assets. Speculative attacks on a stablecoin can often be viewed in a similar way to a bank run.
In this context, the stablecoin design effectively alters the assumptions of the Diamond-Dybvig model to deter the undesirable bank run equilibrium (see \cite{selgin2020modeling} for further discussion of the following points).
First, the curvature of the P-AMM reduces the redemption rate of large withdrawals. One cause of fragility in the Diamond-Dybvig model is requiring absolute liquidity out of bank deposits. Altering this structure can increase robustness at relatively small costs in terms of stablecoin liquidity.
Second, since a liquid stablecoin is tradeable on secondary markets there is often no need to directly redeem it in the primary market.\footnote{Similarly, digital commercial bank money does not need to be redeemed from the bank to use but can be used as a means of payment directly. A stablecoin on a public interoperable blockchain could be even more flexibly used without requiring redemption.}

Some qualitative results from the literature on speculative attacks on currencies can be applied directly to the P-AMM setting.
In particular, \cite{routledge2021currency} analyzes a game theoretic model that is implicitly similar to the P-AMM setting,
in which a government or currency issuer tries to maintain an optimal exchange rate peg while under-reserved. In this model, the optimal strategy involves the central bank demonstrating commitment to devalue the currency if too many traders demand redemption from reserves, which eliminates the speculative attack equilibrium and stabilizes the exchange rate.

Although \cite{routledge2021currency} is written in the context of cryptocurrency stabilization, the model is most descriptive of traditional currency models, in which the exchange rate is pegged to another currency that is held by the central bank in reserves. The cryptocurrency setting is a little different: (1) reserve assets may not be the currency target, and (2) exchange rate policy needs to be encoded on-chain in smart contracts as opposed to determined on-the-fly by central bank governors. However, after translating reserve asset values into the appropriate numeraire for the peg and adjusting for the fact that these values follow a stochastic process, the P-AMM setting can be adapted to the underlying model in \cite{routledge2021currency}. In this case, the P-AMM shape when under-reserved provides the means of committing to stablecoin devaluation in the event of speculative attacks, which can eliminate the speculative attack equilibrium.

\section{Simplified, Discrete Redemption Curve}\label{apx:discrete-decay}

In this section, we provide the formal details for the simplified price decay function from Section~\ref{sec:discrete-decay} where prices decay discretely from 1 to the reserve ratio. This discrete price decay form leads to much simpler calculations than the linear one outlined in the main text of this paper. However, it has the disadvantage that price decay happens abruptly, which creates risk for arbitrageurs and may also create an opportunity for speculative attacks.

We first describe the general design and the computation of dynamic parameters (similar to Section~\ref{sec:calc-params}) and we then discuss the implementation of the mechanism (similar to Section~\ref{sec:implementation}).

\subsection{Design and Dynamic Parameters}\label{apx:discrete-decay:calc-params}

Recall that here, we use
\[
	p(x; b_a, y_a) :=
	\begin{cases}
		1 &\text{if } x \ge x_U
		\\
		r_U := p(x_U; b_a, y_a) &\text{if } x < x_U,
	\end{cases}
\]
where $x_U$ is a dynamic parameter and $r_U$ is chosen such that $r_U=r(x_U)$ based on the other parameters. The parameters fulfill the same role as in our linear price decay mechanism where, however, the linear segment (and its parameter $\alpha$) as well as the lower cut-off point $x_L$ are missing.

\begin{lemma}
	\label{lem:discrete-b}
	Let $b_a < y_a$. Then the following hold:
	\begin{enumerate}
		\item We have
		\[
			b(x) =
			\begin{cases}
				b_a - x &\text{if } x \ge x_U
				\\
				b_a - x_U - r_U \cdot (x - x_U) &\text{if } x \le x_U
			\end{cases}
		\]
		\item If $x_U < b_a$, then $r_U = \frac {b_a - x_U} {y_a - x_U}$ and the reserve is never exhausted. Otherwise, the reserve is exhausted at a point $x < y_a$.
		\item In this case, we have $r_U \ge \btheta$ iff $b_a / y_a \ge \btheta$ and $x_U \le \frac {b_a - \btheta y_a} \theta$, where $\theta = 1-\btheta$.
	\end{enumerate}
\end{lemma}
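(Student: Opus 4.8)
The plan is to establish the three parts in order, each by directly unwinding the identity $b(x) = b_a - \int_0^x p(x';b_a,y_a)\,\de x'$ and doing elementary algebra; there is no deep step here, only careful bookkeeping. For part~1 I would integrate the piecewise-constant pricing function, using that the price is $1$ on $[0,x_U]$ and $r_U$ afterwards. On $[0,x_U]$ the integrand is $1$, which gives $b(x) = b_a - x$; for $x \ge x_U$ I split $\int_0^x p = \int_0^{x_U} p + \int_{x_U}^x p$, where the first term contributes $x_U$ and the second $r_U(x-x_U)$, so $b(x) = b_a - x_U - r_U(x-x_U)$. Both expressions give $b_a - x_U$ at $x = x_U$, so $b$ is continuous and the case split is consistent; note that $b(x_U) = b_a - x_U$ uses only the price-$1$ segment, so the defining condition $r_U = r(x_U)$ is not circular.

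For part~2, first evaluate at $x = x_U$: $b(x_U) = b_a - x_U$ and $y(x_U) = y_a - x_U$, hence $r_U = r(x_U) = (b_a - x_U)/(y_a - x_U)$, the denominator being positive since $x_U < b_a < y_a$ under the stated hypotheses. For the exhaustion dichotomy I examine where $b(x) = 0$. If $x_U < b_a$, then on $[x_U, y_a]$ the function $b(x) = (b_a - x_U) - r_U(x - x_U)$ is affine and strictly decreasing, equals $b_a - x_U > 0$ at $x_U$, and --- substituting the value of $r_U$ --- vanishes exactly at $x = y_a$; together with $b(x) = b_a - x \ge b_a - x_U > 0$ on $[0,x_U]$, this gives $b(x) > 0$ for every $x < y_a$, i.e.\ the reserve is never exhausted before the whole supply is redeemed. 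In the complementary case $x_U \ge b_a$, the reserve runs out while the price is still $1$: $b(x) = b_a - x$ reaches $0$ at $x = b_a \le x_U$, and $b_a < y_a$ by hypothesis, so exhaustion occurs at a point strictly below $y_a$.

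For part~3 I stay in the case $x_U < b_a$ of part~2, where $r_U = (b_a - x_U)/(y_a - x_U)$ with $y_a - x_U > 0$. Clearing the denominator, $r_U \ge \btheta \iff b_a - x_U \ge \btheta(y_a - x_U) \iff b_a - \btheta y_a \ge (1-\btheta)\,x_U = \theta x_U$. Since $x_U \ge 0$ and $\theta = 1 - \btheta > 0$, the last inequality forces $b_a - \btheta y_a \ge 0$, i.e.\ $b_a/y_a \ge \btheta$; and conditional on that it is equivalent to $x_U \le (b_a - \btheta y_a)/\theta$. Conversely, $b_a/y_a \ge \btheta$ together with $x_U \le (b_a - \btheta y_a)/\theta$ yields $b_a - \btheta y_a \ge \theta x_U$ and hence $r_U \ge \btheta$. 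This is exactly the claimed equivalence.

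The only points that need attention --- which I would view as the (mild) main obstacle --- are keeping the sign of $y_a - x_U$ under control throughout, which is where the hypothesis $b_a < y_a$ and the sub-case assumptions enter, and cleanly splitting the exhaustion statement along $x_U < b_a$ versus $x_U \ge b_a$ so that each branch of $b(\,\cdot\,)$ is evaluated only on the interval where that branch is valid. No genuinely hard estimate is involved.
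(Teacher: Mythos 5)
Your proposal is correct and follows essentially the same route as the paper's proof: part~1 by integrating the piecewise-constant price, part~2 by evaluating $b$ and $y$ at $x_U$ and splitting on $x_U < b_a$ versus $x_U \ge b_a$, and part~3 by the algebraic equivalence $r_U \ge \btheta \Iff b_a - \btheta y_a \ge \theta x_U$. Your version is merely more explicit (e.g.\ showing the affine branch of $b$ vanishes exactly at $y_a$, where the paper just notes that redeeming at the reserve ratio never exhausts the reserve), and you correctly read the case conditions in the intended orientation.
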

\begin{proof}
	\leavevmode
	\begin{enumerate}
		\item This follows immediately by integration over $p$.
		\item First assume $x_U < b_a$. By part~1 we have $b(x_U) = b_a - x_U$ and further $y(x_U) = y_a - x_U$, and $r_U = r(x_U) = b(x_U) / y(x_U)$. By assumption, $r_U \in (0, 1)$ and, since we redeem at the reserve ratio after the point $x_U$, the reserve is never exhausted.
		Now assume $x_U \ge b_a$. Then, by part~1, we have $b(b_a)=0$ and by assumption, $b_a < y_a$.
		\item By part~2 and simple algebraic transformation, $r_U \ge \btheta$ iff $x_U \le \frac {b_a - \btheta y_a} \theta$. This is only possible if $b_a / y_a \ge \btheta$ since otherwise, the right-hand side is negative.
		\qedhere
	\end{enumerate}
\end{proof}

Let $\bzp$ be the (optional) ceiling value for $x_U$. Then by the previous lemma the maximal $x_U \le \bzp$ that ensures $r_U \ge \btheta$ (if possible, and $r_U$ maximal otherwise) is $x_U = \min(\bzp, \hzp)$, where
\begin{equation}
	\hzp := \max\left(0,\,y_a \frac {r_a - \btheta} {1-\btheta}\right)
	\label{eq:discrete-xu}
\end{equation}
where we recall that $r_a := b_a/y_a$.
Note that we choose $x_U=0$ if $r_a < \btheta$, so that in this case, we redeem at the reserve ratio from the very beginning.

\subsection{Implementation}\label{apx:discrete-decay:implementation}

To implement our mechanism (i.e., to define $\rho(x, b, y)$), we can perform a distinction into different regions like in sections \ref{sec:reconstruction-uniqueness} and \ref{sec:implementation}, though the operation is much simpler here. We only consider two dimensions with two cases each.
\begin{itemize}
	\item Case I, where $\hzp \ge \bzp$, and case~II, where $\hzp \le \bzp$.
	\item Case i, where $x \le x_U$ and case~iii, where $x \ge x_U$. There is no case~ii.
\end{itemize}

For the implementation, we proceed analogously to Section~\ref{sec:implementation}.
Reconstruction of $b_a$
is greatly simplified in the discrete case because the range of possible values is much restricted and, importantly, all relevant equations are linear and do not involve either a square root or a square. First note that, as always, we can easily obtain $y_a = y + x$, so it only remains to reconstruct $b_a$. Also recall that we only need to do this when $1 > r=b/y > \btheta$ since we otherwise always redeem at 1 or $r$, respectively.
It is easy to see that, like in Theorem~\ref{thm:mon-b0}, $b(x)$ (and equivalently $r(x)$) for a fixed $x$ is strictly monotonic in $b_a$ whenever $r > \btheta$ and thus, any such point has a unique $b_a$ associated.
Reconstruction of $b_a$ can now be done using a “trial and error” approach as follows:
\begin{itemize}
	\item (Case i) Let $b_a' = b+x$ and test if $x \le x_U(b_a', y_a)$. In this case, $b_a = b_a'$.
	\item (Case I iii) Otherwise, $b_a = b + \bzp + r\cdot(x-\bzp)$.\footnote{This value is unimportant to calculate redemption amounts, though, and would not have to be computed. This is because, since we're in case iii, the redemption price will always be $r=b/y$ for all $x' \ge x$.}
\end{itemize}
Correctness of the above reconstruction follows from uniqueness of $b_a$ and the fact that the respective conditions hold in the respective cases.
Correctness in case I iii follows by solving the equation for $b(x)$ from Lemma~\ref{lem:discrete-b} for $b_a$.
Note that case II iii does not need to be checked because, by choice of $x_U$, we must have $r_U = \btheta$ in this case.
To compute a redemption amount, in case~i, we have to now compute the dynamic parameter $x_U$ using Equation~\ref{eq:discrete-xu} and then integrate over the redemption curve, just like in the case with linear price decay. In case I iii, this is trivial and we can simply compute $X \cdot r$, where $X$ is the amount of redeemed stablecoins.
It is obvious that the mechanism can be implemented using a constant number of basic arithmetic operations.

\section{Detailed Calculation of Dynamic Parameters}
\label{apx:calc-params}

We first establish how to calculate the dynamic parameters of the redemption curve from the anchor point $(b_a, y_a)$. We do this as a series of technical lemmas that will be used in our later results. We start by showing how to calculate $b(x)$ in the simplified context in which the dynamic parameters are known/fixed. Recall though that, in general, the dynamic parameters are functions of the current state (more precisely, of the anchor point) and the static parameters. We then move on to derive results about how to calculate the dynamic parameters in their general form.
We prove additional technical guarantees, which are not required for the purpose of this exposition, but may be useful when implementing our methods, in Appendix~\ref{apx:sanity-lemmas}.

We now consider how the current state is connected to the anchor point. For $y(x)$, we simply have $y(x) = y_a -x$. When $\alpha, x_U$, and $x_L$ are known and fixed, it is simple to calculate the function $b(x)$. The next proposition specifies how to do this.
To state the proposition, observe that the reserve ratio at $x$ is $r(x) = b(x)/y(x) =b(x)/(y_{a}-x)$. Observe that $r(0)=r_a=b_a/y_a$. Note that $r(y_{a})$ would be ill-defined in this sense since the denominator would be 0. We extend the definition continuously by $r(y_{a}):=\lim_{x\to y_{a}}r(x)$. This will be well-defined for all relevant cases below. 

\begin{proposition}\label{prop:b}
	For fixed $\alpha, x_U, x_L$ with $x_U \le x_L$ we have
	$$b(x)=b_{a}-\int_{0}^{x} p(x')\de x'=\begin{cases}
		b_{a}-x, & x\le x_{U}\\
		b_{a}-x+\frac{\alpha}{2}(x-x_{U})^{2}, & x_{U}\le x\le x_{L}\\
		r_{L}(y_{a}-x), & x_{L}\le x,
	\end{cases}$$
	where $r_L = r(x_L)$.
	$r_L$ can be computed using the second case in the case distinction alone.
\end{proposition}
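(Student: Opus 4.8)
The plan is to verify the claimed formula for $b(x)$ by direct integration of the pricing curve $p$ from \eqref{eq:p_curve}, treating the three regions $x \le x_U$, $x_U \le x \le x_L$, and $x_L \le x$ separately, and then to observe that the third region admits the alternative closed form $r_L(y_a - x)$ because redemption proceeds at the constant price $r_L$ there.

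First I would handle the region $x \le x_U$: here $p(x') = 1$, so $b(x) = b_a - \int_0^x 1 \, \de x' = b_a - x$. Next, for $x_U \le x \le x_L$, I split the integral as $\int_0^x p = \int_0^{x_U} 1 \, \de x' + \int_{x_U}^x \big(1 - \alpha(x' - x_U)\big) \de x'$. The first piece is $x_U$; the second is $(x - x_U) - \frac{\alpha}{2}(x - x_U)^2$. Summing gives $\int_0^x p = x - \frac{\alpha}{2}(x-x_U)^2$, hence $b(x) = b_a - x + \frac{\alpha}{2}(x - x_U)^2$, as claimed. For the third region $x_L \le x$, the key point is that $p(x') = r_L$ is constant for $x' \ge x_L$, so $b(x) = b(x_L) - r_L(x - x_L)$. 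Using the second-region formula at $x = x_L$, $b(x_L) = b_a - x_L + \frac{\alpha}{2}(x_L - x_U)^2$. I would then show this equals $r_L(y_a - x_L)$, which is exactly the statement that $x_L$ is chosen so that $p(x_L) = r(x_L)$, i.e., $r_L = b(x_L)/(y_a - x_L)$ — this is precisely the defining property of $x_L$ from Section~\ref{sec:calc-params} (and is consistent with the explicit formulas given there for $x_L$ and $r_L$, which one can substitute to check the algebraic identity $\frac{\alpha}{2}(y_a - x_U)^2 - \frac{\alpha}{2}(y_a - x_L)^2 = \Delta_a$, equivalently $x_L = y_a - \sqrt{(y_a - x_U)^2 - \frac{2}{\alpha}\Delta_a}$). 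Given $b(x_L) = r_L(y_a - x_L)$, the third-region expression becomes $r_L(y_a - x_L) - r_L(x - x_L) = r_L(y_a - x)$, completing the case distinction.

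Finally, for the claim that $r_L$ can be computed from the second case alone: $r_L = r(x_L) = b(x_L)/(y_a - x_L)$, and $b(x_L)$ is given by the middle branch of the formula (since $x_L$ lies on the boundary of the second region), so $r_L = \big(b_a - x_L + \frac{\alpha}{2}(x_L - x_U)^2\big)/(y_a - x_L)$, or equivalently $r_L = 1 - \alpha(x_L - x_U)$ by evaluating the linear price $p$ at $x_L$; these agree exactly because of the defining relation $p(x_L) = r(x_L)$.

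The proof is essentially routine; the only mildly delicate point — the main "obstacle" — is the consistency check in the third region, i.e., confirming that the integration-based expression $b_a - x_L + \frac{\alpha}{2}(x_L - x_U)^2$ coincides with $r_L(y_a - x_L)$. This is not an independent fact but a direct consequence of how $x_L$ (and hence $r_L$) was defined in Section~\ref{sec:calc-params}, so it suffices to cite that definition rather than re-derive the explicit square-root formula; still, one should make this dependence explicit so the piecewise formula is seen to be well-defined and continuous at $x = x_L$.
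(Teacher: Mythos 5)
Your proof is correct and takes essentially the same route as the paper: integrate the piecewise-linear price region by region and, in the last region, write $b(x)=b(x_L)-r_L(x-x_L)$ and use $b(x_L)=r_L(y_a-x_L)$. The only small remark is that this last identity is immediate from the definition $r_L := r(x_L)=b(x_L)/(y_a-x_L)$, so it holds for any fixed $x_L$ and does not actually require invoking the choice rule $p(x_L)=r(x_L)$ from Section~\ref{sec:calc-params}.
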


\begin{center} \hyperlink{pf:prop:b}{\texttt{[Link to Proof]}} \end{center}

Recall that we have three static parameters
$\balpha\in(0,\infty), \bzp\in[0,\infty]$, and $\btheta\in[0,1]$. $\btheta$ defines a floor on the reserve ratio, if achievable, and $\balpha$ and $\bzp$ are bounds on the respective parameter: we always have $\alpha\ge\balpha$ and $x_{U}\le\bzp$.
Depending on the anchor point $(b_a, y_a)$ and constrained by these parameters, we choose values for the dynamic parameters that determine the curve shape. 
Define an auxiliary function
\begin{equation}
	\rhou(x):=\begin{cases}
	1, & x\le x_{U}\\
	1-\alpha(x-x_{U}), & x\ge x_{U}.
\end{cases}
\label{eq:rhou}
\end{equation}
Then the dynamic parameters $x_U$, $\alpha$, and $x_L$ are chosen according to the following rule.
\begin{itemize}
	\item For given $x_U$ and $\alpha$, $x_L$ is chosen such that $x_L \in (x_U,y_a]$ and $p(x_L; b_a, y_a) = r(x_L ; b_a, y_a)$, where both sides of this equation can be computed based on $p^U$. The values of the remaining parameters $x_U$ and $\alpha$ will be chosen such that such a point exists (see Proposition~\ref{prop:xm} below).
	Note that the case $x_L = y_a$ is not pathological, but, as we will see, it occurs regularly.
	It is easy to see that, by choice of $x_L$, we have $r(x) = p(x) = r_L$ for all $x \ge x_L$, i.e., we redeem at the reserve ratio beyond $x_L$, and this is sustainable.
	
	\item $x_U$ and $\alpha$ are chosen such that $0 \leq x_U \leq \bzp$, $\alpha \geq \balpha$, $x_L$ exists, and $r_L \geq \btheta$, if possible. It is easy to see that this is possible iff $r_a > \btheta$ (one possible choice is $x_U = 0$ and $\alpha \rightarrow \infty$ as $r_a \searrow \btheta$). In the trivial case where this does not hold, we set the marginal redemption price to constant $r_a$.
	
	\item Among the admissible combinations of $x_U$ and $\alpha$, the parameter values are chosen such that first, $\alpha$ is minimized among all possible $\alpha$ values and, second, $x_U$ is maximized given this $\alpha$. This implies that, if there are admissible solutions and $\alpha < \balpha$, then we must have $x_U = 0$. This follows from the fact that $r_L$ increases if we reduce $x_U$.
\end{itemize}

The rule by which $x_u$ and $\alpha$ are chosen in the third step encodes that, when confronted with a trade-off between a not-too-steep price decay and a prolonged support of the exact peg of \$1, our mechanism prioritizes the former over the latter.
We argue that this is the appropriate trade-off in the interest of market stability for the reasons outlined in Section~\ref{sec:desiderata}.
Note that the mechanism only applies trade-off only applies within the limits set by the $\balpha$ and $\bzp$ static parameters.

Going forward, we will focus on the non-trivial case where $b_a < y_a$ (otherwise $b > y$ at any point and the marginal price is constant at 1), $r_a >\btheta$ (otherwise the marginal price is constant at $r_a = r(x)\;\forall x$), and, where $x_U$ occurs as a parameter, we assume $y_a \ge x_U$ (otherwise the system would be configured to redeem at price 1 for all $x$ and in particular will run out of reserves at some point since $b_a < y_a$).

The following proposition shows how to calculate $x_L$ based on $x_U$ and $\alpha$ and in what settings such a point exists. This result also shows that a key tenet of the primary market design will be choosing parameters such that $x_L$ exists as otherwise the reserve can be exhausted.
To make our formulas more compact, define the following shorthands:
let $\Delta_{a}=y_{a}-b_{a}$ and $y_{U}=y_{a}-x_{U}$; let $b_{L}=b(x_{L})$.

\begin{proposition}\label{prop:xm}
	For given fixed parameters $x_U, \alpha$, the following hold.
	\begin{enumerate}
		\item There exists a point $x_{L}\in[0,y_{a}]$ where $p(x_{L})=r(x_{L})$ iff
		\begin{equation}\label{eq:xm-exists}
			\alpha\ge2\frac{y_{a}-b_{a}}{\left(y_{a}-x_{U}\right)^{2}}.
		\end{equation}
		
		\item If (\ref{eq:xm-exists}) does not hold and $x_{L}:=\infty$ in the definition of $p$,  then the reserve is exhausted before all tokens have been redeemed. Formally, in this case, there is $x\in[0,y_{a})$ such that $b(x)=0$.
		
		\item \label{itm:xm:fmls} If (\ref{eq:xm-exists}) holds, then $x_{L}$ is unique and
		\begin{align*}
			x_{L}	&= y_{a}-\sqrt{(y_{a}-x_{U})^{2}-\frac{2}{\alpha}(y_{a}-b_{a})} \\
			r_{L}	&= 1-\alpha(x_{L}-x_{U}) \\
			b_{L}	&= (1-\alpha(x_{L}-x_{U}))(y_{a}-x_{L})
		\end{align*}
	\end{enumerate}
\end{proposition}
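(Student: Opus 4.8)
The plan is to treat the equation defining $x_L$ as an equation in a single unknown, reduce it to a quadratic, and then read off existence, uniqueness, and the closed forms. Throughout, recall that ``$p(x_L)$'' and ``$r(x_L)$'' in the definition of $x_L$ are to be evaluated from the unclipped curve $\rhou$ of \eqref{eq:rhou}, i.e., from $b(x) = b_a - x + \frac{\alpha}{2}(x-x_U)^2$ for all $x \ge x_U$ (Proposition~\ref{prop:b}); this is what makes the otherwise circular-looking definition well posed. First I would dispose of the region $x \le x_U$: there $\rhou(x) = 1$ while $r(x) = (b_a - x)/(y_a - x) < 1$ because $b_a < y_a$, so no crossing can occur for $x \le x_U$ and it suffices to look for $x_L$ in $(x_U, y_a]$.

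For $x_L \in (x_U, y_a)$, write $\Delta_a = y_a - b_a > 0$, $y_U = y_a - x_U$, and $w = y_a - x_L \in (0, y_U)$, so that $x_L - x_U = y_U - w$ and $b_a - x_L = w - \Delta_a$. Substituting into $1 - \alpha(x_L - x_U) = b(x_L)/(y_a - x_L)$, multiplying through by $w$, and cancelling, the equation becomes
\[
\Delta_a = \alpha\Big( (y_U - w)\, w + \tfrac{1}{2}(y_U - w)^2\Big) = \tfrac{\alpha}{2}\,\big(y_U^2 - w^2\big),
\]
the second equality obtained by factoring out $(y_U - w)$ and using $(y_U - w)(y_U + w) = y_U^2 - w^2$. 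Hence the crossing equation is equivalent to $w^2 = y_U^2 - \tfrac{2}{\alpha}\Delta_a$. The endpoint case $x_L = y_a$ (i.e., $w = 0$) must be checked separately, since one cannot divide by $w$ there: using the limit definition $r(y_a) = \lim_{x \to y_a} r(x)$ (one application of l'Hopital's rule) gives $r(y_a) = \rhou(y_a)$ exactly when $b(y_a) = 0$, i.e., when $\tfrac{\alpha}{2} y_U^2 = \Delta_a$, consistent with $w = 0$ in the reduced equation.

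From $w^2 = y_U^2 - \tfrac{2}{\alpha}\Delta_a$ everything follows. A solution $w \ge 0$ exists iff the right-hand side is non-negative, i.e., iff $\alpha \ge 2\Delta_a / y_U^2 = 2(y_a - b_a)/(y_a - x_U)^2$, which is \eqref{eq:xm-exists}; this is part~1. When \eqref{eq:xm-exists} holds, the unique non-negative root is $w = \sqrt{y_U^2 - \tfrac{2}{\alpha}\Delta_a}$ (the negative root would force $x_L > y_a$), and since $\Delta_a > 0$ we have $0 \le w < y_U$, hence $x_U < x_L \le y_a$; substituting $x_L = y_a - w$ gives the stated formula for $x_L$, after which $r_L = \rhou(x_L) = 1 - \alpha(x_L - x_U)$ and $b_L = r_L(y_a - x_L)$ follow from the definition of the crossing point --- this is part~3, with uniqueness in $[0, y_a]$ coming from the analysis above together with the absence of a crossing for $x \le x_U$. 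For part~2, if \eqref{eq:xm-exists} fails then $\tfrac{\alpha}{2} y_U^2 < \Delta_a$, so $b(y_a) = b_a - y_a + \tfrac{\alpha}{2} y_U^2 = -\Delta_a + \tfrac{\alpha}{2} y_U^2 < 0$, whereas $b(0) = b_a > 0$; since $b$ is continuous (an integral of the continuous $\rhou$), the intermediate value theorem yields $x^* \in (0, y_a)$ with $b(x^*) = 0$.

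The calculations are routine; the two points requiring real care are (i) reading the definition of $x_L$ non-circularly by evaluating both sides via $\rhou$, and (ii) the degenerate endpoint $x_L = y_a$, where division by $w$ is illegal and one must instead invoke the limit definition of $r(y_a)$. One should also keep in mind the standing assumptions $0 < b_a < y_a$, $r_a > \btheta$, and $y_a \ge x_U$; the borderline case $y_a = x_U$ is degenerate --- the curve is then flat at $1$ and, since $b_a < y_a$, the reserve is certainly exhausted --- and is subsumed under part~2.
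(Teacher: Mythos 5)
Your proof is correct and follows essentially the same route as the paper's: reduce the crossing condition on the unclipped curve to a quadratic (your substitution $w=y_a-x_L$ is just a cosmetic variant of the paper's $x'=x-x_U$), obtain existence iff the discriminant condition \eqref{eq:xm-exists} holds, discard the root exceeding $y_a$ for uniqueness, handle the endpoint $x_L=y_a$ via l'H\^opital, and use continuity of $b$ for part~2. Your explicit exclusion of crossings on $[0,x_U]$ and the remark on the degenerate case $x_U=y_a$ are minor extra care, not a different argument.
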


\begin{center} \hyperlink{pf:prop:xm}{\texttt{[Link to Proof]}} \end{center}

\begin{remark}
	From the previous proposition, we easily receive an analytical expression for $r(x)=b(x)/y(x)=b(x)/(y_{a}-x)$. Observe that $b(x)$ and $r(x)$ are continuous functions of $x$. Observe in particular that, for $x\ge x_{L}$, we have $r(x)=r_{L}(y_{a}-x)/(y_{a}-x)=r_{L}= p(x)$. Thus, after $x_{L}$, the reserve ratio remains constant and the marginal redemption price is the reserve ratio.
\end{remark}


We now formalize the rule by which $\alpha$ and $x_U$ are chosen.
Fix $y_{a}$, $b_{a}$, and $\btheta$.
We call a pair $(\alpha,x_{U})$ \emph{admissible} if $x_{L}$ exists and $r_{L}\ge\btheta$.
We call $x_{U}$ \emph{admissible for} $\alpha$ if $(\alpha,x_{U})$ is admissible and we call $\alpha$ \emph{admissible }if there exists some $x_{U}$ such that $(\alpha,x_{U})$ is admissible.
Note that the set of $x_{U}$ admissible for $\alpha$ always form a closed interval $[0,\hzp(\alpha)]$ and the set of admissible $\alpha$ is also a closed interval $[0, \halpha]$.
This follows from monotonicity in (\ref{eq:xm-exists}), the fact that $r_{L}$ is monotonically decreasing in $x_{U}$ (as can be seen from the formulas in Proposition~\ref{prop:xm}), and closedness.
Define $\halpha$ and $\hzp(\alpha)$ as the interval bounds indicated above. In words, $\halpha$ is the minimum $\alpha$, disregarding $\balpha$, such that $(\alpha, x_U=0)$ is admissible and $\hzp(\alpha)$ is the maximum $x_U$, disregarding $\bzp$, such that $(\alpha, x_U)$ is admissible.
By Proposition~\ref{prop:xm} we can see that $\halpha$ always exists (if $b_{a}/y_{a}>\btheta$) and is positive and for $\alpha \ge \halpha$, $\hzp(\alpha)$ always exists.

We now choose the dynamic parameters $\alpha$ and $x_U$ as follows:
\begin{enumerate}
	\item First let $\alpha=\max(\halpha,\balpha)$.
	\item Then let $x_{U}=\min(\hzp(\alpha),\bzp)$.
\end{enumerate}

Note that the upper bound $\bzp$ is essentially optional for our construction; we can choose $\bzp=\infty$ (or $\bzp=y_{a}$) to deactivate it.
In this case, we always have $x_{L}=y_{a}$ if the $\btheta$ bound on the reserve ratio is not binding.\footnote{This will be captured formally as cases II h and III H below.}
The lower bound $\balpha$ is also optional and can be deactivated by setting $\balpha=0$. Note, however, that for $\balpha=0$, we will \emph{always} receive $x_{U}=0$. This is because then $\alpha=\halpha$ and it is easy to see that $\hzp(\halpha)=0$ (otherwise, we could have chosen $\halpha$ smaller by strict monotonicity; see Proposition~\ref{prop:xm}).
The following lemma will help us in our construction.
Let $\itheta = 1-\btheta$.

\begin{lemma}
	\label{lem:thetafloor-prep}If (\ref{eq:xm-exists})
	holds, then $r_{L}\ge\btheta$ iff
	\begin{alignat}{4}
		&  & \alpha(y_{a}-x_{U}) & \,\le\itheta\tag{TH}\label{eq:thetafloor-high}\\
		\text{or} &\qquad  & \alpha(b_{a}-\btheta y_{a})-\alpha\itheta x_{U}-\frac{1}{2}\itheta^{2} & \,\ge0.\tag{TL}\label{eq:thetafloor-low}
	\end{alignat}
\end{lemma}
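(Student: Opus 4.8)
The plan is to start from the explicit formula for $r_L$ given in Proposition~\ref{prop:xm}, namely $r_L = 1 - \alpha(x_L - x_U)$ with $x_L = y_a - \sqrt{(y_a-x_U)^2 - \tfrac{2}{\alpha}\Delta_a}$, and to translate the inequality $r_L \ge \btheta$ into a statement about the quantity under the square root. Writing $\itheta = 1-\btheta$ and $y_U = y_a - x_U$, we have $r_L \ge \btheta$ iff $\alpha(x_L - x_U) \le \itheta$, i.e. iff $\alpha(y_a - x_L) \ge \alpha y_U - \itheta$, i.e. iff $\alpha\sqrt{y_U^2 - \tfrac{2}{\alpha}\Delta_a} \ge \alpha y_U - \itheta$. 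The right-hand side $\alpha y_U - \itheta$ can have either sign, and this is exactly the dichotomy that produces the two alternatives (TH) and (TL).

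First I would handle the case $\alpha y_U - \itheta \le 0$, i.e. $\alpha(y_a - x_U) \le \itheta$: here the right-hand side of the square-root inequality is nonpositive while the left-hand side is nonnegative (the radicand is $\ge 0$ by hypothesis~\eqref{eq:xm-exists}), so the inequality holds automatically. This is precisely alternative (TH), so in this regime $r_L \ge \btheta$ holds unconditionally. Second, in the complementary case $\alpha y_U - \itheta > 0$, both sides of $\alpha\sqrt{y_U^2 - \tfrac{2}{\alpha}\Delta_a} \ge \alpha y_U - \itheta$ are nonnegative, so I may square both sides equivalently. Squaring gives $\alpha^2\big(y_U^2 - \tfrac{2}{\alpha}\Delta_a\big) \ge \alpha^2 y_U^2 - 2\alpha y_U \itheta + \itheta^2$; the $\alpha^2 y_U^2$ terms cancel, leaving $-2\alpha\Delta_a \ge -2\alpha y_U \itheta + \itheta^2$, i.e. $2\alpha y_U \itheta - 2\alpha \Delta_a - \itheta^2 \ge 0$. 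Dividing by $2$ and substituting $y_U = y_a - x_U$ and $\Delta_a = y_a - b_a$, the coefficient of the non-$x_U$ terms is $\alpha(y_a\itheta - \Delta_a) - \tfrac12\itheta^2 = \alpha(y_a(1-\btheta) - y_a + b_a) - \tfrac12\itheta^2 = \alpha(b_a - \btheta y_a) - \tfrac12\itheta^2$, which together with the $-\alpha\itheta x_U$ term is exactly the left-hand side of alternative (TL). So in this regime $r_L \ge \btheta$ iff (TL) holds.

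Combining the two regimes: if (TH) holds we are done; if (TH) fails then we are in the second regime and $r_L \ge \btheta$ iff (TL) holds; hence $r_L \ge \btheta$ iff (TH) or (TL), as claimed. The only mild subtlety — and the step I would be most careful about — is the sign bookkeeping when squaring: I must verify that in the regime where I square, both sides are genuinely nonnegative (the left by \eqref{eq:xm-exists}, the right by the case assumption), and conversely that when (TH) holds the implication to $r_L \ge \btheta$ needs no squaring at all. Everything else is routine algebraic simplification, and I would relegate the bulk of it to a short displayed computation rather than spelling out each line.
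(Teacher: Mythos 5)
Your proposal is correct and follows essentially the same route as the paper: both rewrite $r_L \ge \btheta$ via Proposition~\ref{prop:xm} as $\alpha\sqrt{y_U^2 - \tfrac{2}{\alpha}\Delta_a} \ge \alpha y_U - \itheta$, split on the sign of the right-hand side (the nonpositive case being exactly \eqref{eq:thetafloor-high}), and square in the remaining case to obtain \eqref{eq:thetafloor-low} after the same algebraic simplification. Your explicit sign bookkeeping for the squaring step is precisely the care the paper leaves implicit.
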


\begin{center} \hyperlink{pf:lem:thetafloor-prep}{\texttt{[Link to Proof]}} \end{center}

We can interpret the distinction between the conditions \eqref{eq:thetafloor-high} and \eqref{eq:thetafloor-low} in terms of whether or not the reserve ratio floor $\btheta$ is binding.
Observe first that (\ref{eq:thetafloor-high}) is equivalent to $1-\alpha(y_{a}-x_{U})\ge\btheta$, and this implies that
$p(x)\ge\btheta$ for all $x$ and independently of $x_L$, including for the case $x_L=y_a$, where the redemption curve $p$ has no final constant segment. In this case we say that $\btheta$ is not binding. If \eqref{eq:thetafloor-high} does not hold, then we need to choose $x_L < y_a$ since otherwise the reserve ratio would fall short of the floor $\btheta$; we say that $\btheta$ is binding.
Conceptually, if at least one of \eqref{eq:thetafloor-high} or \eqref{eq:thetafloor-low} holds, the redemption price will always be at least $\bar\theta$, conditional on the assumptions at the beginning of this section (in particular, the system starts with enough reserve capitalization). This is desirable as anyone can understand this bounding (as well as other PAMM mechanics) ahead-of-time.

Armed with Lemma~\ref{lem:thetafloor-prep}, we can now construct the values $\halpha$ and $\hzp(\alpha)$ for any $\alpha$. We begin with $\halpha$. Recall that $\itheta=1-\btheta$.
\begin{proposition}
	\label{prop:alpha} We have
	\[
	\halpha=\begin{cases}
		\halpha_{H}:=2\frac{1-r_{a}}{y_{a}}, & r_{a}\ge\frac{1+\btheta}{2}\\
		\halpha_{L}:=\frac{1}{2}\frac{\itheta^{2}}{b_{a}-\btheta y_{a}}, & r_{a}\le\frac{1+\btheta}{2}
	\end{cases}
	\]
	and $\halpha$ is continuous in the other parameters.
\end{proposition}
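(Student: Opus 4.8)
The plan is to derive $\halpha$ directly from the characterization of admissibility via Lemma~\ref{lem:thetafloor-prep}. Recall that $\halpha$ is defined as the largest $\alpha$ such that $(\alpha, x_U = 0)$ is admissible, i.e., such that $x_L$ exists \emph{and} $r_L \ge \btheta$. Setting $x_U = 0$ in the relevant inequalities simplifies everything: the existence condition \eqref{eq:xm-exists} becomes $\alpha \ge 2\Delta_a / y_a^2$, and the reserve-ratio-floor condition becomes ``\eqref{eq:thetafloor-high} or \eqref{eq:thetafloor-low}'' evaluated at $x_U = 0$, namely $\alpha y_a \le \itheta$ (call it TH$_0$) or $\alpha(b_a - \btheta y_a) \ge \tfrac12 \itheta^2$ (call it TL$_0$). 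So $(\alpha, 0)$ is admissible iff $\alpha \ge 2\Delta_a/y_a^2$ and (TH$_0$ or TL$_0$).

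\textbf{Key steps.} First I would note that TL$_0$ is a \emph{lower} bound on $\alpha$ (since $b_a - \btheta y_a > 0$ by $r_a > \btheta$), namely $\alpha \ge \halpha_L := \tfrac12 \itheta^2 / (b_a - \btheta y_a)$, while TH$_0$ is an \emph{upper} bound $\alpha \le \itheta / y_a$. So the admissible set for $\alpha$ (at $x_U=0$) is
\[
\{\alpha \ge 2\Delta_a/y_a^2\} \cap \big( [\,\halpha_L,\infty)\ \cup\ (0,\itheta/y_a\,]\big).
\]
Since we want the supremum $\halpha$ of this set, and the $[\halpha_L, \infty)$ branch is unbounded above, the picture hinges on whether $\halpha_L$ is \emph{reachable}, i.e., whether $\halpha_L \ge 2\Delta_a / y_a^2$ so that the lower branch is actually nonempty within the existence region. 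Second, I would carry out the algebra showing that $\halpha_L \ge 2\Delta_a/y_a^2 \iff r_a \le \tfrac{1+\btheta}{2}$: expanding, $\tfrac12\itheta^2 y_a^2 \ge 2(y_a - b_a)(b_a - \btheta y_a)$, and substituting $b_a = r_a y_a$, $\itheta = 1-\btheta$, this reduces (after dividing by $y_a^2$) to $\tfrac12(1-\btheta)^2 \ge 2(1-r_a)(r_a - \btheta)$, i.e., $(1-\btheta)^2 \ge 4(1-r_a)(r_a-\btheta)$; the right side, as a quadratic in $r_a$ on $[\btheta, 1]$, is maximized at $r_a = \tfrac{1+\btheta}{2}$ where it equals $(1-\btheta)^2$, giving equality there, and the inequality flips exactly as $r_a$ crosses $\tfrac{1+\btheta}{2}$. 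So:
\begin{itemize}
\item When $r_a \le \tfrac{1+\btheta}{2}$: $\halpha_L$ lies in the existence region, the $[\halpha_L,\infty)$ branch is nonempty and unbounded, so $\halpha = +\infty$?? --- no: here I must be careful. $\halpha$ is defined as the \emph{right endpoint} of the admissible interval $[0,\halpha]$, but the set above need not be an interval containing $0$. Let me re-examine: the text says ``the set of admissible $\alpha$ is also a closed interval $[0,\halpha]$'', which means admissibility of $\alpha$ here is monotone \emph{downward} in $\alpha$ only in the existence-constrained sense. Actually the correct reading: increasing $\alpha$ always protects the reserve ($r_L$ increases in $\alpha$ per the footnote after the $x_L$ bullet), so once $(\alpha_0, 0)$ is admissible, every $\alpha \ge \alpha_0$ is too --- wait, that would make the admissible set an \emph{upper} interval. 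Re-reading: ``$\halpha$ is the minimum $\alpha$, disregarding $\balpha$, such that $(\alpha, x_U=0)$ is admissible'' --- so the admissible-$\alpha$ set is $[\halpha, \infty)$ and $\halpha$ is its \emph{left} endpoint (the ``$[0,\halpha]$'' phrasing in the excerpt must be a typo for the complement or for a different parametrization). Good --- that resolves the confusion: $\halpha = $ smallest $\alpha$ making $(\alpha,0)$ admissible.
\end{itemize}
So I would recompute: $(\alpha,0)$ admissible $\iff \alpha \ge 2\Delta_a/y_a^2$ and (TH$_0$ or TL$_0$). The smallest such $\alpha$: if $2\Delta_a/y_a^2 \le \itheta/y_a$ (equivalently $2\Delta_a \le \itheta y_a$, i.e. $2(1-r_a) \le 1-\btheta$, i.e. $r_a \ge \tfrac{1+\btheta}{2}$), then $\alpha = 2\Delta_a/y_a^2$ already satisfies TH$_0$, so $\halpha = 2\Delta_a/y_a^2 = 2(1-r_a)/y_a =: \halpha_H$. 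Otherwise ($r_a \le \tfrac{1+\btheta}{2}$), TH$_0$ is incompatible with existence, so we need TL$_0$, forcing $\alpha \ge \max(2\Delta_a/y_a^2,\ \halpha_L)$; and by the quadratic computation above, $r_a \le \tfrac{1+\btheta}{2}$ is exactly the regime where $\halpha_L \ge 2\Delta_a/y_a^2$, so $\halpha = \halpha_L$. This matches the claimed formula.

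\textbf{Main obstacle and continuity.} The delicate part is pinning down the exact direction of monotonicity (resolving the ``$[0,\halpha]$'' vs. ``$[\halpha,\infty)$'' ambiguity) and verifying that the two cases glue correctly --- i.e., that at the threshold $r_a = \tfrac{1+\btheta}{2}$ one has $\halpha_H = \halpha_L$, which is immediate since there $2\Delta_a/y_a^2 = \itheta/y_a$ and simultaneously $\halpha_L = 2\Delta_a/y_a^2$; this both confirms well-definedness on the overlap and gives continuity of $\halpha$ at the threshold. Continuity elsewhere is clear since $\halpha_H$ and $\halpha_L$ are each rational functions of $b_a, y_a, \btheta$ with non-vanishing denominators in their respective regimes ($y_a > 0$; $b_a - \btheta y_a > 0$ since $r_a > \btheta$). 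I would close by remarking that the edge behavior is sensible: as $r_a \searrow \btheta$, $\halpha_L \to \tfrac12\itheta^2 / (b_a - \btheta y_a) \to \infty$, recovering the observation in the text that admissibility near $r_a = \btheta$ requires $\alpha \to \infty$ (with $x_U = 0$).
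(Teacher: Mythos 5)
Your proof is correct and follows essentially the same route as the paper's: reduce to $x_U=0$, note that \eqref{eq:xm-exists} forces $\alpha \ge \halpha_H$, use Lemma~\ref{lem:thetafloor-prep} to split on whether (TH) is compatible with existence (exactly the threshold $r_a = \frac{1+\btheta}{2}$), check that $\halpha_L$ satisfies \eqref{eq:xm-exists} via the inequality $\itheta^2 \ge 4(1-r_a)(r_a-\btheta)$, and get continuity from $\halpha_H = \halpha_L = \itheta/y_a$ at the threshold; your resolution of the min-vs-max ambiguity (the admissible $\alpha$'s form an upper interval, so $\halpha$ is the \emph{minimum} admissible value, consistent with the rule $\alpha = \max(\halpha,\balpha)$) is the intended reading, and the paper's ``$[0,\halpha]$'' phrasing is indeed a slip. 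One small inaccuracy: your claim that $\halpha_L \ge 2\Delta_a/y_a^2$ \emph{iff} $r_a \le \frac{1+\btheta}{2}$ (``the inequality flips exactly as $r_a$ crosses'' the threshold) is false --- by your own maximization observation, $4(1-r_a)(r_a-\btheta) \le \itheta^2$ for \emph{every} $r_a \in (\btheta,1)$, with equality only at the midpoint, so the inequality never flips; this does not damage the argument, since you (like the paper) only use the direction $r_a \le \frac{1+\btheta}{2} \Rightarrow \halpha_L \ge 2\Delta_a/y_a^2$, but the equivalence as stated should be dropped.
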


\begin{center} \hyperlink{pf:prop:alpha}{\texttt{[Link to Proof]}} \end{center}

We continue with an explicit formula for $\hzp(\alpha)$. Note that, due to the way in which we  choose our parameters, we only need to consider $\hzp(\balpha)$. However, no additional effort is required to obtain a formula for general $\alpha$.

\begin{proposition}
	\label{prop:zp}We have
	\[
	\hzp(\alpha)=\begin{cases}
		\hzph:=y_{a}-\sqrt{2\frac{\Delta_{a}}{\alpha}}, & \alpha\Delta_{a}\le\frac{1}{2}\itheta^{2}\\
		\hzpl:=y_{a}-\frac{\Delta_{a}}{\itheta}-\frac{1}{2\alpha}\itheta & \alpha\Delta_{a}\ge\frac{1}{2}\itheta^{2}.
	\end{cases}
	\]
	and $\hzp(\alpha)$ is continuous in $\alpha$ and in the other parameters.
\end{proposition}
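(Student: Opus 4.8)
The plan is to translate the two defining conditions of admissibility — ``$x_{L}$ exists'' and ``$r_{L}\ge\btheta$'' — into explicit bounds on $x_{U}$ and then intersect them. Fix an admissible $\alpha$ and recall $\Delta_{a}=y_{a}-b_{a}>0$ and $\itheta=1-\btheta$. By Proposition~\ref{prop:xm}, $x_{L}$ exists iff $\alpha(y_{a}-x_{U})^{2}\ge2\Delta_{a}$, which (both sides being nonnegative) is equivalent to $x_{U}\le\hzph$. For such $x_{U}$ the quantity $r_{L}$ is well-defined, and Lemma~\ref{lem:thetafloor-prep} gives $r_{L}\ge\btheta$ iff \eqref{eq:thetafloor-high} or \eqref{eq:thetafloor-low} holds. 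Rearranging those (and using $b_{a}-\btheta y_{a}=\itheta y_{a}-\Delta_{a}$), \eqref{eq:thetafloor-high} $\iff x_{U}\ge y_{a}-\itheta/\alpha$ and \eqref{eq:thetafloor-low} $\iff x_{U}\le\hzpl$. Hence $\hzp(\alpha)$ is the largest $x_{U}\in[0,\hzph]$ satisfying ``$x_{U}\le\hzpl$ or $x_{U}\ge y_{a}-\itheta/\alpha$''.

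Two elementary computations drive the case split. First, AM--GM applied to $\Delta_{a}/\itheta$ and $\itheta/(2\alpha)$ yields $\hzpl = y_{a}-\big(\Delta_{a}/\itheta+\itheta/(2\alpha)\big)\le y_{a}-\sqrt{2\Delta_{a}/\alpha}=\hzph$, with equality exactly when $2\alpha\Delta_{a}=\itheta^{2}$. Second, a direct comparison shows $y_{a}-\itheta/\alpha\le\hzph$ iff $\alpha\Delta_{a}\le\tfrac12\itheta^{2}$; equivalently, evaluating the $r_{L}$-formula of Proposition~\ref{prop:xm} at the endpoint $x_{U}=\hzph$ (where $x_{L}=y_{a}$) gives $r_{L}=1-\sqrt{2\alpha\Delta_{a}}$, so the floor is met at that endpoint precisely when $\alpha\Delta_{a}\le\tfrac12\itheta^{2}$.

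In the case $\alpha\Delta_{a}\le\tfrac12\itheta^{2}$ I would show the admissible set is all of $[0,\hzph]$: for $x_{U}\le\hzpl$, \eqref{eq:thetafloor-low} holds; for $\hzpl<x_{U}\le\hzph$, the comparison above gives $y_{a}-\itheta/\alpha\le\hzpl<x_{U}$, so \eqref{eq:thetafloor-high} holds; in either subcase $r_{L}\ge\btheta$. Thus $\hzp(\alpha)=\hzph$, the first branch. In the case $\alpha\Delta_{a}\ge\tfrac12\itheta^{2}$ we have $y_{a}-\itheta/\alpha\ge\hzph$, so \eqref{eq:thetafloor-high} can hold on $[0,\hzph]$ only at $x_{U}=\hzph$ and only in the boundary subcase $2\alpha\Delta_{a}=\itheta^{2}$ (where $\hzph=\hzpl$ anyway); for all other $x_{U}\in[0,\hzph]$ we have $r_{L}\ge\btheta\iff$ \eqref{eq:thetafloor-low} $\iff x_{U}\le\hzpl$, and since $\hzpl\le\hzph$ the admissible set is $[0,\hzpl]$, i.e.\ $\hzp(\alpha)=\hzpl$, the second branch. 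Since $\alpha$ is admissible, $x_{U}=0$ is admissible, which forces the governing bound to be nonnegative, so in each case $[0,\hzp(\alpha)]$ is genuinely the claimed interval.

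Continuity is then immediate: each branch is smooth in $\alpha$ (and in $y_{a},b_{a},\btheta$) on its region, and the two branches agree on the shared boundary $2\alpha\Delta_{a}=\itheta^{2}$ because there $\hzph=\hzpl$, so the pasting lemma applies. I do not expect a genuine obstacle: the statement is a bookkeeping consequence of Proposition~\ref{prop:xm} and Lemma~\ref{lem:thetafloor-prep}. The only non-mechanical ingredients are the AM--GM inequality $\hzpl\le\hzph$ (which both identifies the active branch and supplies continuity at the interface) and the care needed at the degenerate endpoint $x_{U}=\hzph$, where $x_{L}=y_{a}$ and the redemption curve has no terminal constant segment: there $r_{L}$ must be read as the limiting value $1-\sqrt{2\alpha\Delta_{a}}$, and one must check that this still clears the floor $\btheta$ in the first case.
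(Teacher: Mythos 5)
Your argument is correct and follows essentially the same route as the paper's proof: translate existence of $x_L$ (Proposition~\ref{prop:xm}) and the floor condition (Lemma~\ref{lem:thetafloor-prep}) into bounds on $x_U$, split on $\alpha\Delta_a \lessgtr \tfrac12\itheta^2$, verify $\hzpl \le \hzph$ (your AM--GM step is the paper's binomial-square computation), and check continuity at the interface where $\hzph=\hzpl=y_a-\itheta/\alpha$. One small referencing slip: in the first case the inequality $y_a-\itheta/\alpha\le\hzpl$ is not literally ``the comparison above'' (which involved $\hzph$), but it follows from the same one-line computation, being equivalent to $\itheta/(2\alpha)\ge\Delta_a/\itheta$, i.e.\ to the case hypothesis $\alpha\Delta_a\le\tfrac12\itheta^2$.
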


\begin{center} \hyperlink{pf:prop:zp}{\texttt{[Link to Proof]}} \end{center}

The technical results in this section show how the dynamic parameters $\alpha$, $x_U$, and $x_L$ can be calculated from the current state using the anchor point. In the following sections, we will proceed with our analysis with these rules for choosing dynamic parameters as given.

\section{A Formal Treatment of Path Deficiency under Minting and Fees}
\label{apx:path-properties}

\subsection{Extension to Fees and Minting}\label{sec:extend-fees-minting}

The P-AMM in reality will take an extended form of the setup developed thus far. In this extended form, the redemption curve will incorporate a trading fee and there will be a separate minting curve for $x$ that moves in the reverse direction. We will now show how the desired properties --but not path independence directly-- can be retained in this extended form.

This extended form can no longer be modeled by a single dynamical system. Instead, different differential equations describe the effects of increasing $x$ (redemptions) as opposed to decreasing $x$ (minting).
Let $\gamma(x,b,y)\geq 0$ be the trading fee that is imposed on redemptions. And let $\varphi(x,b,y)\geq 1$ be a function describing the marginal price of minting a new stablecoin. Notice that $\varphi(x,b,y) = 1 + \eps$ is such a function for any $\eps>0$.

In the extended form, redemption actions are described by the following slightly altered form of (\ref{eq:odes}):
\begin{equation}\label{eq:redeem-odes-fee}
	\begin{aligned}
		\frac{\de b(x)}{\de x} &= -\rho\big(x, b(x), y(x)\big) + \gamma\big(x, b(x), y(x)\big) \\
		\frac{\de y(x)}{\de x} &= -1.
	\end{aligned}
\end{equation}
And minting actions are described by the different set of differential equations:
\begin{equation}\label{eq:mint-odes}
	\begin{aligned}
		\frac{\de b(x)}{\de x} &= -\varphi\big(x,b(x),y(x)\big) \\
		\frac{\de y(x)}{\de x} &= -1.
	\end{aligned}
\end{equation}

As the extended system evolves differently for different directions of change in $x$, we no longer have path independence. To see this, simply consider a closed path in $x$, which returns to the same starting point in $x$-space, but will often not return to the same starting point in $(x,b,y)$-space.
However, there a generalization of path independence, called \emph{path deficiency}, which retains many of the useful properties we desire.

\subsection{Path Deficiency Properties}

We next show two properties analogous to path deficiency in CFMMs (see \cite{angeris2020improved}).
As the P-AMM is not a CFMM, we approach this slightly differently. For our purposes, we will characterize path deficiency-like results in terms of \emph{reserve ratio curves} that can be encountered along a trading path. These reserve ratio curves are defined in previous sections and visualized in Figure~\ref{fig:rr_curves}.
In particular, these reserve ratio curves are functions arising from our original system (\ref{eq:odes}) that map $x$ to a reserve ratio $r(x)$ parameterized by an anchor point $r_a=b_a/y_a$, where we assume $\bzp$, $\balpha$, and $\btheta$ are fixed. Without loss of generality, we will take $y_a = 1$ so that $r_a = b_a$.

Recall that each current state is associated with a single such reserve ratio curve. While in the case of redemptions without fees, we always remain on this reserve ratio curve, when we add in fees and a separate minting curve, we instead may shift reserve ratio curves as we move along a path in $x$.

We start with the following definitions:
\begin{itemize}
	\item $\mathcal{R}$ is the set of all reserve ratio curves,
	
	\item $\mathfrak{r} \in \mathcal R$ is some initial reserve ratio curve,
	
	\item $\mathcal{C}$ is the set of paths in $[0,1]$, i.e., $\mathcal C = \{ f: [0,1] \rightarrow [0,1] \hspace{0.2cm} \vert \hspace{0.2cm} f \text{ is continuous} \}$,
	
	\item $r_{f,\mathfrak r}: [0,1] \rightarrow [0,\infty)$ is the function returning the reserve ratio at points along the path $f \in\mathcal C$ starting at the initial point $\big(f(0), \mathfrak r(f(0))\big)$ in $(x,r)$-space.
\end{itemize}
Notice that $r_{f,\mathfrak r}$ sweeps away the details of (\ref{eq:redeem-odes-fee}) and (\ref{eq:mint-odes}) but is easy to see is well-defined for a given $f \in \mathcal C$.

Paths in the set $\mathcal C$ are interpreted as paths for the variable $x$. Note that we consider paths for $x$ within $[0,1]$. The upper bound comes from the maximum amount that can be redeemed. It is inherently possible for more supply to be minted than $y_a$, and so the lower bound could conceptually be passed in reality.
It is possible to extend the results by renormalizing the system to $y_a=1$.

The following lemma establishes that the anchor point $r_a$ is weakly increasing along any trading path.

\begin{lemma}\label{lem:path-deficiency1}
	Let $\mathfrak r \in \mathcal R$ such that $\mathfrak r \leq 1$ and $f\in \mathcal C$. Then $r_a\big( f(t), r_{f,\mathfrak r}(t) \big)$ ---the anchor point for each state $(x,r)$ along the path--- is non-decreasing in $t$.
\end{lemma}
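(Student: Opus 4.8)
The plan is to work with the \emph{anchor-point map} $b_a(x,b)$ that sends a state $(x,b,1-x)$ (we normalize $y_a=1$, so $r_a=b_a$) to the unique anchor reserve value consistent with it; this is well-defined by continuity and unique by Theorem~\ref{thm:mon-b0}. Since $r_a = b_a$ under the normalization, it suffices to show that $t\mapsto b_a\big(f(t),b(t)\big)$ is non-decreasing, where $b(t) := r_{f,\mathfrak r}(t)\,(1-f(t))$ is the reserve value carried along the path. First I would record the regularity we need: within each of the regions introduced in Section~\ref{sec:reconstruction-uniqueness}, the equation $b = b(x;b_a)$ is smooth, so $b_a(x,b)$ is there the (smooth) inverse in $b_a$ of a strictly increasing map; hence $b_a(\cdot,\cdot)$ is continuous everywhere and locally Lipschitz, and Theorem~\ref{thm:mon-b0} gives the crucial positivity $\partial_b\,b_a(x,b) = 1/\partial_{b_a}b(x;b_a) > 0$ wherever defined.

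The heart of the argument is a single identity. Along a fee-free redemption trajectory $x\mapsto\big(x,b(x;b_a)\big)$, the anchor value is constant $\equiv b_a$; differentiating $b_a\big(x,b(x;b_a)\big)\equiv b_a$ and using $\tfrac{d}{dx}b(x;b_a) = -p(x;b_a)$ together with the fact that $p(x;b_a)$ equals $\rho(x,b,1-x)$ along the trajectory (this is exactly the defining property of $\rho$, that it reconstructs $p$) yields
\[
\partial_x\,b_a(x,b) \;=\; \rho(x,b,1-x)\;\partial_b\,b_a(x,b).
\]
With this in hand I would split the path $f$ into its monotone portions. On a portion where $x$ increases with $t$, the state evolves by \eqref{eq:redeem-odes-fee}, i.e.\ $\tfrac{db}{dx} = -\rho + \gamma$, and so along the path
\[
\frac{d}{dx}\,b_a\big(x,b(x)\big) \;=\; \partial_x b_a + \partial_b b_a\cdot\Big(-\rho+\gamma\Big) \;=\; \gamma\,\partial_b b_a \;\ge\; 0,
\]
since $\gamma\ge 0$ and $\partial_b b_a>0$; as $x$ increases with $t$, $b_a$ is non-decreasing in $t$ there. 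On a portion where $x$ decreases with $t$ (minting), the state evolves by \eqref{eq:mint-odes}, i.e.\ $\tfrac{db}{dx} = -\varphi$, and
\[
\frac{d}{dx}\,b_a\big(x,b(x)\big) \;=\; \partial_b b_a\cdot\big(\rho - \varphi\big) \;\le\; 0,
\]
because $\rho\le 1\le\varphi$ (redemption prices never exceed the target, minting prices never fall below it) and $\partial_b b_a>0$; as $x$ decreases with $t$, $b_a$ is again non-decreasing in $t$.

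To finish, I would observe that $g(t):=b_a\big(f(t),b(t)\big)$ is a composition of the locally Lipschitz map $b_a$ with the path, hence continuous and absolutely continuous; the two computations above show $g'(t)\ge 0$ at a.e.\ $t$ (at every $t$ where $f$ is locally monotone), so $g$ is non-decreasing. For a general continuous $f$ that is not piecewise monotone, one either reads $r_{f,\mathfrak r}$ off the same two differential relations for BV paths (the computation is unchanged) or approximates $f$ uniformly by piecewise-monotone paths and passes to the limit using continuity of $b_a$. The degenerate regimes $r_a\ge 1$ and $r_a\le\btheta$ (where $p$ is flat at $1$, resp.\ at $r_a$) are covered by the very same inequalities $\rho\le 1\le\varphi$ and $\gamma\ge 0$, and are where the hypothesis $\mathfrak r\le 1$ keeps us within the intended range. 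I expect the main obstacle to be the clean cancellation rather than the analysis: one must see that, precisely because $\rho$ reconstructs $p$, the drift of the anchor value is exactly $\gamma\,\partial_b b_a$ under redemption-with-fee and $(\rho-\varphi)\,\partial_b b_a$ under minting — with the positivity $\partial_b b_a>0$ supplied by Theorem~\ref{thm:mon-b0}; the cross-region Lipschitz bookkeeping and the reduction of an arbitrary continuous path to monotone pieces are routine.
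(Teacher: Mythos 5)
Your proposal is correct and follows essentially the same route as the paper's proof: split the path by the direction of motion in $x$, observe that redemption with $\gamma\ge 0$ and minting with $\varphi\ge 1\ge\rho$ can only do weakly better than the fee-free flow that preserves the current reserve-ratio curve, and invoke Theorem~\ref{thm:mon-b0} to translate this into monotonicity of the anchor. The only difference is presentational: you differentiate the anchor map $b_a(x,b)$ directly via the chain rule (using $\partial_x b_a = \rho\,\partial_b b_a$ and $\partial_b b_a>0$), whereas the paper compares $\de r/\de x$ along the path with the derivative that would keep the state on the current curve $\hat{\mathfrak r}$ in $(x,r)$-space; the regularity bookkeeping you add (local Lipschitzness, a.e.\ differentiability, piecewise-monotone approximation) is fine and is glossed over in the paper as well.
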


\begin{center} \hyperlink{pf:lem-path-deficiency1}{\texttt{[Link to Proof]}} \end{center}

This enables our first path deficiency-like result in the next theorem.
Consider that we start on an initial reserve ratio curve. Moving along this curve describes the behavior of the original path independent system along a trading path, which we proved various desirable properties about in the previous sections.
The reserve ratio curve that we are on --independent of where we are on it-- is one good measure for the health of the system as being on a higher curve is point-wise weakly better than being on a lower curve.
The following theorem establishes that the protocol health is weakly increasing in this way along any trading path.\footnote{
	These results parallel those of path deficiency in CFMMs (see \cite{angeris2020improved}). To draw the parallel further, these and potentially further path deficiency results for P-AMMs may be expressed in terms of \emph{reachable sets of reserve ratio curves}
	$S(\mathfrak{r}) = \Big\{\ \psi \in \mathcal{R} \hspace{0.2cm} \vert \hspace{0.2cm} 
		\big(f(t), r_{f,\mathfrak r}(t)\big) \in \psi \text{ for some } t \in [0,1] \text{ and for some } f \in \mathcal{C} \Big\}$
	that weakly contract along a path.
	To illustrate, Lemma~\ref{lem:path-deficiency1} would express that functions in this reachable set are point-wise lower bounded by $\mathfrak r$, and that reachable sets do not expand along a path.
	This may be useful in expanded contexts, such as involving discrete trades, in which it is not obvious that two valid paths can be concatenated into a single valid path, or settings in which reserve ratio curves are not nicely represented by anchor points.
}

\bgroup
\renewcommand\thetheorem{\ref{thm:path-deficiency1-informal} (formal)}
\begin{theorem}\label{thm:path-deficiency1}
	Let $\mathfrak r \in \mathcal R$ such that $\mathfrak r \leq 1$. Then for all $f \in \mathcal C$ and for all $t \in [0,1]$, we have
	$\mathfrak r\big( f(t) \big) \leq r_{f,\mathfrak r}(t).$
\end{theorem}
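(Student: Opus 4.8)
The plan is to leverage Lemma~\ref{lem:path-deficiency1}, which already establishes that the anchor reserve ratio $r_a\big(f(t), r_{f,\mathfrak r}(t)\big)$ is non-decreasing along the path $f$. The key observation is that the claim $\mathfrak r(f(t)) \le r_{f,\mathfrak r}(t)$ is a purely geometric statement about reserve ratio curves that are nested: if curve $\psi$ starts at a higher anchor point than curve $\mathfrak r$, then $\psi$ lies pointwise above $\mathfrak r$ on all of $[0,1]$. So first I would establish this monotonicity of the family $\{r(\,\cdot\,; r_a)\}_{r_a}$ in the anchor point $r_a$ as a standalone fact: for fixed $x$, the value $r(x; r_a)$ is (weakly) increasing in $r_a$. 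For the non-trivial range $r_a \in (\btheta, 1)$ this is exactly the content of Theorem~\ref{thm:mon-b0} (strict monotonicity of $b(x; b_a)$ in $b_a$, hence of $r(x)$, since $y(x) = y_a - x$ is fixed when $y_a = 1$); for the trivial ranges $r_a \ge 1$ (curve flat at $1$) and $r_a \le \btheta$ (curve flat at $r_a$, or at the exhaustion value) the monotonicity is immediate from the explicit forms, and one checks continuity across the thresholds $\btheta$ and $1$ so the curves remain nested globally.

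Next I would fix $t \in [0,1]$ and set $x^* := f(t)$ and $r^* := r_{f,\mathfrak r}(t)$, the current $(x,r)$-state reached along the path. By construction, the state $(x^*, r^*)$ lies on the reserve ratio curve with anchor point $r_a^* := r_a(x^*, r^*)$, i.e. $r^* = r(x^*; r_a^*)$. By Lemma~\ref{lem:path-deficiency1} applied on $[0,t]$ (using that at $t=0$ the state is $(f(0), \mathfrak r(f(0)))$, which lies on $\mathfrak r$ and hence has anchor point equal to $\mathfrak r$'s own anchor point), we get $r_a^* \ge r_a(\mathfrak r)$, where $r_a(\mathfrak r)$ is the anchor point of the initial curve $\mathfrak r$. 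Now apply the monotonicity fact at the single point $x = x^*$: since $r_a^* \ge r_a(\mathfrak r)$, we have $r(x^*; r_a^*) \ge r(x^*; r_a(\mathfrak r))$, i.e. $r^* \ge \mathfrak r(x^*)$. Unwinding the definitions, this is precisely $r_{f,\mathfrak r}(t) \ge \mathfrak r(f(t))$, which is the desired inequality. The hypothesis $\mathfrak r \le 1$ is used to ensure the initial curve is genuinely a (non-trivially-flat or appropriately flat) reserve ratio curve to which Lemma~\ref{lem:path-deficiency1} applies.

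The main obstacle I anticipate is the careful handling of the trivial regimes and the boundaries between them, so that the "curves are nested" claim holds \emph{globally} on $[0,1]$ and not just within the interior region $r_a \in (\btheta,1)$ where Theorem~\ref{thm:mon-b0} does the heavy lifting. One must verify: (i) a curve started at $r_a \ge 1$ (constant $1$) dominates every curve with $r_a < 1$, which follows since redemption prices $p(x) \le 1$ force $r(x) \le 1$ everywhere on any curve; (ii) a curve started at $r_a \le \btheta$ is dominated by any curve with a higher anchor, using that in this regime the system redeems at the (constant or exhaustion-trending) reserve ratio from the outset and that Proposition~\ref{prop:xm} / the choice of $x_L$ guarantees the reserve ratio never dips below the floor on the higher curve; and (iii) continuity of $r(x;\,\cdot\,)$ across $r_a = \btheta$ and $r_a = 1$ so the pieces glue into a monotone family. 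A secondary subtlety is confirming that the anchor point $r_a(x,r)$ is well-defined and continuous as a function of the current state along the path (so that Lemma~\ref{lem:path-deficiency1}'s conclusion can be invoked at the specific time $t$); this is exactly the uniqueness-of-reconstruction content guaranteed by Theorem~\ref{thm:mon-b0}, so it is available. Once these boundary cases are dispatched, the argument is the short three-line chain above.
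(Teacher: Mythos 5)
Your proposal is correct and follows essentially the same route as the paper's proof: invoke Lemma~\ref{lem:path-deficiency1} to get that the anchor point $r_a$ is non-decreasing along the path, then use the pointwise monotonicity of reserve ratio curves in $r_a$ (which the paper simply asserts, while you justify it via Theorem~\ref{thm:mon-b0} plus the trivial regimes) to conclude $\mathfrak r(f(t)) \le r_{f,\mathfrak r}(t)$. Your extra care with the boundary cases $r_a \le \btheta$ and $r_a \ge 1$ is a sound elaboration of the same argument, not a different one.
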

\egroup

\begin{center} \hyperlink{pf:path-deficiency1}{\texttt{[Link to Proof]}} \end{center}

We now turn to our second path deficiency-like result. The following theorem shows that, in settings with a proportional fee, there is no incentive for a trader to strategically subdivide a net redemption trade into a sequence of different trades.

\bgroup
\renewcommand\thetheorem{\ref{thm:path-deficiency2-informal} (formal)}
\begin{theorem}
	\label{thm:path-deficiency2} Let $\gamma(\cdot) = \eps\rho(\cdot)$ for some $0 \leq \eps < 1$ and let $\mathfrak r \leq 1$ be the initial reserve ratio curve. Then:
	\begin{enumerate}
		\item The redemption system described in (\ref{eq:redeem-odes-fee}) is path independent within a block.
		\item An individual trader in the extended system described in Section~\ref{sec:extend-fees-minting} has no incentive to subdivide a net redemption within a block.
	\end{enumerate}
\end{theorem}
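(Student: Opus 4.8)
\textbf{Part~1} is essentially immediate. With $\gamma=\eps\rho$ the redemption system \eqref{eq:redeem-odes-fee} becomes $\de b/\de x=-(1-\eps)\rho(x,b,y)$, $\de y/\de x=-1$, whose right-hand sides still depend only on the current state $(x,b,y)$ (recall $\rho$ is reconstructible from the state by Section~\ref{sec:reconstruction-uniqueness}). Hence the argument of Theorem~\ref{thm:path_ind} --- invariance of the flow under horizontal translation in $x$ --- applies verbatim, giving $S_{X,Y}=S_{X+Y}$; and since the fee is retained by the reserve, the amount paid on a redemption equals the decrease in reserve value, so these decreases telescope and $P_X+P_{X,Y}=P_{X+Y}$.

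For \textbf{Part~2}, the first step I would take is to record an accounting identity valid for \emph{every} trading path $P$ (any finite sequence of mint and redeem operations as in Section~\ref{sec:extend-fees-minting}): the trader's net value --- redemption proceeds minus minting cost --- equals $b_0-b_1$, the net decrease of the reserve from the start $S_0=(x_0,b_0,y_0)$ to the end of $P$. Indeed, on an infinitesimal redemption the payout $(\rho-\gamma)\,\de x$ equals $-\de b$ (the fee stays in the reserve, cf.\ \eqref{eq:redeem-odes-fee}), and on an infinitesimal mint the payment equals $+\de b$ (the full minting price enters the reserve, cf.\ \eqref{eq:mint-odes}); integrating, $V(P)=b_0-b_1$. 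Next observe that $y_a=y+x$ is invariant under both operations, so once the net redemption $X$ is fixed the endpoint $x_1=x_0+X$, and hence $y_1$, are determined; therefore maximizing the trader's value over all net-$X$ paths is the same as minimizing the final reserve $b_1$ over those paths. The single redemption of $X$ is one such path, and by Part~1 its outcome $b^*:=b(x_0+X)$ and value $V^*:=b_0-b^*$ are well-defined. So it suffices to prove $b_1(P)\ge b^*$ for every net-$X$ path $P$.

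To prove this I would reduce $P$ to the single redemption by repeatedly splicing out net-zero sub-loops, showing each splice weakly decreases the final reserve. If the $x$-trajectory of $P$ is not monotone, it has $s<s'$ with $x(s)=x(s')=:x^*$ and $x$ non-constant on $[s,s']$; over such a loop the total stablecoins redeemed equals the total minted, say $m>0$. Since $\varphi\ge1$ the reserve gains at least $m$ from the loop's mints, and since $\rho\le1$ (hence $(1-\eps)\rho\le1$) it loses at most $m$ from the loop's redemptions, so at level $x^*$ the reserve is weakly larger at time $s'$ than at time $s$. Splicing $[s,s']$ out of $P$ therefore resumes the rest of $P$ from a state with the same $x$ (and $y$) but weakly smaller $b$; by uniqueness of solutions of \eqref{eq:redeem-odes-fee} and \eqref{eq:mint-odes}, trajectories with ordered reserves remain ordered under both the redemption and the minting dynamics (and across the switches between them and across the boundary $b=0$, where $\rho=0$), so the spliced path ends with reserve $\le b_1(P)$ while its net redemption is unchanged. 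A loop can always be chosen so that the splice strictly lowers the number of monotone pieces of the trajectory, so since $P$ has finitely many such pieces, finitely many splices yield a monotone path from $x_0$ to $x_0+X$ with $X\ge0$, i.e.\ a pure redemption, whose final reserve is $b^*$. Hence $b_1(P)\ge b^*$ and $V(P)=b_0-b_1(P)\le V^*$: the net redemption is weakly optimal, so there is no incentive to subdivide.

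The main obstacle is this reduction step. Conceptually, one must notice that the correct benchmark is the \emph{fee-bearing} single redemption, whose final reserve strictly exceeds the no-fee netting benchmark of Theorem~\ref{thm:path-deficiency1} when $\eps>0$, so that theorem cannot be quoted directly; the spread $\varphi\ge1\ge(1-\eps)\rho$ is exactly what makes the loop-removal bound tight. Technically, the comparison step needs \eqref{eq:redeem-odes-fee} and \eqref{eq:mint-odes} to be well-posed --- which holds because $\rho$ is piecewise built from smooth expressions and stays in $[0,1]$, and $\varphi$ is smooth with $\varphi\ge1$ --- together with a little bookkeeping that a loop lowering the reserve gap can always be chosen so that the number of monotone pieces strictly drops (ensuring termination) and that the reserve ordering survives the transitions between the redemption and minting regimes and the exhaustion boundary $b=0$.
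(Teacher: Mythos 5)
Your proposal is correct, and both parts are sound, but the route differs from the paper's in instructive ways. For Part~1 the paper does \emph{not} re-run the concatenation argument directly; it reduces the fee-bearing system to the original no-fee family by observing that $\de b/\de x=-(1-\eps)\rho$ is a constant rescaling, equivalent to a P-AMM whose implicit price target becomes $(1-\eps)$ and whose static parameters map as $\balpha\mapsto(1-\eps)\balpha$, $\btheta\mapsto(1-\eps)\btheta$; since all earlier results were proved for arbitrary hyperparameters, path independence (and, implicitly, the rest of the structural machinery) carries over. Your argument --- that the right-hand side of \eqref{eq:redeem-odes-fee} still depends only on the current state, so the uniqueness/concatenation argument of Theorem~\ref{thm:path_ind} applies verbatim and payouts telescope because the payout equals $-\de b$ --- is more direct and fully sufficient for the statement as written, though it does not yield the paper's stronger by-product that the fee system is again a P-AMM of the same family with shifted parameters. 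For Part~2 the paper's proof is essentially a two-line exchange argument: by Part~1 splitting pure redemptions is irrelevant, and any backtracking is unprofitable because the integrands satisfy $(1-\eps)\rho(\cdot)\le 1\le\varphi(\cdot)$, so ``it is always better to cancel out a mint with a redemption,'' followed by a ``simple induction.'' Your proof formalizes exactly this idea, and supplies the bookkeeping the paper omits: the accounting identity $V(P)=b_0-b_1$ (valid because $y_a=y+x$ is invariant, so the endpoint $(x_1,y_1)$ is pinned down and maximizing value is minimizing the final reserve), the removal of net-zero loops using $\varphi\ge 1\ge(1-\eps)\rho$, the ODE comparison principle to propagate the reserve ordering past the splice, and a termination argument on the number of monotone pieces. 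This is the same key inequality as the paper's, packaged as a rigorous loop-elimination argument rather than an informal induction; the extra care about well-posedness and about the benchmark being the fee-bearing single redemption (not the no-fee netting curve of Theorem~\ref{thm:path-deficiency1}) is a genuine strengthening of the exposition, not a deviation in substance.
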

\egroup

\begin{center} \hyperlink{pf:path-deficiency2}{\texttt{[Link to Proof]}} \end{center}

\section{Details on Efficient Implementation}\label{apx:implementation}\label{apx:implementation}

In this section, we discuss how to implement our mechanism on-chain.
To execute a redemption using our mechanism, we are given the current state $(x, b(x), y(x))$ and we need to find $b_a$ and $y_a$ such that $b(x; b_a, y_a) = b(x)$. We can then simply integrate the curve of marginal redemption rates, starting at $x$, to execute a redemption; note that integration can easily be done analytically using Proposition~\ref{prop:b}.
Recall from Section~\ref{sec:reconstruction-uniqueness} that finding $y_a$ is trivial because $y(x) = y_a - x$, so $y_a = y(x) + x$. The main challenge is therefore to find the appropriate $b_a$. By strict monotonicity (Theorem~\ref{thm:mon-b0}), $b_a$ is unique.
The proofs for this section can be found in Appendix~\ref{apx:proofs-implementation}.
We prove additional technical guarantees, which are not required for the purpose of this exposition, but may be useful when implementing our methods, in Appendix~\ref{apx:sanity-lemmas}.

In this section, we assume that the parameters $\balpha$ and $\bzp$ scale in the anchored amount of outstanding stablecoins $y_a$ (see also Section~\ref{sec:design}).
More in detail, we assume that $\bzp = \bzpo \cdot y_a$ and $\balpha = \balphao / y_a$, where $\bzpo$ and $\balphao$ are parameters that can be chosen arbitrarily and that correspond to the value of $\bzp$ and $\balpha$ for $y_a=1$.
Choosing the parameters like this is intuitive and economically meaningful because $\bzpo$ is now the \emph{share} of “initial” (anchored) outstanding stablecoins that can be redeemed at dollar value and $\balphao$ is minimum decay in reserve ratio relative to the \emph{share} of “initial” outstanding stablecoins that have been redeemed.
Note that the minimum reserve ratio $\btheta$ remains an absolute number and is \emph{not} relative to $y_a$.
As these parameters $\bzpo$ and $\balphao$ are economically meaningful, we expect that they only need to be adjusted rarely, so that governance interventions are minimized.

If we were working under the mild computational constraints of a regular machine, our mechanism could be implemented in a straightforward way using bipartition. Using this method, we can find a $b_a \in [0, y_a]$ such that $b(x; b_a, y_a) = b(x) \pm \eps$, where $\eps$ is a chosen precision. This would require $\log(y_a / \eps)$ evaluations of $b(x; b_a, y_a)$ in the worst case. 
While this is computationally unproblematic for a regular machine, this method is far too computationally expensive to be implemented on-chain: each evaluation of $b(x; b_a, y_a)$ (for a changing value of $b_a$) requires the re-computation of the dynamic parameters $x_U$ and $\alpha$, and it requires computing up to two square roots (for $x_L$ and potentially for $x_U$). Square roots can of course only be computed using an iterative process, which is also computationally relatively expensive.
The whole computation then has to be repeated several times until the interval of possible $b_a$ values is small enough.

Fortunately, we can implement our mechanism in a much more computationally efficient way that is amenable to implementation on-chain.
Our method makes use of precomputed data that only depends on the normalized parameters.
More in detail, we will show the following theorem. We can limit our view to states where the reserve ratio $r$ lies in the interval $(\btheta, 1)$ as our mechanism otherwise simply yields a redemption rate of $r$ or $1$, respectively.

\begin{theorem}
	\label{thm:reconstruction-all}
	Given is a state $(x, b, y)$ such that $1 > b/y > \btheta$. Let $y_a = y + x$.
	We can find the unique $b_a$ such that $b = b(x; b_a, y_a)$ using a constant number of basic arithmetic operations and at most one square root, together with precomputed data that only depends on the parameters $\bzpo$, $\balphao$, and $\btheta$ and is independent of the state $(x, b, y)$.
	Our precomputation can be performed using a constant number of basic arithmetic operations and one square root, and it can be verified using a constant number of arithmetic operations and without taking any square root.
\end{theorem}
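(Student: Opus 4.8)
The plan is to reconstruct $b_a$ by first identifying which of the regions (Case I, II h, II l, III H, III L, crossed with cases i/ii/iii) the state $(x,b,y)$ falls into, and then solving the now-smooth equation $b = b(x; b_a, y_a)$ within that region. By the normalization assumptions $\bzp = \bzpo \cdot y_a$ and $\balpha = \balphao / y_a$, once we set $y_a = y+x$ and rescale to the normalized state $(x/y_a, b/y_a, 1)$, all case boundaries and threshold quantities depend only on $\bzpo, \balphao, \btheta$ — this is what licenses the precomputation claim. So the first task is to enumerate, in normalized coordinates, the finitely many $b_a$-thresholds that separate consecutive regions (e.g. the value of $b_a$ at which $\halpha = \balpha$, i.e.\ $\halpha_H(b_a) = \balphao$ or $\halpha_L(b_a) = \balphao$; the value at which $\hzp(\balpha) = \bzpo$; the value at which $\alpha\Delta_a = \tfrac12\itheta^2$; the value at which $r_a = \tfrac{1+\btheta}{2}$) and show each is given by a closed-form expression in the static parameters, computable with $O(1)$ arithmetic and at most one square root. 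These are the precomputed data.

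Second, I would show that the region of the \emph{current} state $(x,b,y)$ can be determined \emph{without} knowing $b_a$. The key observation (already flagged in Section~\ref{sec:implementation}) is that for the thresholds that matter, one may substitute $(b,y)$ for $(b_a,y_a)$ when testing which side of a boundary we are on. Concretely: by Theorem~\ref{thm:mon-b0}, $b(x;\cdot)$ is strictly increasing in $b_a$, and each region-boundary curve $b_a \mapsto$ (boundary condition) is monotone, so the sign of a comparison like ``$b \lessgtr b(x; b_a^{\text{thr}}, y_a)$'' — where $b_a^{\text{thr}}$ is a precomputed threshold and $b(x;\cdot)$ is evaluated there — tells us on which side of the corresponding region the true $b_a$ lies. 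Evaluating $b(x; b_a^{\text{thr}}, y_a)$ via Proposition~\ref{prop:b} costs $O(1)$ arithmetic and at most one square root (for $x_L$). Corollary~\ref{cor:iii} and Proposition~\ref{prop:HLrm} let us discard cases II iii and III iii outright (there $r(x) \le \btheta$, contradicting $b/y > \btheta$), pruning the case tree. I would walk through the reduced tree, at each internal node performing one such threshold comparison, and argue that a constant number of comparisons suffices because the tree has bounded depth.

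Third, once the region is fixed, all case distinctions inside the definitions of $\alpha$ (Proposition~\ref{prop:alpha}), $x_U$ (Proposition~\ref{prop:zp}), $x_L$ and $r_L$ (Proposition~\ref{prop:xm}) are resolved, so substituting these closed forms into the appropriate branch of Proposition~\ref{prop:b} yields a single equation $b = b(x; b_a, y_a)$ that is smooth in $b_a$. I would verify, region by region, that after clearing the one square root (appearing in $x_L$, or in $\hzph$) this reduces to a polynomial equation of degree at most $2$ in $b_a$, hence solvable by the quadratic formula with one further square root — and crucially, I would check that the square roots encountered in the region-detection phase and in the final solve can be arranged so that only \emph{one} square root is needed overall (e.g.\ the region-detection comparisons can be done by squaring rather than by taking roots, since both sides' signs are controlled). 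Uniqueness of the resulting root in the relevant range follows from Theorem~\ref{thm:mon-b0}. Finally, the verification-without-square-roots claim for the precomputation is handled by observing that each precomputed quantity $q$ satisfying $q = \sqrt{w}$ can be checked by confirming $q \ge 0$ and $q^2 = w$, pure arithmetic.

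The main obstacle I anticipate is the second step — proving that region detection is possible without $b_a$, and in particular justifying the substitution of $(b,y)$ for $(b_a,y_a)$ at the right moments and bounding the number of comparisons. Each region boundary has to be re-examined to confirm that the comparison we want to make really is monotone in $b_a$ in the same direction as $b(x;\cdot)$, so that a single evaluation settles it; the bookkeeping across the (pruned) 36-case grid, together with making sure the total square-root budget stays at one, is where the real work lies. The third step is largely mechanical once one trusts the degree-$\le 2$ claim from Section~\ref{sec:implementation}.
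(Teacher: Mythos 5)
Your plan follows the same route as the paper: normalize to $y_a=1$, precompute the threshold $b_a$ values separating the regions (Algorithm~\ref{alg:precompute}), detect the region of the current state by comparing $b$ against $b(x;\cdot)$ evaluated at those thresholds using strict monotonicity (Theorem~\ref{thm:mon-b0}, Lemma~\ref{lem:precompute}), prune cases II~iii and III~iii via Corollary~\ref{cor:iii}, and then solve the region-specific degree-$\le 2$ equation for $b_a$ (Algorithm~\ref{alg:ba-reconstruction}), with uniqueness again from Theorem~\ref{thm:mon-b0} and square-root-free verification of the precomputed data by squaring. So the decomposition and the key lemmas are the right ones.

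However, the step you yourself flag as the ``main obstacle'' is precisely where the real proof content lies, and your sketch does not close it. The distinction between cases i and ii inside region II (and the analogous checks) is \emph{not} a $b_a$-threshold comparison of the form $b \lessgtr b(x; b_a^{\mathrm{thr}})$: it is the condition $x \le x_U(b_a)$, with $x_U$ depending on the unknown $b_a$, so monotonicity of $b(x;\cdot)$ in $b_a$ alone does not justify testing it. The paper has to prove, by case-specific algebra in the proof of Theorem~\ref{thm:region-detection}, that the surrogate tests written in terms of the \emph{current} state --- e.g.\ $y-b \le \frac{\balpha}{2}y^2$ in case II~h and $b-\btheta y \ge \frac{\itheta^2}{2\balpha}$ in case II~l --- are equivalent to the true conditions in terms of $\Delta_a = y_a-b_a$ and $\Gamma_a = b_a - x - \btheta y$; the nontrivial direction (showing, e.g., that $\Delta_a > \frac{\balpha}{2}y^2$ forces $y-b > \frac{\balpha}{2}y^2$) requires an explicit computation, not a monotonicity appeal. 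A second, smaller omission: in cases II~l~ii and III~L~ii the final equation is a genuine quadratic with two roots, and the paper must argue (by contradiction with the region assumptions) that only the ``$-$'' root is feasible; invoking uniqueness from Theorem~\ref{thm:mon-b0} guarantees a unique valid $b_a$ but does not by itself tell the algorithm which root to output --- this can be patched with an $O(1)$ feasibility check, but it needs to be said. With those two pieces supplied, your outline matches the paper's argument, including the square-root accounting (one root in precomputation, none in region detection, at most one in the final solve).
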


Our method proceeds in three steps: in the precomputation step, we compute the curves of form $x \mapsto b(x; b_a, y_a=1)$ that mark the threshold between the different cases defined in Section~\ref{sec:reconstruction-uniqueness}: the threshold between case I and II, case II and III, and the thresholds between cases h and l and H and L, respectively.
Note that these computations in principle only need to do once as long as the parameters  $\bzpo$, $\balphao$, $\btheta$ are kept the same.

In the second step and given a state $(x, b, y)$, we use this precomputed information to detect the region (such as II h i) into which this state falls. Since most of the work has already been done in the precomputation step, this step only requires a constant number of arithmetic operations and no square roots.
Finally, once we know the region, we solve an equation of degree 1 or 2 to compute the value of $b_a$ that leads to $b$. This third step may involve a square root, depending on the region.

\subsection{Normalization}

Recall that the precomputed information only applies to the $y_a=1$ case. To use it in the region detection step, we normalize all values to $y_a=1$. We then run all further computations on this normalized version and scale the resulting $b_a$ back to the original value of $y_a$. Formally, we exploit the following scaling property.
Throughout this section, we denote the dynamic parameters as functions of the values on which they depend. For example, we write $\alpha(b_a, y_a)$ for the value that the dynamic parameter $\alpha$ would take according to Proposition~\ref{prop:alpha} at a certain $(b_a, y_a)$ combination.

\begin{lemma}
	\label{lem:scaling}
	Assume that $\bzp$ and $\balpha$ are chosen relative to $y_a$ as described above.
	Let $\zeta > 0$. Then
	\[
		b(\zeta x; \zeta b_a, \zeta y_a) = \zeta b(x; b_a, y_a)
		.
	\]
\end{lemma}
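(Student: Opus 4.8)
The plan is to show that rescaling the stablecoin unit by $\zeta$ merely relabels every quantity in the construction according to its natural ``dimension'', so that the whole redemption curve transforms covariantly; the identity then drops out of a change of variables. Concretely, I would claim that $x$, $y$, $b$, $x_U$, $x_L$ and the bound $\bzp$ scale linearly in $\zeta$, that $\alpha$ and $\balpha$ scale as $\zeta^{-1}$, and that the dimensionless quantities $r_a$, $r_L$, $\btheta$ and the marginal price $p$ are unchanged.

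First I would note that $r_a(\zeta b_a,\zeta y_a)=\zeta b_a/\zeta y_a=r_a(b_a,y_a)$ and that the product $\alpha\Delta_a$ (with $\Delta_a=y_a-b_a$) is scale invariant; since every case distinction in the choice of the dynamic parameters is phrased in terms of $r_a$, $\btheta$, and $\alpha\Delta_a$, the scaled state $(\zeta x,\zeta b_a,\zeta y_a)$ lies in the ``$\zeta$-image'' of the region containing $(x,b_a,y_a)$. In the two trivial regimes, $b_a\ge y_a$ (where $p\equiv1$) and $b_a/y_a\le\btheta$ (where $p\equiv r_a$), the claim is immediate since $\int_0^{\zeta x}c\,\de x'=\zeta\int_0^x c\,\de x'$ for a constant $c$, so it suffices to treat the non-trivial case $1>b_a/y_a>\btheta$.

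Next I would verify the parameter scalings formula by formula, using the assumed $\bzp=\bzpo\,y_a$ and $\balpha=\balphao/y_a$: from Proposition~\ref{prop:alpha}, $\halpha_H=2(1-r_a)/y_a$ and $\halpha_L=\tfrac12\itheta^2/(b_a-\btheta y_a)$ are each homogeneous of degree $-1$ in $\zeta$, so $\alpha=\max(\halpha,\balpha)$ is; from Proposition~\ref{prop:zp}, $\hzph$ and $\hzpl$ are each homogeneous of degree $1$ (each summand is, using that $\Delta_a/\alpha$ scales as $\zeta^2$ and $1/\alpha$ as $\zeta$), so $x_U=\min(\bzp,\hzp)$ is; and from Proposition~\ref{prop:xm}, $x_L=y_a-\sqrt{(y_a-x_U)^2-\tfrac2\alpha\Delta_a}$ is homogeneous of degree $1$ (the radicand has degree $2$) while $r_L=1-\alpha(x_L-x_U)$ is scale invariant. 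Plugging these into \eqref{eq:p_curve} (equivalently, into Proposition~\ref{prop:b}) gives $p(\zeta x;\zeta b_a,\zeta y_a)=p(x;b_a,y_a)$ for all $x\in[0,y_a]$: the breakpoints move to $\zeta x_U,\zeta x_L$ so $\zeta x$ lies in the same segment as $x$; the flat values $1$ and $r_L$ are unchanged; and on the middle segment $1-(\alpha/\zeta)(\zeta x-\zeta x_U)=1-\alpha(x-x_U)$. Finally, substituting $x'=\zeta u$ in $b(\zeta x;\zeta b_a,\zeta y_a)=\zeta b_a-\int_0^{\zeta x}p(x';\zeta b_a,\zeta y_a)\,\de x'$ yields $\zeta b_a-\zeta\int_0^x p(u;b_a,y_a)\,\de u=\zeta\,b(x;b_a,y_a)$, which is the assertion.

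The only obstacle is bookkeeping: the argument is purely a dimensional-homogeneity check, so the ``work'' consists of verifying that each of the finitely many closed forms in Propositions~\ref{prop:b}, \ref{prop:xm}, \ref{prop:alpha}, \ref{prop:zp} has the expected degree in $\zeta$, together with a moment's care to cover the trivial regimes and the region boundaries (where, by the continuity of $\halpha$ and $\hzp$ already established, nothing new occurs). There is no analytic difficulty.
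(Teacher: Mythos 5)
Your proposal is correct and follows essentially the same route as the paper's proof: verify via Propositions~\ref{prop:alpha}, \ref{prop:zp}, and \ref{prop:xm} (using $\bzp=\bzpo y_a$, $\balpha=\balphao/y_a$) that $\alpha$ scales as $\zeta^{-1}$ and $x_U$, $x_L$ as $\zeta$ with $r_L$ invariant, then conclude through Proposition~\ref{prop:b} / the integral of $p$. Your explicit treatment of the trivial regimes and the change-of-variables step are just slightly more detailed versions of what the paper leaves implicit.
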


\begin{center} \hyperlink{pf:lem:scaling}{\texttt{[Link to Proof]}} \end{center}

The lemma immediately implies that it is enough to be able to determine $b_a$ in the \emph{normalized} case where $y_a=1$.
\begin{theorem}
	\label{thm:normalization}
	Assume that $\bzp$ and $\balpha$ are chosen relative to $y_a$ as described above. For some given $x$, $b$, and $y_a$, let $b_{a,1}$ be such that $b(x/y_a; b_{a,1}, 1) = b/y_a$. Let $b_a = b_{a,1} y_a$. Then $b(x; b_a, y_a) = b$.
\end{theorem}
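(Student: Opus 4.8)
The plan is to obtain the statement as an immediate corollary of the scaling identity in Lemma~\ref{lem:scaling}. Before applying it, I would first record that $b_{a,1}$ is well-defined. The normalization $(x,b,y)\mapsto(x/y_a,\,b/y_a,\,y/y_a)$ preserves the reserve ratio, since $(b/y_a)/(y/y_a) = b/y$, and $\btheta$ is an absolute quantity unaffected by scaling; hence the normalized state still lies in the non-trivial regime $1 > b/y > \btheta$. Existence of a $b_{a,1}$ with $b(x/y_a; b_{a,1}, 1) = b/y_a$ then follows from continuity of $b(x;\,\cdot\,,1)$, and uniqueness from the strict monotonicity established in Theorem~\ref{thm:mon-b0}.

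Next I would invoke Lemma~\ref{lem:scaling} with $\zeta := y_a > 0$ at the normalized arguments $x' := x/y_a$, $b_a' := b_{a,1}$, $y_a' := 1$. This is exactly the setting covered by the lemma, which presupposes that $\bzp$ and $\balpha$ scale with $y_a$ as assumed throughout this section; in particular the dynamic parameters $x_U$ and $\alpha$ at $(b_{a,1},1)$ and at $(y_a b_{a,1}, y_a)$ transform consistently under the scaling, so the lemma applies without modification. It yields
\[
b\!\left(y_a\cdot\tfrac{x}{y_a};\; y_a b_{a,1},\; y_a\cdot 1\right) \;=\; y_a\, b\!\left(\tfrac{x}{y_a};\; b_{a,1},\; 1\right).
\]

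Finally I would simplify both sides. The left-hand side equals $b(x; b_a, y_a)$ because $y_a\cdot(x/y_a)=x$, $y_a\cdot 1=y_a$, and $y_a b_{a,1}=b_a$ by definition. The right-hand side equals $y_a\cdot(b/y_a)=b$ by the defining property of $b_{a,1}$. Hence $b(x; b_a, y_a)=b$, as claimed. I do not expect any real obstacle: all the substance lives in Lemma~\ref{lem:scaling}, and the only points needing a moment's care are that normalization keeps the state in the non-trivial regime and that the static parameters are scaled consistently so that Lemma~\ref{lem:scaling} is genuinely applicable.
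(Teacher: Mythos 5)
Your proof is correct and follows essentially the same route as the paper, which simply invokes Lemma~\ref{lem:scaling} (the paper uses $\zeta = 1/y_a$ at the original state, you use $\zeta = y_a$ at the normalized state --- a trivially equivalent choice). The extra remarks on existence and uniqueness of $b_{a,1}$ are fine but not needed, since the theorem takes $b_{a,1}$ as given.
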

\begin{proof}
	This follows immediately from Lemma~\ref{lem:scaling} for $\zeta = 1/y_a$.
\end{proof}

By Theorem~\ref{thm:normalization}, it is enough to consider the case $y_a=1$. Otherwise, given some state $(x, b, y)$ we can consider the scaled state $(x/y_a, b/y_a, y/y_a)$ (where $y_a:=y+x$), consider the resulting $b_a$, and return $y_a \cdot b_a$.
Despite our normalization, in the following, we will usually use an explicit variable for $y_a$ in the interest of clarity of the exposition.

\subsection{Precomputation Step}

We now discuss our precomputation step. We use the following lemma to distinguish between cases I, II, and III.

\begin{lemma}
	\label{lem:ba_for_xu}
	Let $z \le y_a$. Let $y_z := y_a - z$. Then $\hzp = z$ iff
	\[
		b_a =
		\begin{cases}
			y_a - \frac \alpha 2 \cdot y_z^2 & \text{if } 1 - \alpha y_z \ge \btheta
			\\
			y_a - \theta y_z + \theta^2 \cdot \frac 1 {2\alpha} &\text{otherwise}.
		\end{cases}
	\]
\end{lemma}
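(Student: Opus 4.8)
The plan is to invert the formula for $\hzp(\alpha)$ from Proposition~\ref{prop:zp}, treating $b_a$ (equivalently $\Delta_a = y_a - b_a$) as the unknown and $z$ as the target value of $\hzp$. Recall Proposition~\ref{prop:zp} gives two branches: $\hzph = y_a - \sqrt{2\Delta_a/\alpha}$ when $\alpha\Delta_a \le \tfrac12\itheta^2$, and $\hzpl = y_a - \Delta_a/\itheta - \tfrac{1}{2\alpha}\itheta$ when $\alpha\Delta_a \ge \tfrac12\itheta^2$. First I would set $\hzp = z$ and solve each branch for $\Delta_a$: from the first branch, $y_a - z = \sqrt{2\Delta_a/\alpha}$, so $\Delta_a = \tfrac{\alpha}{2}(y_a - z)^2 = \tfrac\alpha2 y_z^2$, hence $b_a = y_a - \tfrac\alpha2 y_z^2$; from the second branch, $y_a - z = \Delta_a/\itheta + \tfrac{1}{2\alpha}\itheta$, so $\Delta_a = \itheta\big(y_z - \tfrac{1}{2\alpha}\itheta\big) = \itheta y_z - \tfrac{1}{2\alpha}\itheta^2$, hence $b_a = y_a - \itheta y_z + \tfrac{1}{2\alpha}\itheta^2$. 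Here I am writing $\theta$ for $\itheta = 1-\btheta$ to match the lemma statement (the paper uses both notations).

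Second, I would translate the branch-selection condition $\alpha\Delta_a \lessgtr \tfrac12\itheta^2$ into the condition stated in the lemma, namely $1 - \alpha y_z \gtrless \btheta$. The cleanest way is to use the substitution just derived: in the first branch, $\alpha\Delta_a = \tfrac{\alpha^2}{2} y_z^2$, and $\alpha\Delta_a \le \tfrac12\itheta^2$ becomes $\alpha^2 y_z^2 \le \itheta^2$, i.e. $\alpha y_z \le \itheta = 1-\btheta$, i.e. $1 - \alpha y_z \ge \btheta$ (all quantities nonnegative, so taking square roots is valid). Symmetrically, in the second branch $\alpha\Delta_a = \alpha(\itheta y_z - \tfrac{1}{2\alpha}\itheta^2) = \alpha\itheta y_z - \tfrac12\itheta^2$, and $\alpha\Delta_a \ge \tfrac12\itheta^2$ becomes $\alpha\itheta y_z \ge \itheta^2$, i.e. $\alpha y_z \ge \itheta$, i.e. $1 - \alpha y_z \le \btheta$. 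So the two branches correspond exactly to the case split in the lemma. This also confirms consistency at the boundary $\alpha y_z = \itheta$: both formulas then give $b_a = y_a - \tfrac12\itheta y_z = y_a - \tfrac{\alpha}{2}y_z^2$, matching the continuity of $\hzp$ in its parameters asserted in Proposition~\ref{prop:zp}.

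Finally, I would note the ``iff'' direction: since $\hzp(\alpha)$ is (weakly) monotonic in $b_a$ on the relevant domain (it is built from the monotone quantities in~(\ref{eq:xm-exists}) and the monotone dependence of $r_L$ on $x_U$), the map $b_a \mapsto \hzp$ is injective where it matters, so $\hzp = z$ pins down $b_a$ to the value computed above; conversely substituting that $b_a$ back into the appropriate branch of Proposition~\ref{prop:zp} yields $\hzp = z$ by the algebra just performed. The only mild subtlety — and the step I would be most careful about — is the domain bookkeeping: one should check that the $b_a$ produced actually falls in the range where the chosen branch of Proposition~\ref{prop:zp} applies (equivalently, that $z \le y_a$ is compatible with $\hzp \ge 0$ and with $(\alpha, x_U = z)$ being admissible), and that edge cases such as $z = y_a$ (giving $y_z = 0$, $b_a = y_a$) or $\alpha$ at its boundary are handled. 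None of this is deep; it is just a matter of tracking which inequalities are active, and it follows directly from Propositions~\ref{prop:xm} and~\ref{prop:zp}.
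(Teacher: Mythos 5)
Your proposal is correct and follows essentially the same route as the paper's proof: invert each branch of Proposition~\ref{prop:zp} for $b_a$ given $\hzp = z$, then substitute the resulting $b_a$ into the branch-defining condition $\alpha\Delta_a \lessgtr \tfrac12\itheta^2$ to see it becomes exactly $1-\alpha y_z \gtrless \btheta$. The extra remarks on boundary consistency and monotonicity are fine but not needed beyond what the equivalences already give.
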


\begin{center} \hyperlink{pf:lem:ba_for_xu}{\texttt{[Link to Proof]}} \end{center}

\begin{algorithm}
	\caption{Precomputation step}
	\label{alg:precompute}
	\begin{algorithmic}[1]
		\Require{$\balpha, \bzp, \btheta$}
		\Ensure{$P := (b_a^{I/II},\, x_L^{I/II},\, b_a^{II/III},\, b_a^{h/l},\, x_U^{h/l},\, b_a^{H/L},\, \alpha^{H/L})$}
		\State Calculate $b_a^{I/II}$ via Lemma~\ref{lem:ba_for_xu} such that $\hzp = \bzp$ for $y_a=1$ and $\alpha=\balpha$.
		\State Calculate $x_L^{I/II}$ via Proposition~\ref{prop:xm} for $b_a^{I/II}$, $y_a=1$, $\alpha=\balpha$, and $x_U=\bzp$.
		\State Calculate $b_a^{II/III}$ via Lemma~\ref{lem:ba_for_xu} such that $\hzp=0$ for $y_a=1$ and $\alpha=\balpha$.
		\State Let $b_a^{h/l} := y_a - \theta^2 \cdot \frac 1 {2 \cdot \balpha}$.
		\State Calculate $x_U^{h/l}$ via Proposition~\ref{prop:zp} for $b_a^{h/l}$, $y_a=1$, and $\alpha=\balpha$.
		\State Let $b_a^{H/L} := y_a \cdot \frac {1+\btheta} 2$
		\State Calculate $\alpha^{H/L}$ via Proposition~\ref{prop:alpha} for $b_a^{H/L}$ and $y_a=1$.
	\end{algorithmic}
\end{algorithm}

Our precomputation step is depicted in Algorithm~\ref{alg:precompute}.
It exploits Lemma~\ref{lem:ba_for_xu} as well as other relationships to calculate all relevant thresholds between cases I / II / III as well as those between cases h / l and and H / L. Each of these thresholds is determined by a specific $b_a$ value. We also compute the dynamic parameters for each of these $b_a$ values as far as it is necessary, so that any point along the respective threshold curve for $b(x)$ can be quickly evaluated. The following lemma tells us that these threshold values can be used to determine in which of the different regions we are.
As discussed above, we only need to consider the normalized case $y_a=1$.

\begin{lemma}
	\label{lem:precompute}
	Fix $y_a=1$ and let $b_a \in [0, y_a)$ be arbitrary. Let $x \in [0, y_a]$ and assume that $1 > b(x; b_a)/y(x) > \btheta$.
	Consider the precomputed values chosen like in Algorithm~\ref{alg:precompute}.
	Then the following hold:
	\begin{enumerate}
		\item
			(a) We have $b_a \ge b_a^{I/II}$ iff we are in case I.
			(b) For $b_a^{I/II}$ we have $\alpha=\balpha$, $x_U=\bzp$, and $x_L=x_L^{I/II}$.
			(c) We have $b(x; b_a) \ge b(x; b_a^{I/II})$ iff we are in case I.
		\item
			(a) We have $b_a \le b_a^{II/III}$ iff we are in case III.
			(b) For $(b_a^{II/III})$ we have $\alpha=\balpha$ and $x_U=0$.
			(c) We have $b(x; b_a) \le b(x; b_a^{II/III}, \alpha=\balpha, x_U=0, x_L=1)$ iff we are in case III.
		\item
			Assume that we are in case II.
			(a) We have $b_a \ge b_a^{h/l}$ iff we are in case h.
			(b) For $b_a^{h/l}$ we have $\alpha=\balpha$, $x_U=x_U^{h/l}$, and $x_L=y_a$.
			(c) We have $b(x; b_a) \ge b(x; b_a^{h/l})$ iff we are in case h.
		\item 
			Assume that we are in case III.
			(a) We have $b_a \ge b_a^{H/L}$ iff we are in case H.
			(b) For $(b_a^{H/L})$ we have $\alpha=\alpha^{H/L}$, $x_U=0$, and $x_L=y_a$.
			(c) We have $b(x; b_a) \ge b(x; b_a^{H/L})$ iff we are in case H.
	\end{enumerate}
\end{lemma}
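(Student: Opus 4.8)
The plan is to derive all four parts from a single structural fact: holding $y_a$, $\btheta$, $\balpha$, $\bzp$ fixed, as the anchor value $b_a$ increases the induced state passes through the regions of Section~\ref{sec:reconstruction-uniqueness} in a fixed monotone order, and the region boundaries occur exactly at the $b_a$ values precomputed by Algorithm~\ref{alg:precompute}. Granting this, each part~(a) becomes the tautology that $b_a$ lies on the correct side of the relevant threshold, each part~(b) is obtained by plugging the threshold $b_a$ into the parameter-selection rules, and each part~(c) follows from part~(a) together with the strict monotonicity of $b_a \mapsto b(x;b_a)$ given by Theorem~\ref{thm:mon-b0}.

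First I would establish the monotone region structure. From the closed forms in Proposition~\ref{prop:alpha}, $\halpha$ is continuous and weakly decreasing in $b_a$ on the non-trivial range $b_a/y_a\in(\btheta,1)$; from Proposition~\ref{prop:zp}, $\hzp(\balpha)$ is continuous and weakly increasing in $b_a$ there. Hence $\alpha=\max(\halpha,\balpha)$ equals $\balpha$ exactly for $b_a$ above the threshold where $\halpha=\balpha$, and on that upper range $x_U=\min(\hzp(\balpha),\bzp)$ rises monotonically from $0$ to $\bzp$; below the threshold $x_U=0$ and $\alpha=\halpha>\balpha$. This yields the order: case~III for small $b_a$, then case~II, then case~I. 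I then identify the two breakpoints with $b_a^{II/III}$ and $b_a^{I/II}$ by applying Lemma~\ref{lem:ba_for_xu} at $z=0$ and $z=\bzp$ (with $\alpha=\balpha$, $y_a=1$), checking that the inner case split of that lemma ($1-\alpha y_z\ge\btheta$) lines up with the H/L, resp.\ h/l distinction. Within case~II the h/l split is governed by the sign of $\balpha\Delta_a-\tfrac12\itheta^2$, i.e.\ by whether $b_a\ge y_a-\tfrac{\itheta^2}{2\balpha}=b_a^{h/l}$; within case~III the H/L split is $r_a\ge\tfrac{1+\btheta}{2}$, i.e.\ $b_a\ge y_a\tfrac{1+\btheta}{2}=b_a^{H/L}$. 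Matching these against lines~4 and~6 of Algorithm~\ref{alg:precompute} completes this step.

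With the region structure in hand, parts~(a) are immediate (for parts~3 and~4 under the standing assumption that we are already in case~II, resp.\ III). For parts~(b), I substitute the threshold $b_a$ into Propositions~\ref{prop:alpha},~\ref{prop:zp}, and~\ref{prop:xm}: the values $\alpha$ and $x_U$ drop straight out of how the thresholds were defined ($\hzp(\balpha)=\bzp$ at $b_a^{I/II}$, $\hzp(\balpha)=0$ at $b_a^{II/III}$, $\halpha_H=\halpha_L=\itheta/y_a$ at $b_a^{H/L}$, $x_U=0$ throughout case~III), $x_L=x_L^{I/II}$ at $b_a^{I/II}$ is the formula from Proposition~\ref{prop:xm}, and the remaining claims $x_L=y_a$ at the h and H boundaries are exactly Proposition~\ref{prop:HLrm}. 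For parts~(c), part~(b) shows that the explicitly-parameterized quantity $b(x;b_a^{\bullet},\ldots)$ agrees with $b(x;b_a^{\bullet})$ computed under the standard dynamic-parameter rule (for states with $r(x)\in(\btheta,1)$ we are, by Corollary~\ref{cor:iii}, in the linear segment, which is insensitive to whether $x_L$ is written as its true value or as $y_a$); hence the comparison of $b(x;b_a)$ with $b(x;b_a^{\bullet})$ has the same direction as the comparison of $b_a$ with $b_a^{\bullet}$ by Theorem~\ref{thm:mon-b0}, which by part~(a) is equivalent to region membership. One should note that Theorem~\ref{thm:mon-b0} requires $r(x)\in(\btheta,1)$ at both anchor values; it holds at $b_a$ by hypothesis, and at $b_a^{\bullet}$ because every threshold curve has $r(x)\ge\btheta$ everywhere (by admissibility in cases~I,~II and, in case~H, by the computation $1-\halpha_H y_a=2r_a-1\ge\btheta$), the borderline sub-cases $r(x)=\btheta$ or $r(x)\ge1$ being settled directly from the formula for $b(x)$ in Proposition~\ref{prop:b}.

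The main obstacle is the first step: verifying that the region boundaries are genuine monotone single crossings in $b_a$ and that they land on exactly the values $b_a^{I/II}$, $b_a^{II/III}$, $b_a^{h/l}$, $b_a^{H/L}$ produced by Algorithm~\ref{alg:precompute}. This is where the piecewise definitions of $\halpha$ and $\hzp(\balpha)$ (and their continuity across their own internal case splits) have to be reconciled with the internal case split of Lemma~\ref{lem:ba_for_xu}. By comparison, parts~(b) and~(c) are essentially bookkeeping on top of already-available results --- Theorem~\ref{thm:mon-b0} and Propositions~\ref{prop:HLrm},~\ref{prop:xm}.
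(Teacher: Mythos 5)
Your route is the paper's route: monotonicity of the parameter choice in $b_a$ gives the fixed ordering III\,/\,II\,/\,I and single crossings at the precomputed thresholds (parts (a)), parts (b) are read off by plugging the threshold $b_a$ into Propositions~\ref{prop:alpha}, \ref{prop:zp}, \ref{prop:xm} and \ref{prop:HLrm}, and parts (c) are meant to follow from Theorem~\ref{thm:mon-b0}; the paper's own (very terse) proof does exactly this and dismisses everything beyond part~1 as ``analogous''. The genuine gap is the parenthetical with which you dispose of the $x_L$ convention in part~2(c). Corollary~\ref{cor:iii} tells you that the state lies in segment i or ii of \emph{its own} anchor's curve; it says nothing about where $x$ sits relative to $x_L(b_a^{II/III})$ on the \emph{threshold} curve. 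Whenever case~l applies at the II/III boundary (equivalently $\balpha > \itheta$), the true $x_L$ of that boundary curve is $\itheta/\balpha < 1$, so for $x$ beyond it the explicitly parameterized value $b(x; b_a^{II/III}, \alpha=\balpha, x_U=0, x_L=1)$ is strictly larger than the true $b(x; b_a^{II/III})$, and your step ``the comparison has the same direction as the comparison of $b_a$ with $b_a^{II/III}$'' no longer transfers.

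This is not a removable technicality. Take $\btheta=0.3$, $\balpha=1.4$, $\bzp=0.5$, $y_a=1$ and anchor $b_a=0.7$: then $\halpha=0.6<\balpha$, $x_U=\hzpl\approx 0.3214<\bzp$, $x_L\approx 0.8214$, so the anchor is in case II\,l, and at $x=0.7$ one has $b(x;b_a)\approx 0.1003$ with $r\approx 0.334\in(\btheta,1)$, so the lemma's hypothesis holds. But $b_a^{II/III}=0.475$ (Lemma~\ref{lem:ba_for_xu} with $z=0$), its true $x_L$ is $\itheta/\balpha=0.5<x$, and the comparison value with $x_L=1$ is $0.475-0.7+0.7\cdot 0.49=0.118>0.1003$ --- so the inequality of part~2(c) holds even though the anchor is in case~II (against the true curve, $b(x;b_a^{II/III})=0.09$, the comparison comes out right). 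Hence your argument for 2(c) cannot be completed as written; one needs either to run the comparison against the true $x_L^{II/III}$ or to restrict to the regime $\balpha\le\itheta$ where the boundary curve genuinely has $x_L=y_a$. (The same numbers show the paper's ``analogous'' hides a real issue in its own statement of 2(c).) For the other three comparisons your argument is fine, because there the declared $x_L$ is the true one ($x_L=y_a$ at $b_a^{h/l}$ and $b_a^{H/L}$ by Proposition~\ref{prop:HLrm}, and $x_L^{I/II}$ is computed exactly), and the borderline applicability of Theorem~\ref{thm:mon-b0} is vacuous by weak monotonicity, as you note.
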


\begin{center} \hyperlink{pf:lem:precompute}{\texttt{[Link to Proof]}} \end{center}

Regarding the runtime properties of the precomputation step, we receive:

\begin{proposition}
	\label{prop:precompute-runtime}
	Algorithm~\ref{alg:precompute} only requires a constant number of basic arithmetic operations and the computation of at most one square root.
	Furthermore, correctness of the values computed by algorithm~\ref{alg:precompute} can be verified using a constant number of basic arithmetic operations and no square root.
\end{proposition}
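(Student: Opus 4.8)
The plan is to walk line by line through Algorithm~\ref{alg:precompute}, classify each computation, and then count square roots. Lines~1 and~3 evaluate Lemma~\ref{lem:ba_for_xu}: in both of the cases appearing there, the formula for $b_a$ is a rational expression in $y_a$, $\alpha$, $y_z$, and $\itheta$, so no square root is needed, and selecting the correct case costs only the single comparison $1-\alpha y_z \ge \btheta$. Lines~4 and~6 set $b_a^{h/l}$ and $b_a^{H/L}$ by explicit rational formulas. Hence the only potential square roots live in lines~2, 5, and~7, which invoke Propositions~\ref{prop:xm}, \ref{prop:zp}, and~\ref{prop:alpha}, respectively.

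The key observation is that the $b_a$ values handed to lines~5 and~7 are engineered to sit exactly on case boundaries, where the square-root branch of the corresponding proposition degenerates into a rational one. For $b_a^{h/l} = y_a - \itheta^2/(2\balpha)$ we get $\Delta_a = \itheta^2/(2\balpha)$ and therefore $\balpha\Delta_a = \frac{1}{2}\itheta^2$, which is precisely the h/l boundary in Proposition~\ref{prop:zp}; there $\hzph=\hzpl$, so $x_U^{h/l}$ can be computed from the rational expression $\hzpl = y_a - \Delta_a/\itheta - \itheta/(2\balpha)$. Similarly, $b_a^{H/L} = y_a(1+\btheta)/2$ gives $r_a = (1+\btheta)/2$, the H/L boundary in Proposition~\ref{prop:alpha}, where $\halpha_H=\halpha_L$, so $\alpha^{H/L}$ comes from the rational expression $\halpha_H = 2(1-r_a)/y_a$. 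That leaves line~2 as the sole source of a square root: $x_L^{I/II} = y_a - \sqrt{(y_a-\bzp)^2 - \frac{2}{\balpha}(y_a-b_a^{I/II})}$ by Proposition~\ref{prop:xm}; validity of this formula (i.e.\ that condition~\eqref{eq:xm-exists} holds for $\alpha=\balpha$, $x_U=\bzp$, $b_a=b_a^{I/II}$) follows from the defining property of $b_a^{I/II}$ as the threshold where $\hzp(\balpha)=\bzp$, so $(\balpha,\bzp)$ is admissible and $x_L$ exists. Since every step above is a fixed closed-form evaluation, this establishes both the constant operation count and the ``at most one square root'' bound.

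For the verification statement, all components of the output other than $x_L^{I/II}$ are given by the rational formulas just identified, so recomputing those formulas and checking equality is a constant number of arithmetic operations with no square root. For $x_L^{I/II}$, rather than recompute the square root, one sets $C := (y_a-\bzp)^2 - \frac{2}{\balpha}(y_a-b_a^{I/II})$ and checks $C \ge 0$, $x_L^{I/II} \le y_a$, and $(y_a - x_L^{I/II})^2 = C$. Because $x\mapsto(y_a-x)^2$ is strictly decreasing for $x\le y_a$, these three checks together force $x_L^{I/II} = y_a - \sqrt{C}$, the value produced by Proposition~\ref{prop:xm}; all of them use only arithmetic.

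I expect the only delicate point to be the boundary-degeneracy argument for lines~5 and~7: one must verify algebraically that the chosen $b_a^{h/l}$ and $b_a^{H/L}$ really do land on the h/l and H/L boundaries and that the two branches of Propositions~\ref{prop:zp} and~\ref{prop:alpha} agree there --- this is exactly what upgrades the ``no square root'' promise for these lines from a mere ``at most one''. The remainder is bookkeeping, together with the small care needed in the $x_L^{I/II}$ verification to include the range condition $x_L^{I/II}\le y_a$ that disambiguates the two roots of the quadratic.
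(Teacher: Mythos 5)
Your proof is correct and follows essentially the same route as the paper: the single square root comes from line~2 (the $x_L^{I/II}$ formula of Proposition~\ref{prop:xm}), while the boundary choice $b_a^{h/l}$ puts $\balpha\Delta_a=\tfrac12\itheta^2$ so the rational branch $\hzpl$ of Proposition~\ref{prop:zp} applies (and Proposition~\ref{prop:alpha} is rational in both branches), and verification is done without square roots. The only cosmetic difference is the verification of $x_L^{I/II}$: the paper checks the defining identity $p(x_L)=r(x_L)$ at the stored value, whereas you check $(y_a-x_L^{I/II})^2=C$ together with $x_L^{I/II}\le y_a$ --- both are square-root-free and pin down the same value.
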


\begin{center} \hyperlink{pf:prop:precompute-runtime}{\texttt{[Link to Proof]}} \end{center}


\begin{algorithm}
	\caption{Region detection. We use “emit” as a shorthand to denote that the algorithm outputs that we are in a particular case without stopping execution of the function.}
	\label{alg:region-detection}
	\begin{algorithmic}[1]
		\Require{$\balpha, \bzp, \btheta, (x, b, y)$ such that $\btheta < b/y < 1$ and $y_a := y+x = 1$; $P$ precomputed using Algorithm~\ref{alg:precompute}}
		\Ensure{The region in which $(x, b)$ lies.}
		\If{$b \ge b(x; b_a^{I/II}, y_a=1, \alpha=\balpha, x_U=\bzp, x_L=x_L^{I/II})$}
			\State Emit case I
			\If{$x \le \bzp$}
				\State Emit case i
			\ElsIf{$b/y \le 1-\balpha(x-\bzp)$}\label{ln:region-detection:Iii}
				\State Emit case ii
			\Else
				\State Emit case iii
			\EndIf
		\ElsIf{$b \ge b(x; b_a^{II/III}, y_a=1, \alpha=\balpha, x_U=0, x_L=1)$}
			\State Emit case II
			\If{$b \ge b(x; b_a^{h/l}, y_a=1, \alpha=\balpha, x_U=x_U^{h/l}, x_L=1)$}
				\State Emit case h
				\If{$y-b \le \frac \balpha 2 \cdot y^2$}\label{ln:region-detection:IIhi}
					\State Emit case i
				\Else
					\State Emit case ii
				\EndIf
			\Else
				\State Emit case l
				\If{$b - \btheta y \ge \frac {\theta^2} {2\balpha}$} \label{ln:region-detection:IIli}
					\State Emit case i
				\Else
					\State Emit case ii
				\EndIf
			\EndIf
		\Else
			\State Emit case ii
			\If{$b \ge b(x; b_a^{H/L}, y_a=1, \alpha=\alpha^{H/L}, x_U=0, x_L=1)$}
				\State Emit case H
			\Else
				\State Emit case L
			\EndIf
		\EndIf
	\end{algorithmic}
\end{algorithm}

\subsection{Region Detection}

Lemma~\ref{lem:precompute} provides a way to use the precomputed values to detect in which region we are along the I-III and h/l and H/L dimension. This can be done based on only the current state $(x, b, y)$ and without any knowledge of $b_a$. For example, to tell case I and II apart, by part (1c) of the lemma, we only need to compare the value $b(x)$ (which we know) to the value $b(x; b_a^{I/II})$. This latter value can be easily computed because we also know the values of $x_U$, $\alpha$, and $x_L$ corresponding to $b_a^{I/II}$ by part (1b) of the lemma.
The same reasoning applies to the other items.
Algorithm~\ref{alg:region-detection} formalizes this idea and adds an additional method that allows us to also tell case i--iii apart. In effect, Algorithm~\ref{alg:region-detection} allows us to fully reconstruct the region of a given point $(x, b, y)$ without knowledge of $b_a$ and only with a constant number of basic arithmetic operations (and in particular without computing a square root).

%
%
%

\begin{theorem}
	\label{thm:region-detection}
	Algorithm~\ref{alg:region-detection} is correct and only requires basic arithmetic operations, and only a constant number of them.
\end{theorem}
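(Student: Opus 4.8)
The plan is to prove the two assertions separately: that every \emph{emit} statement correctly names the region of the input $(x,b,y)$, and that the whole procedure uses only a bounded number of basic arithmetic operations and no square root. The complexity claim is the easier one. Every comparison in the algorithm pits the known quantity $b$ (or $b/y$) against either a simple closed-form expression in $x$ and the static parameters, or against a threshold value of the form $b(x;b_a^{\ast},y_a=1,\alpha=\cdot,x_U=\cdot,x_L=\cdot)$ in which all three dynamic parameters have been fixed to the precomputed entries of $P$ (these are exactly the dynamic parameters belonging to $b_a^{\ast}$ by Lemma~\ref{lem:precompute}(b), and $P$ is computed and verifiable by Proposition~\ref{prop:precompute-runtime}). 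By Proposition~\ref{prop:b}, once $\alpha,x_U,x_L$ are all known, $b(x;\cdot)$ is a three-piece function whose pieces are polynomials of degree at most $2$ in $x$; evaluating it means deciding which of the three intervals $x$ falls in and then performing a constant number of additions and multiplications, and crucially \emph{no} square roots, since the two square roots that define $x_L$ and $x_U$ were already carried out in the precomputation. The control flow has a fixed, finite number of branches, so the total cost is a constant number of arithmetic operations.

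For correctness, the I/II/III axis is handled directly by Lemma~\ref{lem:precompute}: the first test $b\ge b(x;b_a^{I/II},\ldots)$ is by part~1(c) exactly the condition for case~I; failing it, $b\ge b(x;b_a^{II/III},\ldots)$ is by part~2(c) the complement of case~III, hence (case~I already excluded) the condition for case~II, leaving case~III as the last branch. The h/l test inside case~II is part~3(c), and the H/L test inside case~III is part~4(c). Since consecutive cases overlap only on their common boundary, where both $b(x;\cdot)$ and the dynamic parameters are continuous (Propositions~\ref{prop:alpha}, \ref{prop:zp}, \ref{prop:xm}), it does not matter which label the algorithm assigns on such a boundary.

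The substantive part is the i/ii/iii axis, which Lemma~\ref{lem:precompute} does not cover; I would argue case by case. In case~I the dynamic parameters are $\alpha=\balpha$, $x_U=\bzp$, so $x\le x_U\iff x\le\bzp$, matching the first test; for $x>\bzp$ one must show $x_U\le x\le x_L\iff b/y\le 1-\balpha(x-\bzp)$, which follows because on $[x_U,x_L]$ one has $r(x)\le p(x)=1-\balpha(x-x_U)$ — the function $p-r$ is positive at $x_U$ (as $b_a<y_a$ forces $r(x_U)<1$), continuous, and vanishes only at $x_L$ by uniqueness in Proposition~\ref{prop:xm}(3) — while for $x>x_L$ one has $r(x)=r_L=p(x_L)>1-\balpha(x-\bzp)$ by strict monotonicity of the linear piece. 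In case~II, Corollary~\ref{cor:iii} together with the hypothesis $b/y>\btheta$ rules out case~iii, so only i vs.\ ii remains. In sub-case~II~h, $x_L=y_a$ (Proposition~\ref{prop:HLrm}) and $x_U=\hzph$; substituting the formula for $\hzph$ from Proposition~\ref{prop:zp} into $y(x)-b(x)$ — which equals $\Delta_a=\tfrac{\balpha}{2}(y_a-x_U)^2$ on $[0,x_U]$ and $\tfrac{\balpha}{2}\big((y_a-x_U)^2-(x-x_U)^2\big)$ on $[x_U,y_a]$ — makes $b_a$ cancel and reduces $y-b\le\tfrac{\balpha}{2}y^2$, using $x<y_a$, to exactly $x\le x_U$. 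In sub-case~II~l, $r_L=\btheta$ (Proposition~\ref{prop:HLrm}) and $x_U=\hzpl$; substituting the formula for $\hzpl$, $b(x)-\btheta y(x)$ equals $\itheta(x_U-x)+\tfrac{\itheta^2}{2\balpha}$ on $[0,x_U]$ and $\tfrac{1}{2\balpha}\big(\itheta-\balpha(x-x_U)\big)^2$ on $[x_U,x_L]$, and — using that $p(x)\ge r_L=\btheta$ on the linear piece, so $\itheta-\balpha(x-x_U)\ge 0$ — the test $b-\btheta y\ge\tfrac{\itheta^2}{2\balpha}$ reduces to $x\le x_U$. Finally, in case~III, $x_U=0$ and $x<y_a$, and Corollary~\ref{cor:iii} again excludes iii, so the state is in case~ii for every $x>0$ (and $x=0$, lying on the i/ii boundary, may be labelled ii harmlessly), which is precisely what the algorithm emits.

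The main obstacle is exactly that last paragraph: unlike the I/II/III, h/l and H/L tests, the i/ii/iii tests must decide whether $x\le x_U$ (or $x\le x_L$) \emph{without knowing $b_a$ or even $x_U$}. What makes this possible is that plugging in the closed forms for $\hzph$ and $\hzpl$ (Proposition~\ref{prop:zp}), together with $x_L=y_a$ (case~h) or $r_L=\btheta$ (case~l) from Proposition~\ref{prop:HLrm}, causes $b_a$ to drop out of the relevant inequality, leaving a condition purely in $x$ and the static parameters; carrying out and checking these cancellations on both the flat segment $x\le x_U$ and the linear segment $x_U\le x\le x_L$ is the computational heart of the argument. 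The one subtlety to be careful with is the sign of $\itheta-\balpha(x-x_U)$ when taking square roots in sub-case~II~l, which is exactly controlled by $p(x)\ge r_L$ on the linear piece.
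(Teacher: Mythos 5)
Your proposal is correct and follows essentially the same route as the paper: Lemma~\ref{lem:precompute} for the I/II/III, h/l and H/L distinctions, Corollary~\ref{cor:iii} to exclude case~iii, and, for the i/ii tests inside case~II, substitution of the closed forms $\hzph$, $\hzpl$ (together with $x_L=y_a$ resp.\ $r_L=\btheta$ from Proposition~\ref{prop:HLrm}) so that the tests on the observable quantities $y-b$ and $b-\btheta y$ are shown equivalent to $x\le x_U$. The only difference is organizational: you write $y-b$ and $b-\btheta y$ piecewise in $x$ directly, whereas the paper compares $\Delta=y-b$ with $\Delta_a$ and $\Gamma=b-\btheta y$ with $\Gamma_a$ --- the same algebra in a slightly different order.
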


\begin{center} \hyperlink{pf:thm:region-detection}{\texttt{[Link to Proof]}} \end{center}

\subsection{Reconstructing the value of $b_a$}

Given sufficient knowledge about the region, it is now conceptually straightforward (though mathematically somewhat inconvenient) to reconstruct the precise value of $b_a$ by solving a linear or quadratic equation. This is captured in Algorithm~\ref{alg:ba-reconstruction}.

\begin{algorithm}
	\caption{Reconstruction of $b_a$}
	\label{alg:ba-reconstruction}
	\begin{algorithmic}[1]
		\Require{$\balpha, \bzp, \btheta, (x, b, y)$ such that $\btheta < b/y < 1$ and $y_a := y+x = 1$; the region of $(x, b)$}
		\Ensure{$b_a$ such that $b(x, b_a, y_a=1) = b$}
		
		\State Let $r = b/y$.
		
		\If{case i applies}
			\State $b_a = b+x$
		\ElsIf{case I applies}
			\If{case ii applies}
				\State $b_a = b + x - \frac \balpha 2 (x-\bzp)^2$
			\Else
				\Comment{Now case iii applies.}
				\State $b_a = y_a - (y_a-\bzp)(1-r) + \frac 1 {2\balpha} (1-r)^2$
			\EndIf
		\ElsIf{case II applies}
			\Comment{Now case ii applies.}
			\If{case h applies}
				\State Let $\Delta_a := \frac \balpha 2 \cdot \left(\frac 1 \alpha \cdot (1-r) + \frac 1 2 \cdot y\right)^2$
				\State $b_a = y_a - \Delta_a$
			\Else
				\Comment{Now case l applies}
				\State Let $p' := \theta \cdot (\frac \theta {2\alpha} + y)$
				\State Let $d := \theta^2 \cdot \frac 2 \alpha \cdot (b - \btheta y)$
				\State Let $\Delta_a := p' - \sqrt d$
				\State $b_a = y_a - \Delta_a$
			\EndIf
		\Else
			\Comment{Now case III ii applies.}
			\If{case H applies}
				\State Let $\Delta_a := \frac {y-b} {1-\frac {x^2} {y_a^2}}$.
				\State $b_a = y_a - \Delta_a$
			\Else
				\Comment{Now case L applies.}
				\State Let $p' := \frac 1 2 (y - b + \theta y_a)$
				\State Let $q := (y-b) \theta y_a + \frac 1 4 \theta^2 x^2$
				\State Let $\Delta_a := p' - \sqrt{p^{\prime 2} - q}$
				\State $b_a = y_a - \Delta_a$
			\EndIf
		\EndIf
	\end{algorithmic}
\end{algorithm}

\begin{theorem}
	\label{thm:ba-reconstruction}
	Algorithm~\ref{alg:ba-reconstruction} is correct and only requires a constant number of basic arithmetic operations together with at most one square root.
\end{theorem}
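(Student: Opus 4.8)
The plan is to establish correctness by a case distinction that mirrors the branching of Algorithm~\ref{alg:ba-reconstruction}, and then to read off the computational cost. The key structural observation is that we need not solve for $b_a$ ``from scratch'': Theorem~\ref{thm:mon-b0} tells us the $b_a$ with $b(x;b_a,y_a)=b$ is unique, and Theorem~\ref{thm:region-detection} guarantees the region handed to Algorithm~\ref{alg:ba-reconstruction} is correct, so in each region it suffices to exhibit \emph{one} value of $b_a$ that solves $b=b(x;b_a,1)$ \emph{and} is consistent with that region's defining constraints; uniqueness then forces it to be the answer. Throughout, we use Proposition~\ref{prop:b} for the piecewise closed form of $b(x)$ once $\alpha,x_U,x_L$ are fixed, Propositions~\ref{prop:alpha}, \ref{prop:zp}, and~\ref{prop:xm} for the values of $\alpha$, $x_U=\min(\hzp,\bzp)$, and $x_L$ (and $r_L$) as functions of $(b_a,y_a)$, Proposition~\ref{prop:HLrm} for the facts $x_L=y_a$ in cases h/H and $r_L=\btheta$ in cases l/L, and Corollary~\ref{cor:iii} to discard cases II iii and III iii. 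It is convenient to work with $\Delta_a:=y_a-b_a$ and recall $y=y_a-x$, $r=b/y$.

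The ``no square root'' branches follow by direct substitution. If case i holds then $x\le x_U$, so $b=b_a-x$ by Proposition~\ref{prop:b} and $b_a=b+x$. In case I ii the region fixes $\alpha=\balpha$, $x_U=\bzp$, and the middle piece of Proposition~\ref{prop:b} is the linear equation $b=b_a-x+\tfrac{\balpha}{2}(x-\bzp)^2$. In case I iii we are past $x_L$, so $r=r_L$ is known; with $\alpha=\balpha$, $x_U=\bzp$, the relation $r_L=1-\balpha(x_L-\bzp)$ gives $x_L$ explicitly, and substituting $y_a-x_L$ into the $x_L$-formula of Proposition~\ref{prop:xm}, rewritten as $\Delta_a=\tfrac{\balpha}{2}\big[(y_a-\bzp)^2-(y_a-x_L)^2\big]$, reproduces the stated $b_a$. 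In case II h, $\alpha=\balpha$ and $x_L=y_a$, so we are in the middle piece; combining $b=b_a-x+\tfrac\alpha2(x-x_U)^2$ with $b=r(y_a-x)$ and the case-h identity $(y_a-x_U)^2=2\Delta_a/\alpha$ (i.e.\ $x_U=\hzph$) lets us eliminate $x_U$: a short manipulation gives $y_a-x_U=\tfrac12 y+\tfrac{1-r}{\alpha}$ and hence $\Delta_a=\tfrac\alpha2(y_a-x_U)^2=\tfrac{\balpha}{2}\big(\tfrac{1-r}{\alpha}+\tfrac y2\big)^2$. Case III H is the cleanest: $x_U=0$, $x_L=y_a$, and $\alpha=\halpha_{H}=2\Delta_a/y_a^2$, so the middle piece becomes $b=(y_a-\Delta_a)-x+\tfrac{\Delta_a}{y_a^2}x^2$, i.e.\ $y-b=\Delta_a\big(1-x^2/y_a^2\big)$, giving $\Delta_a=(y-b)/(1-x^2/y_a^2)$.

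The two remaining branches, case II l and case III L, are the ones using the single square root, precisely because $r_L=\btheta$ forces $x_L<y_a$. From $r_L=\btheta$ and Proposition~\ref{prop:xm} we get $x_L-x_U=\itheta/\alpha$, and combining with the $x_L$-formula gives the linear relation $\Delta_a=\itheta(y_a-x_U)-\tfrac{\itheta^2}{2\alpha}$ (the defining relation of $\hzpl$). Substituting this and $b=r(y_a-x)$ into the middle piece of Proposition~\ref{prop:b} and eliminating $x_U$ yields, after collecting terms, a monic quadratic $\Delta_a^2-2p'\Delta_a+q=0$; in case II l one has $\alpha=\balpha$, $p'=\itheta\big(\tfrac\itheta{2\alpha}+y\big)$, and discriminant $p'^2-q=\itheta^2\cdot\tfrac2\alpha(b-\btheta y)$, matching $p'$ and $d$ in the algorithm, while in case III L one additionally substitutes $\alpha=\halpha_{L}=\itheta^2/\big(2(b_a-\btheta y_a)\big)$ (the $\alpha$'s cancel cleanly) to obtain $p'=\tfrac12(y-b+\itheta y_a)$ and $q=(y-b)\itheta y_a+\tfrac14\itheta^2x^2$, again matching. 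It remains to justify taking the root with the \emph{minus} sign: of the two real roots, only one yields a curve on which $x$ falls in the linear segment $[x_U,x_L]$ --- the other puts $x$ beyond $x_L$, where the curve sits at the reserve ratio $\btheta<r$, contradicting the hypothesis $r>\btheta$ (and region detection has already excluded $x<x_U$). A short sign computation --- checking that $\Delta_a=p'-\sqrt{p'^2-q}$ yields $x\le x_L$ whereas the other root yields $x\ge x_L$ --- identifies the minus-sign root as the feasible one. This sign check, together with keeping the several substitutions straight, is the only delicate part; each individual computation is elementary.

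For the cost claim: each branch is a single closed-form expression built from a bounded number of additions, multiplications, divisions and squarings, with a square root occurring only in the two $r_L=\btheta$ branches and at most once; the enclosing case tree has bounded depth and the region is supplied as input rather than searched for. Hence Algorithm~\ref{alg:ba-reconstruction} runs with $O(1)$ basic arithmetic operations and at most one square root, as claimed.
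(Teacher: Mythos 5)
Your proposal follows essentially the same route as the paper's proof: region-by-region substitution of the formulas from Propositions~\ref{prop:alpha}, \ref{prop:zp}, and \ref{prop:xm} into Proposition~\ref{prop:b}, yielding linear equations in the square-root-free branches and monic quadratics in $\Delta_a$ in cases II~l~ii and III~L~ii, with the ``$-$'' root selected by showing the ``$+$'' root is infeasible and invoking uniqueness (Theorem~\ref{thm:mon-b0}); your closed-form derivations for cases i, I~ii, I~iii, II~h, III~H and your identification of $p'$ and $d$ in the quadratic branches all match the algorithm. The one place where your argument is not quite right is the sign check in case III~L: you assert that the rejected root ``yields $x \ge x_L$,'' but that is the correct mechanism only for case II~l (where, as in the paper, the ``$+$'' root forces $\rho(x) < \btheta$, contradicting case ii and the choice of $x_L$). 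In case III~L the paper instead shows the ``$+$'' root gives $\Delta_a \ge \theta y_a$, i.e.\ $r_a \le \btheta$, so the anchor would sit at or below the reserve-ratio floor; there the parameterized curve (and hence $x_L$, and the formula $\alpha = \halpha_L$, whose denominator $b_a - \btheta y_a$ becomes non-positive) is not even defined, so the ``compare $x$ with $x_L$'' test you describe cannot be carried out as stated. Your general framing --- exhibit one root consistent with the region's defining constraints and discard the other --- does cover this, but the actual verification in case III~L must be the $r_a \le \btheta$ contradiction rather than the $x \ge x_L$ one, and since you flag this sign computation as the only delicate step without executing it, you should carry it out explicitly for both quadratic branches.
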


\begin{center} \hyperlink{pf:thm:ba-reconstruction}{\texttt{[Link to Proof]}} \end{center}


\subsection{Full Implementation}

\newcommand\deltax{X}

\begin{algorithm}
	\caption{Full implementation of the redemption mechanism}
	\label{alg:redemption}
	\begin{algorithmic}[1]
		\Require{$\balphao, \bzpo, \btheta$, the current state $(x, b, y)$, and a desired amount of redemptions $\deltax \le y$; the collection of precomputed values $P$ calculated by Algorithm~\ref{alg:precompute} for the given parameters.}
		\Ensure{A new state $(x', b', y')$ after the amount of $\deltax$ has been redeemed. The difference $b'-b$ is paid out as the redemption amount.}
		\State $x' := x + \deltax$; $y' := y - \deltax$
		\If{$\deltax = 0$}
			\State $b' := b$
		\ElsIf {$b/y \ge 1$}
			\State $b' := b - \deltax$
		\ElsIf {$b/y \le \btheta$}
			\State $b' := b - b/y \cdot \deltax$
		\Else
			\State Let $y_a := y+x$. Let $x_N := x/y_a$, $b_N := b/y_a$, $y_N := y/y_a$.
			\State Apply Algorithm~\ref{alg:region-detection} to $(x_N, b_N, y_N)$ and $P$ to determine the region of this point.
			\State Apply Algorithm~\ref{alg:ba-reconstruction} to $(x_N, b_N, y_N)$ and the region information computed in the previous step to determine $b_{a,N}$ such that $b(x_N; b_{a,N}, y_a=1) = b_N$.
			\State Let $b_a := b_{a,N} \cdot y_a$.
			\State $b' := b(x + \deltax; b_a, y_a)$
		\EndIf
	\end{algorithmic}
\end{algorithm}

Algorithm~\ref{alg:redemption} shows the full implementation of our redemption mechanism and illustrates how our reconstruction algorithm is applied.
We receive the following consistency result, which states that going to some state and then performing redemption over some amount using Algorithm~\ref{alg:redemption} is the same as going to the state corresponding to the overall redemption amount directly.
This is because Algorithm~\ref{alg:redemption} correctly reconstructs the original anchor point $(b_a, y_a)$ and then simply continues on the corresponding curve of marginal redemption prices.
This is another form of path independence (see also Section~\ref{sec:path-properties}).
The proof is immediate by the preceding theorems and is omitted.

\begin{corollary}
	Fix values for $\balphao$, $\bzpo$, $\btheta$.
	Let $b_a, y_a$ be such that $1 > b_a/y_a > \btheta$.
	Let $x \in [0, y_a]$ and let $\deltax \in [0, y_a-x]$.
	Let $(x', b', y')$ be the new state as of Algorithm~\ref{alg:redemption} when applied to the parameters, the state $(x,\, b(x; b_a, y_a),\, y(x; b_a, y_a))$, and the collection of precomputed values as of Algorithm~\ref{alg:precompute} for the parameters.
	Then $x' = x + \deltax$, $b' = b(x + \deltax; b_a, y_a)$, and $y' = y(x + \deltax; b_a, y_a)$.
\end{corollary}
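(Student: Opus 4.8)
The plan is to unwind the definitions and reduce the statement to the reconstruction-correctness results already proved in this section, handling the degenerate branches of Algorithm~\ref{alg:redemption} separately. Throughout I would abbreviate $b(\cdot) := b(\,\cdot\,; b_a, y_a)$ and $y(\cdot) := y(\,\cdot\,; b_a, y_a)$ for the curves generated by the given anchor point, and set $r := b(x)/y(x)$, the reserve ratio of the input state $\bigl(x, b(x), y(x)\bigr)$. From the first line of Algorithm~\ref{alg:redemption} one reads off $x' = x + \deltax$ and $y' = y(x) - \deltax = (y_a - x) - \deltax = y(x+\deltax)$, so the only thing to verify is $b' = b(x+\deltax)$, and I would do this by checking each branch of the algorithm in turn.

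\textbf{Degenerate branches.} First I would note that, because the anchor point satisfies $1 > r_a > \btheta$, the curve construction of Section~\ref{sec:calc-params} makes $r(x_L)\ge\btheta$ the minimum of the reserve-ratio curve and keeps $r(x)<1$ everywhere; hence $r\in[\btheta,1)$ and only the last two branches of Algorithm~\ref{alg:redemption} can fire. If $\deltax = 0$, the output $b(x) = b(x+0)$ is correct. If $r = \btheta$ (the only way the $b/y \le \btheta$ branch fires), then $r$ already equals the curve minimum $r_L$, which forces $x \ge x_L$ (the reserve-ratio curve is constant at $r_L$ on $[x_L, y_a]$ and strictly above $r_L$ before $x_L$); therefore the marginal redemption price equals $r_L = \btheta$ on all of $[x, y_a]$, so $b(x+\deltax) = b(x) - \int_x^{x+\deltax} p(x')\,\de x' = b(x) - \btheta\,\deltax$, which is exactly the branch's output. (If $\btheta = 0$ and $r = 0$ the same argument applies with $p \equiv 0$; $r < \btheta$ is impossible.)

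\textbf{Main branch.} It remains to treat $\deltax > 0$ with $\btheta < b/y < 1$, where I would show that Algorithm~\ref{alg:redemption} recovers the \emph{true} anchor point. The algorithm sets its anchor supply to $y(x) + x = (y_a - x) + x = y_a$ (correct), then forms the normalized state $(x_N, b_N, y_N) := (x/y_a,\, b(x)/y_a,\, y(x)/y_a)$. By Lemma~\ref{lem:scaling} with $\zeta = 1/y_a$ — valid since $\bzp,\balpha$ scale with $y_a$ as assumed in this section — one gets $b\bigl(x_N;\, b_a/y_a,\, 1\bigr) = \tfrac{1}{y_a}\,b(x;b_a,y_a) = b_N$ and $\btheta < b_N/y_N = r < 1$, so the normalized state genuinely arises from the normalized anchor $(b_a/y_a, 1)$, which by Theorem~\ref{thm:mon-b0} is the \emph{unique} anchor reserve value consistent with it. By Theorems~\ref{thm:region-detection} and \ref{thm:ba-reconstruction}, Algorithm~\ref{alg:region-detection} correctly labels the region of $(x_N, b_N, y_N)$ and Algorithm~\ref{alg:ba-reconstruction} then outputs precisely $b_{a,N} = b_a/y_a$; rescaling gives $b_a^{\mathrm{alg}} := b_{a,N}\cdot y_a = b_a$ (this last step is Theorem~\ref{thm:normalization}). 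The algorithm finally returns $b' := b(x+\deltax;\, b_a,\, y_a)$, which is well-defined ($x+\deltax \le y_a$) and computable analytically via Proposition~\ref{prop:b}, giving the claim.

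\textbf{Main obstacle.} I do not expect any real difficulty — this is genuinely immediate from the preceding theorems, as the paper states — but the step needing the most care is the branch analysis: one must confirm that every state of the form $\bigl(x, b(x; b_a, y_a), y(x; b_a, y_a)\bigr)$ with $\btheta < r_a < 1$ falls into a branch whose output has already been shown to equal $b(x+\deltax; b_a, y_a)$. The $r = \btheta$ boundary is the subtle one, since there Algorithm~\ref{alg:redemption} short-circuits rather than running the reconstruction machinery, so it needs the separate elementary argument above (resting on $r_L$ being the curve minimum and on $p$ equalling the reserve ratio on the flat tail); for the main branch one just checks that the hypotheses of Lemma~\ref{lem:scaling} and Theorems~\ref{thm:region-detection}, \ref{thm:ba-reconstruction} (namely $\btheta < b_N/y_N < 1$ and the scaling convention for $\bzp,\balpha$) are met. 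Everything else is pure unwinding of definitions.
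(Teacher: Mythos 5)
Your proof is correct and follows exactly the route the paper intends: the paper omits this proof as ``immediate by the preceding theorems,'' namely that Algorithm~\ref{alg:redemption} reconstructs the true anchor point via Lemma~\ref{lem:scaling}, Theorem~\ref{thm:normalization}, Theorems~\ref{thm:region-detection} and \ref{thm:ba-reconstruction}, and uniqueness (Theorem~\ref{thm:mon-b0}), and then continues along the same marginal-price curve. Your additional care with the degenerate branches ($\deltax=0$ and the $r=\btheta$ short-circuit, justified by $r_L$ being the curve minimum and $p=r_L$ on the flat tail) correctly fills in details the paper leaves implicit, but does not change the argument.
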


\section{Proofs}\label{apx:proofs}

\subsection{Technical Lemmas}\label{apx:proofs-technical}

\paragraph{Prop.~\ref{prop:b}} \hypertarget{pf:prop:b}{}
\begin{proof}
	The first equality is just the definition. For the second equality, first assume that $x\le x_{U}$. Then $p(x')=1\,\forall x'\le x$ and thus the equality holds. For the second case, we have
	\begin{align*}
		b_{a}-\int_{0}^{x} p(x')\de x' &=b_{a}-x_{U}-\int_{x_{U}}^{x}1-\alpha(x'-x_{U})\de x' \\
		&= b_{a}-x_{U}-\int_{0}^{x-x_{U}}1-\alpha x'\,\de x' \\
		&= b_{a}-x_{U}-(x-x_{U})+\frac{\alpha}{2}(x-x_{U})^{2} \\
		&= b_{a}-x+\frac{\alpha}{2}(x-x_{U})^{2}
	\end{align*}

	Finally, if $x_{L}\le x$, we have
	\[b(x)	=b(x_{L})-\int_{0}^{x-x_{L}}r_{L}\,\de x'=r_{L}(y_{a}-x_{L})-r_{L}(x-x_{L})=r_{L}(y_{a}-x).\quad\qedhere\]
\end{proof}

\noindent\rule{0.49\textwidth}{1pt}
\paragraph{Prop.~\ref{prop:xm}} \hypertarget{pf:prop:xm}{}
\begin{proof}
	Consider the definition of $\rhou(x)$, and the implied value for $b(x)$, for the case $x_{U}\le x\le y_{a}$. Let $x'=x-x_{U}$ and assume first that $x < y_a$. We have
	\begin{align*}
		&  & \rhou(x) & =r(x)\\
		\Iff &  & 1-\alpha x' & =\frac{b_{a}-(x_{U}+x')+\alpha/2x^{\prime2}}{y_{U}-x'}\\
		\Iff &  & (y_{U}-x')(1-\alpha x') & =b_{a}-x_{U}-x'+\frac{\alpha}{2}x'^{2}\\
		\Iff &  & \frac{\alpha}{2}x'^{2}-\alpha y_{U}x'+y_{a}-b_{a} & =0\\
		\Iff &  & x' & =y_{U}\pm\sqrt{y_{U}^{2}-2/\alpha(y_{a}-b_{a})}\\
		\Iff &  & x & =y_{a}\pm\sqrt{y_{U}^{2}-2/\alpha(y_{a}-b_{a})}.
	\end{align*}
	Note that, if the discriminant is positive, then the ``+'' solution is $>y_{a}$ and thus not acceptable for $x_L$, so we only need to consider the ``-'' solution. Obviously, this exists iff (\ref{eq:xm-exists}) holds. Assuming that \eqref{eq:xm-exists} \emph{does} hold, we obviously have $x\le y_{a}$ and furthermore 		
	\begin{align*}
		&  & x=y_{a}-\sqrt{y_{U}^{2}-2/\alpha\Delta_{a}} & >x_{U}\\
		\Iff &  & y_{U}-\sqrt{y_{U}^{2}-2/\alpha\Delta_{a}} & >0,
	\end{align*}
	which is true because the radicand is $<y_{U}^{2}$ because $\Delta_{a}>0$ by our basic assumptions. Thus, this $x$ serves the role of $x_{L} := x$, and it is unique by the above.
	
	If for the above choice of $x_L$ we have $x_L < y_a$, then the identity $p(x_L)=r(x_L)$ follows by construction.
	Consider now the case where $x_L = y_a$. This is the case 
	iff the above discriminant is $0$, i.e., iff $\alpha=2\frac{y_{a}-b_{a}}{\left(y_{a}-x_{U}\right)^{2}}$. In this case, it is easy to see that $b(x_{L})=y(x_{L})=0$ and we can use L'Hospital's rule to compute
	$$r_{L}=\lim_{x\to y_{a}}\frac{b(x)}{y(x)}=\lim_{x\to y_{a}}\frac{b_{a}-x+\alpha/2(x-x_{U})^{2}}{y_{a}-x}=\frac{-1+\alpha(x-x_{U})}{-1}=1-\alpha(y_{a}-x_{U})=p(x_{L}),$$
	so the identity $r_{L}=p(x_{L})$ still holds.
	
	The formulas for $r_{L}$ and $b_{L}$ now simply follow from the fact that $r_{L}=p(x_{L})$ and $b_{L}=r_{L}\cdot y(x_{L})$.
	
	Finally, consider the case where (\ref{eq:xm-exists}) does not hold and where we choose $x_{L}:=\infty$ to define $p$. Then by applying Prop.~\ref{prop:b} to $x:=y_{a}$ we receive
	$$b(y_{a})=b_{a}-y_{a}+\frac{\alpha}{2}(y_{a}-x_{U})^{2}<0,$$
	where the inequality is by the assumption that (\ref{eq:xm-exists}) does not hold. By continuity of $b$, there exists $x<y_{a}$ such that $b(x)=0$.
\end{proof}

\noindent\rule{0.49\textwidth}{1pt}
\paragraph{Lemma~\ref{lem:thetafloor-prep}} \hypertarget{pf:lem:thetafloor-prep}{}
\begin{proof}
By Proposition~\ref{prop:xm} we have that $r_{L}\ge\btheta$ iff
\begin{align*}
	&  & 1-\alpha(x_{L}-x_{U}) & \ge\btheta\\
	\Iff &  & 1-\alpha\left(y_{U}-\sqrt{y_{U}^{2}-2/\alpha\Delta_{a}}\right) & \ge\btheta\\
	\Iff &  & \alpha\sqrt{y_{U}^{2}-2/\alpha\Delta_{a}} & \ge\alpha y_{U}-\itheta\\
	\Iff &  & \alpha y_{U}-\itheta & \le0\\
	\text{or} &  & \alpha^{2}\left(y_{U}^{2}-2/\alpha\Delta_{a}\right) & \ge\left(\alpha y_{U}-\itheta\right)^{2}.
\end{align*}
The conclusion now follows via another simple algebraic transformation.
\end{proof}

\noindent\rule{0.49\textwidth}{1pt}
\paragraph{Prop.~\ref{prop:alpha}} \hypertarget{pf:prop:alpha}{}
\begin{proof}
	It is easy to see that the transition is continuous and thus well-defined;
	more in detail, in the edge case $r_{a}=\frac{1+\btheta}{2}$, we
	have $\halpha_{H}=\halpha_{L}=\itheta/y_{a}.$
	
	By the discussion at the beginning of this section, we only need to
	consider $x_{U}=0$. $\halpha$ is the maximal $\alpha$ such that
	\eqref{eq:xm-exists} holds and one of (TL) or (TH) hold (with $x_{U}=0$). Note that
	\eqref{eq:xm-exists} alone puts a bound on $\alpha$ and the right-hand side of \eqref{eq:xm-exists}
	is equal to $\halpha_{H}$ for $x_{U}=0$. Thus, whenever it is possible
	to choose $\alpha=\halpha_{H}$, this is minimal. We can choose $\alpha=\halpha_{H}$
	if (TH) holds for $\halpha$, i.e., if
	\begin{align*}
		&  & 2\frac{1-r_{a}}{y_{a}}\cdot y_{a} & \le\itheta\\
		\Iff &  & r_{a} & \ge\frac{1+\btheta}{2}.
	\end{align*}
	The equivalence immediately follows from the definition of $\itheta = 1-\btheta$.
	If this inequality does not hold, then there is no $\alpha$ that
	satisfies both \eqref{eq:xm-exists} and (TH). (observe that the two limit $\alpha$
	in different directions!)
	
	Assume next that $r_{a}\le\frac{1+\btheta}{2}$. Then $\alpha$ must
	be chosen minimal such as to satisfy \eqref{eq:xm-exists} and (TL). The minimal $\alpha$
	satisfying (TL) with $x_{U}=0$ is obviously $\halpha_{L}$. It remains
	to show that this also satisfies \eqref{eq:xm-exists}, i.e., that
	\begin{align*}
		&  & \frac{1}{2}\frac{\itheta^{2}}{b_{a}-\btheta y_{a}} & \ge2\frac{y_{a}-b_{a}}{y_{a}^{2}}\\
		\Iff &  & \frac{1}{2}\frac{\itheta^{2}}{r_{a}-\btheta} & \ge2(1-r_{a})\\
		\Iff &  & \frac{1}{4}\itheta^{2} & \ge(1-r_{a})(r_{a}-\btheta).
	\end{align*}
	The equivalences are by definition of $r_a=b_a/y_a$ and straightforward transformation.
	Let $\zeta=\frac{1+\btheta}{2}-r_{a}$. By assumption, $\zeta\ge0$
	and we have $1-r_{a}=\itheta/2+\zeta$ and $r_{a}-\btheta=\itheta/2-\zeta$.
	Thus, the above inequality is equivalent to
	\[
	\frac{1}{4}\itheta^{2}\ge(\itheta/2+\zeta)(\itheta/2-\zeta)=\frac{\itheta^{2}}{4}-\zeta^{2},
	\]
	which is obviously true.
\end{proof}

\noindent\rule{0.49\textwidth}{1pt}
\paragraph{Prop.~\ref{prop:zp}} \hypertarget{pf:prop:zp}{}
\begin{proof}
	We proceed similarly to Proposition~\ref{prop:alpha}. Again, it
	is easy to see that in the edge case $\alpha\Delta_{a}=\frac{1}{2}\itheta^{2}$
	we have $\hzph=\hzpl=y_{a}-\frac{\itheta}{\alpha}$. We need to choose
	$x_{U}$ such as to satisfy \eqref{eq:xm-exists} and one of (TH) or (TL), this time
	without assuming $x_{U}=0$ of course. \eqref{eq:xm-exists} is equivalent to
	\[
	x_{U}\le y_{a}-\sqrt{2\frac{\Delta_{a}}{\alpha}}.
	\]
	
	If we can choose $x_{U}$ equal to the right-hand side such that (TH)
	is satisfied for this choice, then this is optimal. This is the case
	if
	\begin{align*}
		&  & \alpha\cdot\sqrt{2\frac{\Delta_{a}}{\alpha}} & \le\itheta\\
		\Iff &  & \alpha\Delta_{a} & \le\frac{1}{2}\itheta^{2}.
	\end{align*}
	If this does not hold, no $x_{U}$ satisfies both \eqref{eq:xm-exists} and (TH).
	
	Assume now that $\alpha\Delta_{a}\ge\frac{1}{2}\itheta^{2}$. We need
	to satisfy (\eqref{eq:xm-exists} and) (TL). Clearly, (TL) holds iff
	\[
	x_{U}\ge\frac{b_{a}-\btheta y_{a}}{\itheta}-\frac{1}{2\alpha}\itheta^{2}=y_{a}-\frac{\Delta_{a}}{\itheta}-\frac{1}{2\alpha}\itheta=\hzpl.
	\]
	It remains to check that $x_{U}=\hzpl$ satisfies \eqref{eq:xm-exists}. This is the
	case iff
	\begin{align*}
		&  & y_{U}^{2} & \ge2\frac{\Delta_{a}}{\alpha}\\
		&  & \left(y_{a}-\left(y_{a}-\frac{\Delta_{a}}{\itheta}-\frac{1}{2\alpha}\itheta\right)\right)^{2} & \ge2\frac{\Delta_{a}}{\alpha}\\
		&  & \left(\frac{\Delta_{a}}{\itheta}+\frac{1}{2\alpha}\itheta\right)^{2}-2\frac{\Delta_{a}}{\alpha} & \ge0\\
		&  & \left(\frac{\Delta_{a}}{\itheta}-\frac{1}{2\alpha}\itheta\right)^{2} & \ge0,
	\end{align*}
	which is obviously true. The last line follows using the binomial
	formulae since $2\frac{\Delta_{a}}{\alpha}=2\cdot2\cdot\frac{\Delta_{a}}{\itheta}\cdot\frac{1}{2\alpha}\itheta$.
\end{proof}

\noindent\rule{0.49\textwidth}{1pt}

\subsection{Main Results}\label{apx:proofs-main}

\paragraph{Proposition~\ref{prop:HLrm}} \hypertarget{pf:HLrm}{}
\begin{proof}
        In case II h, we have $x_{U}=\hzp=\hzph$ and thus (by the proof of
	Proposition~\ref{prop:zp}) Inequality~\eqref{eq:xm-exists} holds with equality. It follows
	immediately from Proposition~\ref{prop:xm} that this implies $x_{L}=y_{a}$.
	Likewise in case III H.
		In case II l, $x_{U}=\hzp=\hzpl$
	and thus (by the proof of Proposition~\ref{prop:zp}) (TL) holds
	with equality. By the proof of Lemma~\ref{lem:thetafloor-prep},
	this immediately implies $r_{L}=\btheta$.
        Likewise for case III L.
\end{proof}

\noindent\rule{0.49\textwidth}{1pt}
\paragraph{Corollary~\ref{cor:iii}} \hypertarget{pf:cor:iii}{}
\begin{proof}
	By Proposition~\ref{prop:HLrm}, since we're in case II or III we
	have $x_{L}=y_{a}$ or $r_{L}=\btheta$. Since we're in case iii,
	$x_{L}\le x<y_{a}$ and thus $x_{L}=y_{a}$ is not possible. We must
	therefore have $r_{L}=\btheta$. And, again since $x\ge x_{L}$, $r(x)=r_{L}$.
\end{proof}

\noindent\rule{0.49\textwidth}{1pt}
\paragraph{Theorem~\ref{thm:mon-b0}} \hypertarget{pf:mon-b0}{}
\begin{proof}
	If $S \subseteq [0,y_a] \times [0,y_a]$ is a set of $(b_a, x)$ pairs, define the \emph{$b_a$-interior} of $S$ as the set
	$\{(b_a, x) \in S \mid \exists \eps > 0: (b_a + \eps, x), (b_a - \eps, x) \in S\}$.
        It suffices to show the following: for any point $(b_a, x)$ that lies within the $b_a$-interior of any of the sets of pairs $(b_a, x)$ defined by the following (topologically closed) conditions,
	if $r(x;b_{a})>\btheta$, then we have \[\frac{\de b(x;b_{a})}{\de b_{a}} > 0.\]
        We will perform case distinction in such a way that this derivative always exists.
	Assume that $r(x;b_{a})>\btheta$.
	The statement is easy to see in the following cases:
	
	\textbf{Case i:} This case is trivial because here, $b(x;b_{a})=b_{a}-x$ and so $\frac{\de b(x;b_{a})}{\de b_{a}} = 1 > 0$.

	\textbf{Case I ii and I iii:} Here
	the statement follows immediately from Prop.~\ref{prop:b} and Prop.~\ref{prop:xm} Part~\ref{itm:xm:fmls}. Note that whenever we are in case I, the parameters $\alpha=\balpha$ and $x_{U}=\bzp$ are constant.

	\textbf{Case II iii and III iii:} By Corollary~\ref{cor:iii}, we do not need to discuss these cases.

	It remains to show that the statement holds for case II ii and III ii, which requires some calculation. We distinguish four cases: II h ii, II l ii, III H ii, and III L ii. We combine Prop.~\ref{prop:b} with Propositions \ref{prop:zp} and \ref{prop:alpha}, respectively, to compute the partial derivatives.

	\textbf{Case II h ii:} Here we have $x_U = \hzph$ and
	\begin{align*}
		\frac{\de b(x;b_{a})}{\de b_{a}} & =\frac{\de}{\de b_{a}}\left[b_{a}-x+\frac{\balpha}{2}(x-\hzph(b_{a}))^{2}\right]\\
		& =1+\frac{\balpha}{2}\cdot2(x-\hzph(b_{a}))\cdot(-1)\cdot\frac{\de}{\de b_{a}}\hzph(b_{a})\\
		& =1-\balpha(x-\hzph(b_{a}))\cdot\frac{\de}{\de b_{a}}\left[y_{a}-\sqrt{2\frac{y_{a}-b_{a}}{\balpha}}\right]\\
		& =1-\balpha(x-\hzph(b_{a}))\cdot(-1)\cdot\frac{1}{2}\frac{1}{\sqrt{2\frac{y_{a}-b_{a}}{\balpha}}}\cdot(-\frac{2}{\balpha})\\
		& =1-\frac{x-\hzph(b_{a})}{y_{a}-\hzph(b_{a})}>0.
	\end{align*}
	The last equality is by definition of $\hzph(b_a)$ and
	the inequality is because $x<y_{a}$ and $x,y_{a}>\hzph(b_{a})$ by assumption.

	\textbf{Case II l ii:} Here we have $x_{U}=\hzpl$ and
	\[
	\frac{\de}{\de b_{a}}\hzpl(b_{a})=\frac{\de}{\de b_{a}}\left[y_{a}-\frac{\Delta_{a}}{\itheta}-\frac{1}{2\balpha}\itheta\right]=\frac{1}{\itheta}.
	\]
	Thus,
	\begin{align*}
		&  & \frac{\de b(x;b_{a})}{\de b_{a}}=1-\balpha(x-\hzph(b_{a}))\frac{1}{\itheta} & >0\\
		\Iff &  & p(x;b_{a})=1-\balpha(x-\hzph(b_{a})) & >\btheta
	\end{align*}
	This is true because $x\le x_{L}$ by assumption and thus
	$p(x)\ge r(x)$ and we have $r(x)>\btheta$ by assumption.
	
	\textbf{Case III H ii:} In this case we have $x_{U}=0$ and $\alpha=\halpha_{H}(b_{a})$ and we have
	\begin{align*}
		\frac{\de b(x;b_{a})}{\de b_{a}} & =\frac{\de}{\de b_{a}}\left[b_{a}-x+\frac{\halpha_{H}(b_{a})}{2}x^{2}\right]\\
		& =1+\frac{1}{2}x^{2}\frac{\de}{\de b_{a}}\halpha_{H}(b_{a})\\
		& =1+\frac{1}{2}x^{2}\cdot2\cdot\left(-\frac{1}{y_{a}^{2}}\right)\\
		& =1-\left(\frac{x}{y_{a}}\right)^{2}>0
	\end{align*}
	since $x<y_{a}$.

	\textbf{Case III L ii:} Here we have
	\[
	\frac{\de}{\de b_{a}}\halpha_{L}(b_{a})=\frac{1}{2}\cdot\itheta^{2}\cdot(-1)\cdot\frac{1}{\left(b_{a}-\btheta y_{a}\right)^{2}}=-\frac{1}{2}\frac{\itheta^{2}}{\left(b_{a}-\btheta y_{a}\right)^{2}}=-\halpha_L(b_a)\frac{1}{b_{a}-\btheta y_{a}},
	\]
	where the last equality is by definition of $\halpha_L(b_a)$.
	Therefore, we have (writing just $\alpha$ for $\halpha_{L}(b_{a})$ in the interest of brevity)
	\begin{align*}
		&  & \frac{\de b(x;b_{a})}{\de b_{a}} & =1+\frac{1}{2}x^{2}\cdot\left(-\alpha\frac{1}{b_{a}-\btheta y_{a}}\right)\\
		&  &  & =1-\alpha\cdot\frac{1}{2}x^{2}\frac{1}{b_{a}-\btheta y_{a}}
		.
	\end{align*}
	To see that this is positive, first note that, by Proposition~\ref{prop:HLrm}
	and Proposition~\ref{prop:xm}, $\btheta=r_{L}=1-\alpha x_{L}$ and
	thus $x_{L}=\itheta/\alpha$. Since we're in case ii and we have $r(x)>\btheta=r_{L}$, we must
	have $x<x_{L}=\itheta/\alpha$ and thus
	\begin{align*}
		1-\alpha\cdot\frac{1}{2}x^{2}\frac{1}{b_{a}-\btheta y_{a}} & >1-\alpha\cdot\frac{1}{2}\frac{\itheta^{2}}{\alpha^{2}}\frac{1}{b_{a}-\btheta y_{a}}\\
		& =1-\frac{1}{\alpha}\cdot\frac{1}{2}\frac{\itheta^{2}}{b_{a}-\btheta y_{a}}=1-\frac{1}{\alpha}\cdot\alpha=0
	\end{align*}
	as required.
	This concludes the proof.
\end{proof}

\noindent\rule{0.49\textwidth}{1pt}
\paragraph{Theorem~\ref{thm:path_ind}} \hypertarget{pf:path_ind}{}
\begin{proof}
	Let $(x^0, b^0, y^0)$ be a solution to the initial value problem at $S_0$. Note that $x^0,b^0,y^0$ are functions of $x$. Then the functions $b^{X}(x):=b^{0}(X+x)$ (and analogously for $y$) form the solution to the IVP at $S_{X}$. To see this, note that they satisfy the differential equations (because $(x^0,b^{0},y^{0})$ do and translation by $X$ does not affect the derivatives) and they satisfy the initial values by choice of $S_{X}$. We now have $b_{X,Y}=b^{X}(Y)=b^{0}(X+Y)=b_{X+Y}$, and likewise for the other two. This is easy to see for $y$ because it is simply $x$ plus a constant; however our argument does not depend on this fact. Overall, $S_{X,Y}=S_{X+Y}$. For the redemption amounts, we now have $P_{X}+P_{X,Y}=(b_{0}-b_{X})+(b_{X}-b_{X,Y})=b_{0}-b_{X,Y}=b_{0}-b_{X+Y}=P_{X+Y}$.
\end{proof}

\noindent\rule{0.49\textwidth}{1pt}
\paragraph{Lemma~\ref{lem:path-deficiency1}} \hypertarget{pf:lem-path-deficiency1}{}
\begin{proof}
	Separate into two cases: (i) when $\frac{\de f}{\de t} \geq 0$, and (ii) when $\frac{\de f}{\de t} < 0$. These correspond to the $x$ value increasing or decreasing respectively along the path $f$. Suppose we are at the current state $(f(t), r_{f,\mathfrak r}(t))$ and that this point is on the reserve ratio curve $\hat {\mathfrak r} \in \mathcal R$.
	In (i), $x$ is increasing (redemption operation), and so (\ref{eq:redeem-odes-fee}) applies. Taking derivative of $r(x)$,
	\begin{align*}
		\frac{\de r}{\de x} &= \frac{\frac{\de b}{\de x} y(x) + b(x)}{y^2(x)} \\
		&= \frac{r(x) - \rho(x, b(x), y(x)) + \gamma(x,b(x),y(x))}{y(x)}.
	\end{align*}
	The derivative is greater (in this case less negative) when $\gamma > 0$. When $\gamma = 0$, we have the system (\ref{eq:odes}), and the reserve ratio follows $\hat {\mathfrak r} (x)$.
	
	In (ii), $x$ is decreasing (minting operation), and so (\ref{eq:mint-odes}) applies. Taking derivative of $r(x)$,
	$$\frac{\de r}{\de x} = \frac{ r(x) - \varphi(x,b(x),y(x)) } {y(x)}.$$
	Since $\varphi \geq 1$, we have $\varphi \geq \rho$. And so the derivative is greater than the corresponding derivative in (\ref{eq:odes}), which would keep us on the reserve ratio curve $\hat {\mathfrak r}$.
	
	In both cases, the path does not bring us to a region below $\hat {\mathfrak r}$ in $(x,r)$ space. Since Theorem~\ref{thm:mon-b0} gives us that $r_a$ (via $b_a$) is monotonic in $r$, it must be non-decreasing along this path.
\end{proof}

\noindent\rule{0.49\textwidth}{1pt}
\paragraph{Theorem~\ref{thm:path-deficiency1}} \hypertarget{pf:path-deficiency1}{}
\begin{proof}
	From Lemma~\ref{lem:path-deficiency1}, we know that $r_a$ is weakly increasing along the path. Since reserve ratio curves are point-wise non-decreasing in their parameter $r_a$, the result follows immediately.
\end{proof}

\noindent\rule{0.49\textwidth}{1pt}
\paragraph{Theorem~\ref{thm:path-deficiency2}} \hypertarget{pf:path-deficiency2}{}
\begin{proof}
	This setup is equivalent to changing the RHS of (\ref{eq:odes}) to $(1-\eps)\rho(\cdot)$, which is a constant scaling. From linearity of integration, the only thing that changes structurally about the system is the hyperparameters (static parameters), which can be thought of as mapping in the following ways:
	\begin{itemize}
		\item a thus far implicit parameter specifying the $\$1$ target $\mapsto$ $(1-\eps)$ target,
		\item $\balpha \mapsto (1-\eps)\balpha$,
		\item $\btheta \mapsto (1-\eps)\btheta$.
	\end{itemize}
	A useful interpretation is that this scaling can be effectively `reversed' by normalizing the system (in this case the redemption system in isolation) back to a $\$1$ target. The underlying hyperparameters effectively shift slightly from the reverse scaling, but we are left with the same underlying structure and machinery.
	Since we proved the above results for all hyperparameter values --and it is simple to add in the thus far implicit hyperparameter specifying the target through this normalization argument-- the results still stand even though the hyperparameters shift.
	In particular, we have retained path independence for the redemption system in (\ref{eq:redeem-odes-fee}).
	
	To show the second result of the theorem, we need to add in the minting process as described in Section~\ref{sec:extend-fees-minting}. Consider a sole trader interacting with the system. If they desire a net redemption from the protocol, then by the path independence of the redemption curve, there is no incentive to subdivide the redemptions into several smaller redemptions. The remaining possibility is that the trader subdivides the net redemption into a sequence of redemptions and minting that nets to the desired redemption. It is simple to see that this is not profitable for two consecutive mint and redeem trades since the integrands $(1-\eps)\rho(\cdot) \leq \varphi(\cdot)$ in this region (reserve ratio $\leq 1$). In words, this is not profitable because the trader must pay a non-negative spread between minting and redeeming in backtracking in a path in $x$, and so it is more profitable not to backtrack (it is always better to cancel out a mint with a redemption). From a simple induction then, the best option for the trader is to choose the net redemption desired in aggregate.
\end{proof}

\subsection{Proofs regarding Implementation (Appendix~\ref{apx:implementation})}
\label{apx:proofs-implementation}

\paragraph{Lemma~\ref{lem:scaling}} \hypertarget{pf:lem:scaling}{}
\begin{proof}
	The lemma follows from the fact that all dynamic parameters and the input and output of the function $p$ scale in $\zeta$.
	More in detail, the statement follows from the following observations.	\begin{itemize}
		\item We have $\alpha(\zeta b_a, \zeta y_a) = 1/\zeta \cdot \alpha(b_a, y_a)$. To see this, note that $r_a(\zeta b_a, \zeta y_a) = r_a(b_a, y_a)$ and consider Proposition~\ref{prop:alpha} to see that $\halpha(\zeta b_a, \zeta y_a) = 1/\zeta \cdot \halpha(b_a, y_a)$. Note further that  $\balpha(\zeta y_a) = \balphao \cdot 1/(\zeta y_a) = 1/\zeta \cdot \balpha(y_a)$ by choice of $\balpha$.

		\item Similarly, we have $x_U(\zeta b_a, \zeta y_a) = \zeta x_U(b_a, y_a)$. To see this, note that $\Delta_a(\zeta b_a, \zeta y_a) = \zeta \Delta_a(b_a, y_a)$ and apply the preceding statement on $\alpha$ to Proposition~\ref{prop:zp} to receive that $\hzp(\zeta b_a, \zeta y_a) = \zeta \hzp(b_a, y_a)$. Also note that $\bzp(\zeta y_a) = \bzpo \zeta y_a = \zeta \bzp(y_a)$ by choice of $\bzp$.

		\item It is now easy to see that that $x_L(\zeta b_a, \zeta y_a) = \zeta x_L(b_a, y_a)$ and $r_L(\zeta b_a, \zeta y_a) = r_L(b_a, y_a)$ using Proposition~\ref{prop:xm} and then one can directly verify the statement of the lemma using Proposition~\ref{prop:b}.\qedhere
	\end{itemize}
\end{proof}

\noindent\rule{\textwidth}{1pt}
\paragraph{Lemma~\ref{lem:ba_for_xu}} \hypertarget{pf:lem:ba_for_xu}{}

\begin{proof}
	We distinguish cases h and l. It is easy to see that
	\begin{alignat*}{4}
		\hzph &= z &\qquad\Iff\qquad b_a &= y_a - \frac \alpha 2 y_z^2
		\\
		\hzpl &= z &\qquad\Iff\qquad b_a &= y_a - \theta y_z + \theta^2 \cdot \frac 1 {2\alpha}
		.
	\end{alignat*}
	By plugging the above formulas for $b_a$ into the condition $\alpha \Delta_a \le \frac 1 2 \theta^2$ (which defines cases h and l), we receive by simple algebraic transformations:
	\begin{alignat*}{4}
		\text{Case h applies}
		&\,\wedge \hzph = z
		&&\qquad\Then\qquad 1-\alpha y_z \ge \btheta
		\\
		\text{Case l applies}
		&\,\wedge \hzpl = z
		&&\qquad\Then\qquad 1-\alpha y_z \le \btheta
		.
	\end{alignat*}
	This implies the statement because exactly one of the cases on the right-hand side (or both in case of equality) must be satisfied.
\end{proof}

\noindent\rule{\textwidth}{1pt}
\paragraph{Lemma~\ref{lem:precompute}} \hypertarget{pf:lem:precompute}{}

\begin{proof}
	1.
	First note that $b_a^{I/II}$ always exists because for $b_a \to 1$, we will eventually arrive in case I. By construction, both case I and II hold at $b_a^{I/II}$.
	Statement (a) now follows by monotonicity of the choice of parameters.
	Statement (b) follows by construction because case I applies at $b_a^{I/II}$.
	Statement (c) follows from statement (a) because, by strict monotonicity (Theorem~\ref{thm:mon-b0}), $b(x; b_a) \ge b(x; b_a^{I/II})$ iff $b_a \ge b_a^{I/II}$.

	The arguments for the other constructions are analogous.

	The equations for $b_a^{h/l}$ and $b_a^{H/L}$ immediately result from Propositions \ref{prop:zp} and \ref{prop:alpha}, respectively, by converting the conditions distinguishing their respecitive cases into equalities.
\end{proof}

\noindent\rule{\textwidth}{1pt}
\paragraph{Prop.~\ref{prop:precompute-runtime}} \hypertarget{pf:prop:precompute-runtime}{}
\begin{proof}
	The statement regarding operations is obvious. The algorithm requires the computation of one square root, in line~2.
	Note that no square root is required for the computation of $x_U^{h/l}$ because the calculation of $\hzp$ for case~l can be applied by choice of $b_a^{h/l}$.
	Candidate values for these numbers can be easily verified without taking a square root.
	To verify the only non-trivial value $x_L^{I/II}$, it is sufficient to check whether $p(x; b_a^{I/II}, y_a=1, \alpha=\balpha, x_U=\bzp) = r(x; b_a^{I/II}, y_a=1, \alpha=\balpha, x_U=\bzp)$, which does not require a square root.
\end{proof}

\noindent\rule{\textwidth}{1pt}
\paragraph{Theorem~\ref{thm:region-detection}} \hypertarget{pf:thm:region-detection}{}
\begin{proof}
	The statement regarding operations is obvious. Note in particular that output values of the function $b$ are only ever calculated while providing fixed values for the dynamic parameters $\alpha$, $x_U$, $x_L$.
	Correctness of detection of cases I--III and h/l and H/L immediately follows from Lemma~\ref{lem:precompute}.

	It remains to show that cases i--iii are detected correctly. We go through the different cases one-by-one.

	First assume that $(x, b, y)$ lies in case I.
	Obviously we're in case i iff $x\le x_U = \bzp$. Assume now that we are in case I but not in case I i.
	By choice of $x_L$ and definition of the regions ii and iii, we are in region ii iff $r \le p^U(x)$, where $p^U(x)$ is defined in \eqref{eq:rhou}. The expression $p^U(x)$ implicitly depends $b_a$ via the choice of $\alpha$ and $x_U$. However, since we are in region I, we know that $\alpha=\balpha$ and $x_U=\bzp$. Furthermore, $x \ge x_U$ and thus $p^U(x) = 1-\balpha(x-\bzp)$. Therefore, line~\ref{ln:region-detection:Iii} appropriately checks the condition $r \le p^U(x)$.

	Assume now that $(x, b, y)$ does not lie in case I. Case~iii is impossible in case II or III by Corollary~\ref{cor:iii}. Case~i is further trivial in case III because $x_U=0$ in this case. Thus, these cases do not have to be checked.
	It remains to correctly distinguish case i and ii given that we are in case II.

	Now assume that $(x, b, y)$ lies in case II h. It lies in case i iff $x \le x_U = \hzph = y_a - \sqrt {2 \frac {\Delta_a} \balpha}$, where the last equality follows from Proposition~\ref{prop:zp} and the fact that we are in case II h. It is easy to see that this is equivalent to
	\begin{equation}
		\Delta_a \le \frac \balpha 2 y^2
		\label{eq:IIhiTrue}
		.
	\end{equation}
	Line~\ref{ln:region-detection:IIhi} checks this condition, but with $\Delta_a = y_a - b_a$ replaced by $\Delta := y - b$. We show that the two conditions are equivalent, given that we are in case II h.
	First, note that
	\[
	\Delta_a - \Delta = b - (b_a - x) =
	\begin{cases}
		0 &\text{in case i}
		\\
		\frac \balpha 2 (x - x_U)^2 &\text{in case ii}.
	\end{cases}
	\]
	In any case, $\Delta_a-\Delta \ge 0$ and thus $\Delta \le \Delta_a$. Therefore, \eqref{eq:IIhiTrue} implies line~\ref{ln:region-detection:IIhi}.

	For the other direction, assume that \eqref{eq:IIhiTrue} does not hold, so that $\Delta_a > \frac \balpha 2 y^2$ and we are in case ii. We have, via the above, that
	\begin{align*}
		\Delta &= \Delta_a - \frac \balpha 2 (x-x_U)^2
		= \Delta_a - \frac \balpha 2 \left(\sqrt{\frac 2 \balpha \Delta_a} - y\right)^2
		\\&= \Delta_a - \left(\Delta_a - \balpha \sqrt{\frac 2 \balpha \Delta_a} y + \frac \balpha 2 y^2\right)
		= \balpha \sqrt{\frac 2 \balpha \Delta_a} y - \frac \balpha 2 y^2
		\\&> \balpha y^2 - \frac \balpha 2 y^2 = \frac \balpha 2 y^2,
	\end{align*}
	where the last line follows by assumption.

	Assume finally that $(x, b, y)$ lies in case II l. It lies in case i iff $x \le x_U = \hzpl = y_a - \frac {\Delta_a} \theta - \frac \theta {2\alpha}$, where the last equality again follows from Proposition~\ref{prop:zp} and the fact that we are in case II l. This is equivalent to
	\begin{equation}
		\Gamma_a := b_a - x - \btheta y \ge \frac {\theta^2} {2\alpha}.
		\label{eq:IIliTrue}
	\end{equation}
	This follows by simple transformation, noting that $\theta y - \Delta_a = \Gamma_a$.
	Line~\ref{ln:region-detection:IIli} checks this condition, but with $\Gamma_a$ replaced by $\Gamma := b - \btheta y$. Again, we show that the two conditions are equivalent, given that we are in case II l.
	For the first direction of the equivalence, note that $\Gamma_a - \Gamma = \Delta - \Delta_a \le 0$ and thus $\Gamma_a \le \Gamma$ and \eqref{eq:IIliTrue} implies line~\ref{ln:region-detection:IIli}.

	For the other direction, assume that \eqref{eq:IIliTrue} does not hold, so that $\Gamma_a < \frac {\theta^2} {2\balpha}$ and we are in case ii.
	Observe that
	\begin{align*}
		x-x_U &= x - y_a + \frac {\Delta_a} \theta + \frac \theta {2\balpha}
		= \frac 1 \theta \left(\frac {\theta^2} {2\balpha} + \Delta_a - \theta y\right)
		\\&
		= \frac 1 \theta \left(\frac {\theta^2} {2\balpha} - \Gamma_a\right)
		.
	\end{align*}

	We thus have
	\begin{align*}
		\Gamma &= \Gamma_a + \Delta_a - \Delta
		= \Gamma_a + \frac \balpha 2 (x-x_U)^2
		= \Gamma_a + \frac \balpha {2 \theta^2} \left(\frac {\theta^2} {2\balpha} - \Gamma_a\right)^2
		\\&= \Gamma_a + \frac {\theta^2} {8\balpha} - \frac 1 2 \Gamma_a + \frac {\balpha} {2\theta^2}\Gamma_a^2
		= \frac 1 2 \Gamma_a + \frac {\theta^2} {8\balpha} + \frac {\balpha} {2\theta^2}\Gamma_a^2
		\\&< \frac 1 2 \cdot \frac {\theta^2} {2\balpha} + \frac 1 4 \cdot \frac {\theta^2} {2\balpha} + \frac 1 4 \cdot \frac {\theta^2} {2\balpha}
		= \frac {\theta^2} {2\balpha}
		.
		\qedhere
	\end{align*}
\end{proof}

\noindent\rule{\textwidth}{1pt}
\paragraph{Theorem~\ref{thm:ba-reconstruction}} \hypertarget{pf:thm:ba-reconstruction}{}
\begin{proof}
	The statement regarding operations is obvious.
	Towards correctness, all calculations result from Propositions \ref{prop:b}, \ref{prop:xm}, \ref{prop:alpha}, and \ref{prop:zp}, by replacing the value of $\alpha$ and $x_U$ (and $x_L$ in case of case I iii) into the equation for $b(x)$ from Proposition~\ref{prop:b}.
	Within each region, this yields a smooth equation (i.e., without a case distinction or a maximum/minimum) that has degree 1 or 2 in $b_a$ can therefore be solved for $b_a$ easily.


	The only part that requires further discussion are the equations for $\Delta_a$ in cases II l ii and III L ii. These calculatinos result from a quadratic equation each, which has two solutions (unless the radicand is zero). We show why the equation always has a solution and only the respective chosen solution is possible as a choice of $\Delta_a$.

	First consider case II l ii. Here we have $\alpha=\balpha$, $x_U = \hzpl = y_a - \frac {\Delta_a} \theta - \frac 1 {2\alpha} \theta$, and $\alpha\Delta_a \ge \frac 1 2 \theta^2$. Replacing this into the equation for $b(x)$ in Proposition~\ref{prop:b} for case ii yields the following.
	\newcommand\bgamma{\bar\gamma}
	\begin{align*}
		&& 		b &= b_a - x + \frac \alpha 2 \left( x - y_a + \frac {\Delta_a} \theta + \frac 1 {2\alpha} \theta \right)^2
		\\
		\Iff && 0 &= \Delta_a^2 - 2 \theta\bgamma \Delta_a + \frac {2\theta^2} \alpha (y-b) + \theta^2 \gamma^2
		,
	\end{align*}
	where $\gamma := \frac \theta {2\alpha} - y$ and $\bgamma := \frac \theta {2\alpha} + y$.
	The second line follows by straightforward algebraic transformation.
	Let $p' = \theta\bgamma$ and $q = \frac {2\theta^2} \alpha (y-b) + \theta^2 \gamma^2$.
	By the quadratic formula, the solutions to this equation are
	\[
	\Delta_a = p' \pm \sqrt {p^{\prime 2} - q} = p' \pm \sqrt d,
	\]
	where $d = \theta^2 \frac 2 \alpha (b-\btheta y)$ and the second equality again follows by simple algebraic transformation.
	Note that $d > 0$ since, by assumption, $b/y > \btheta$, so the equation has two distinct solutions.
	Algorithm~\ref{alg:ba-reconstruction} chooses the “$-$” solution.
	We show that the “$+$” solution to the equation is not a feasible value of $\Delta_a$. To see this, assume towards a contradiction that $\Delta_a = p' + \sqrt d = \theta \left(\gamma + \sqrt {2/\alpha (b-\btheta y)}\right) =: \theta (\gamma + \delta)$. Note that $\delta > 0$ because $d > 0$. Then
	\begin{align*}
		x_U &= y_a - \frac {\Delta_a} \theta - \frac \theta {2\alpha}
		\\
		&= y_a - \gamma - \delta - \frac \theta {2\alpha}
		\\
		&= y_a - \frac \theta {2\alpha} + y - \delta - \frac \theta {2\alpha}
		= x - \frac \theta \alpha - \delta.
	\end{align*}
	Therefore,
	\begin{align*}
		\rho(x) &= 1 - \alpha (x-x_U) = 1 - \alpha (\frac \theta \alpha + \delta)
		\\
		&= 1 - \theta - \alpha \delta = \btheta - \alpha \delta < \btheta.
	\end{align*}
	Contradiction to case ii and choice of $x_L$.

	Next consider case III L ii.
	We have $\alpha = \frac 1 2 \frac {\theta^2} {b_a - \btheta y_a}$ and $x_U=0$, and $r_a \le \frac {1+\btheta} 2$. Again, replacing this into the equation from Proposition~\ref{prop:b} yields:
	\begin{align*}
		&& b &= b_a - x + \frac 1 2 \cdot \frac 1 2 \frac {\theta^2} {b_a - \btheta y_a} x^2
		\\
		\Iff&& 0 &= \Delta_a^2 - (y-b + \theta y_a)\Delta_a + (y-b)\theta y_a + \frac 1 4 x^2 \theta^2
	\end{align*}
	Let $p' = \frac 1 2 (y-b + \theta y_a)$ and $q = (y-b)\theta y_a + \frac 1 4 x^2 \theta^2$. Then
	\[
	\Delta_a = p' \pm \sqrt d
	,
	\]
	where $d := p^{\prime 2} - q = \frac 1 4 \left((b+x-\btheta y_a)^2 + \theta^2 x^2\right)$.
	We trivially have $d > 0$.\footnote{
		Note that we trivially always have $d \ge 0$. If $d=0$, then $x=0$ and we must have $0 = b+x-\theta y_a = b - \theta y$ and thus $b/y=\theta$; contradiction to the assumption of the algorithm.
	}
	Assume towards a contradiction that $\Delta_a = p' + \sqrt d$. Then
	\begin{align*}
		\Delta_a &\ge p' + \sqrt {\frac 1 4 (b+x-\btheta y_a)^2}
		\\&= p' + \frac 1 2 (b+x-\btheta y_a)
		\\&= \frac 1 2 \left(y-b + \theta y_a + b + x -\btheta y_a\right)
		\\&= \frac 1 2 \left(y_a + \theta y_a - \btheta y_a\right)
		= \frac 1 2 \cdot 2 \theta y_a = \theta y_a.
	\end{align*}
	For the first equality, note that by assumption $b_a \ge \btheta y_a$ and $b \ge b_a - x$ (since $b$ arises from $b_a$ by redeeming at a rate $\le 1$), so $b+x \ge b_a \ge \btheta y_a$, and thus the operand of the square is non-negative.

	Now, since $y_a - b_a = \Delta_a \ge \theta y_a$, we equivalently have $b_a \le \btheta y_a$, i.e., $r_a \le \btheta$.
	Contradiction to non-triviality.
\end{proof}

\section{Additional Technical Guarantees}\label{apx:sanity-lemmas}

The following technical guarantees may be useful for implementation, e.g., to understand error conditions or the data types required to store certain values. They are essentially redundant but stated explicitly here for additional clarity.
Note that numerical errors in any implementation can lead to slight violations of these properties in practice even though they are mathematically guaranteed.

The following lemma shows that Proposition~\ref{prop:xm} can be safely used to store $x_L$ in an unsigned integer of bounded width.

\begin{lemma}
	Assume that $b_a < y_a$, $r_a > \btheta$, and let $x_U \le y_a$ and $\alpha > 0$ be arbitrary. Assume that Inequality~\eqref{eq:xm-exists} holds. Then the formula for $x_L$ in Proposition~\ref{prop:xm} is well-defined and $0 \le x_L \le y_a$. Specifically, we have:
	\begin{align*}
		(y_a - x_U)^2 - \frac 2 \alpha (y_a - b_a) &\ge 0
		\\
		y_a - \sqrt {(y_a - x_U)^2 - \frac 2 \alpha (y_a - b_a)} &\in [0, y_a]
	\end{align*}
\end{lemma}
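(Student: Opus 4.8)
The plan is to reduce the whole statement to Inequality~\eqref{eq:xm-exists} together with elementary manipulations of non-negative quantities; there is no substantive obstacle, only a small amount of case bookkeeping. Throughout I will use that $\Delta_a = y_a - b_a > 0$ (from $b_a < y_a$), that $\alpha > 0$, and that $x_U \in [0, y_a]$ as always in our setup; the hypothesis $r_a > \btheta$ is in fact not needed here.

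First I would rule out the degenerate case $x_U = y_a$. If $x_U = y_a$, then $(y_a - x_U)^2 = 0$, so the right-hand side of \eqref{eq:xm-exists}, namely $2\Delta_a/(y_a-x_U)^2$, is $+\infty$ (positive numerator over zero), and hence \eqref{eq:xm-exists} cannot hold for any finite $\alpha$. So under the hypotheses we must have $x_U < y_a$, i.e.\ $y_a - x_U > 0$ and $(y_a - x_U)^2 > 0$; combined with $x_U \ge 0$ this gives $0 < (y_a - x_U)^2 \le y_a^2$.

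Next, for the first displayed inequality, I would simply rearrange \eqref{eq:xm-exists}: multiplying $\alpha \ge 2\Delta_a/(y_a-x_U)^2$ through by the strictly positive quantity $(y_a-x_U)^2/\alpha$ yields $(y_a - x_U)^2 \ge \tfrac 2\alpha \Delta_a$, which is exactly $(y_a - x_U)^2 - \tfrac 2\alpha(y_a - b_a) \ge 0$. In particular the radicand occurring in Proposition~\ref{prop:xm} is non-negative, so the square root is well-defined over the reals.

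Finally, for the membership claim, set $D := (y_a - x_U)^2 - \tfrac 2\alpha(y_a - b_a)$. We have just shown $D \ge 0$, and since $\tfrac 2\alpha\Delta_a > 0$ we also get $D < (y_a - x_U)^2 \le y_a^2$, so $0 \le \sqrt D \le y_a$. Then $y_a - \sqrt D \le y_a$ follows from $\sqrt D \ge 0$, and $y_a - \sqrt D \ge 0$ follows from $\sqrt D \le y_a$; hence $y_a - \sqrt D \in [0, y_a]$, completing the argument. The only point needing a moment of care is the first step (excluding $x_U = y_a$), since for that value the formula's radicand would be $-\tfrac 2\alpha\Delta_a < 0$ and the conclusion would fail; but the hypothesis \eqref{eq:xm-exists} precisely precludes it.
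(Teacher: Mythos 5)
Your proof is correct and follows essentially the same route as the paper's: the radicand's non-negativity is just a rearrangement of \eqref{eq:xm-exists}, and the bound $y_a - \sqrt{D} \ge 0$ comes from $D \le (y_a - x_U)^2 \le y_a^2$, which is exactly the paper's squaring argument. Your explicit exclusion of $x_U = y_a$ is a harmless extra piece of bookkeeping that the paper leaves implicit.
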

\begin{proof}
	The first inequality immediately follows from \eqref{eq:xm-exists}. Towards the second inequality, the expression is obviously $\le y_a$. To see that it is $\ge 0$, note that this is equivalent to
	\begin{align*}
		&& y_a &\ge \sqrt {(y_a - x_U)^2 - \frac 2 \alpha (y_a - b_a)}
		\\
		\Iff && y_a^2 &\ge (y_a - x_U)^2 - \frac 2 \alpha (y_a - b_a)
	\end{align*}
	and this inequality obviously holds since $y_a^2 \ge (y_a - x_U)^2$.
\end{proof}

The following lemma shows that Proposition~\ref{prop:zp} can be safely used to store $x_U$ in an unsigned integer of bounded width if $\alpha$ is chosen appropriately.
We do not consider the caps $\bzp$ and $\balpha$ explicitly here; these only make the bounds on $x_U$ tighter.

\begin{lemma}
	Assume that $b_a < y_a$ and $r_a > \btheta$ and let
	$\alpha \ge \halpha$ be arbitrary where $\halpha$ is like in Proposition~\ref{prop:alpha}.
	Then for the formula for $\hzp$ in Proposition~\ref{prop:zp} we have $0 \le \hzp(\alpha) \le y_a$.
\end{lemma}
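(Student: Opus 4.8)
The plan is to verify the two bounds separately in the two cases appearing in the definition of $\hzp(\alpha)$ in Proposition~\ref{prop:zp}: \textbf{case h}, where $\alpha\Delta_a\le\frac12\itheta^2$ and $\hzp(\alpha)=\hzph=y_a-\sqrt{2\Delta_a/\alpha}$, and \textbf{case l}, where $\alpha\Delta_a\ge\frac12\itheta^2$ and $\hzp(\alpha)=\hzpl=y_a-\frac{\Delta_a}{\itheta}-\frac{1}{2\alpha}\itheta$; here $\Delta_a:=y_a-b_a$. Note that the hypotheses give $\Delta_a>0$ (from $b_a<y_a$), $b_a-\btheta y_a>0$ and $\itheta\in(0,1)$ (from $\btheta<r_a<1$), and $\alpha\ge\halpha>0$. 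The upper bound $\hzp(\alpha)\le y_a$ is then immediate in both cases from the explicit formulas, since $\sqrt{2\Delta_a/\alpha}\ge0$ and $\frac{\Delta_a}{\itheta}+\frac{1}{2\alpha}\itheta\ge0$. So the only content is the lower bound $\hzp(\alpha)\ge0$, which I would reduce in each case to a comparison of $\alpha$ with the appropriate branch of $\halpha$ from Proposition~\ref{prop:alpha}.

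In case h, since both sides are nonnegative, $\hzph\ge0$ is equivalent after squaring to $\alpha y_a^2\ge2\Delta_a$, i.e.\ to $\alpha\ge 2(y_a-b_a)/y_a^2=\halpha_H$. As $\alpha\ge\halpha$, it suffices to show $\halpha\ge\halpha_H$. If $r_a\ge\frac{1+\btheta}{2}$ this is immediate because then $\halpha=\halpha_H$ by Proposition~\ref{prop:alpha}; if $r_a\le\frac{1+\btheta}{2}$ then $\halpha=\halpha_L$, and $\halpha_L\ge\halpha_H$ is exactly the inequality $\frac14\itheta^2\ge(1-r_a)(r_a-\btheta)$ established (by completing the square, with $\zeta=\frac{1+\btheta}{2}-r_a$) in the proof of Proposition~\ref{prop:alpha}.

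In case l, using $1-\itheta=\btheta$ I would rewrite $\hzpl=y_a-\frac{\Delta_a}{\itheta}-\frac{\itheta}{2\alpha}=\frac{b_a-\btheta y_a}{\itheta}-\frac{\itheta}{2\alpha}$, so that $\hzpl\ge0$ is equivalent to $2\alpha(b_a-\btheta y_a)\ge\itheta^2$, i.e.\ to $\alpha\ge\itheta^2/\big(2(b_a-\btheta y_a)\big)=\halpha_L$. Again split on $r_a$ versus $\frac{1+\btheta}{2}$: if $r_a\le\frac{1+\btheta}{2}$ then $\halpha=\halpha_L$ and $\alpha\ge\halpha=\halpha_L$; if $r_a\ge\frac{1+\btheta}{2}$ then $b_a-\btheta y_a\ge y_a-b_a=\Delta_a$ (equivalently $2b_a\ge(1+\btheta)y_a$), hence $\halpha_L=\itheta^2/\big(2(b_a-\btheta y_a)\big)\le\itheta^2/(2\Delta_a)\le\alpha$, where the final step is precisely the defining inequality of case l. Either way $\alpha\ge\halpha_L$, which completes the argument.

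The main (and essentially only) obstacle is bookkeeping: correctly tracking which branch of $\halpha$ ($\halpha_H$ or $\halpha_L$) is active and exploiting the two elementary facts $\halpha_L\ge\halpha_H$ (always) and ``$r_a\ge\frac{1+\btheta}{2}\iff b_a-\btheta y_a\ge\Delta_a$'', the latter letting the case-l defining inequality do the work in the one subcase where $\halpha\ne\halpha_L$. No genuinely hard estimate is involved, and the hypothesis $\alpha\ge\halpha$ is used exactly once in each case to bound $\alpha$ from below.
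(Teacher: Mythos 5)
Your proof is correct, but it follows a genuinely different route from the paper's. The paper argues structurally: it notes that $\hzp(\halpha)=0$ by construction (this is how $\halpha$ was derived in the proof of Proposition~\ref{prop:alpha}), observes that both expressions $\hzph$ and $\hzpl$ are increasing in $\alpha$, and uses continuity of the piecewise definition to conclude that $\hzp(\,\cdot\,)$ is monotonically increasing on $[\halpha,\infty)$, whence $\hzp(\alpha)\ge \hzp(\halpha)=0$; the upper bound is read off the formulas as you do. You instead verify the lower bound directly in each branch, reducing case h to $\alpha\ge\halpha_H$ and case l to $\alpha\ge\halpha_L$, and then closing the two cross-cases with the elementary facts $\halpha_L\ge\halpha_H$ (equivalently $\tfrac14\itheta^2\ge(1-r_a)(r_a-\btheta)$, already established inside the proof of Proposition~\ref{prop:alpha}) and $r_a\ge\tfrac{1+\btheta}{2}\iff b_a-\btheta y_a\ge\Delta_a$, the latter combined with the case-l defining inequality $\alpha\Delta_a\ge\tfrac12\itheta^2$. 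Both arguments are sound; the paper's is shorter and reuses the semantic characterization of $\halpha$ as the admissibility threshold (at the price of leaning on ``by construction'' and a monotonicity-plus-continuity step it only sketches), while yours is more self-contained and purely computational, making explicit exactly which inequality carries each case — at the cost of a four-way case bookkeeping that the monotonicity argument avoids. One cosmetic nitpick: from $\btheta<r_a<1$ you only get $\itheta\in(0,1]$ (e.g.\ $\btheta=0$ gives $\itheta=1$), not $\itheta\in(0,1)$; this is immaterial since you only use $\itheta>0$.
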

\begin{proof}
	It follows immediately from the formulas that $\hzph(\alpha) < y_a\;\forall \alpha > 0$.

	It remains to show that $\bzp(\alpha) \ge 0$.
	By construction (see the proof of Proposition~\ref{prop:alpha}), $\hzp(\halpha)=0$. Furthermore, $\hzp(\alpha)$ is monotonically increasing in $\alpha$.	
	To see this, note that the expressions $\hzph$ and $\hzpl$ are both monotonically increasing as functions of $\alpha$. Continuity of $\hzp(\alpha)$ (see Proposition~\ref{prop:zp}) implies that $\hzp(\alpha)$ as a whole must be monotonically increasing.
\end{proof}

\begin{remark}
	In general (if $\alpha$ is not chosen according to Proposition~\ref{prop:alpha} but is arbitrary), we can have $\hzp(\alpha) < 0$. To see this, consider the limit $\alpha \to 0$. Then $\alpha \Delta_a \le \frac 1 2 \theta^2$, so $\hzp(\alpha) = \hzph = y_a - \sqrt {2 \frac {\Delta_a} \alpha} \to -\infty$.
\end{remark}

The following lemma shows that, to ensure that the boundary conditions of our method hold, it is enough to check the reserve ratio $b/y$ at the beginning of any redemption.

\begin{lemma}
	Consider Algorithm~\ref{alg:redemption} and assume that $\btheta < b/y < 1$. Consider the values $b_a$ and $y_a$ computed by the algorithm. Then also $\btheta < b_a / y_a < 1$.
\end{lemma}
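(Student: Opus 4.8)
The plan is to reduce the statement to the correctness of the reconstruction procedure already established (Theorems~\ref{thm:region-detection} and \ref{thm:ba-reconstruction}, together with the normalization result Theorem~\ref{thm:normalization}) plus two short contradiction arguments that exclude the degenerate regimes $b_a \ge y_a$ and $b_a \le \btheta y_a$.

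First I would record what the algorithm actually guarantees about its output. In the branch of Algorithm~\ref{alg:redemption} reached under the hypothesis $\btheta < b/y < 1$, we have $y_a = y + x$, and the combination of Algorithm~\ref{alg:region-detection} (correct by Theorem~\ref{thm:region-detection}), Algorithm~\ref{alg:ba-reconstruction} (correct by Theorem~\ref{thm:ba-reconstruction}), and the scaling identity of Theorem~\ref{thm:normalization} yields a $b_a$ with $b(x; b_a, y_a) = b$. Since $y(x; b_a, y_a) = y_a - x = y$, this immediately gives $r(x; b_a, y_a) = b(x; b_a, y_a)/y(x; b_a, y_a) = b/y$, which by hypothesis lies strictly between $\btheta$ and $1$.

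Next I would rule out $r_a := b_a/y_a \ge 1$. In that case the design fixes $p(x') = 1$ for every $x'$, so $b(x) = b_a - x$ and hence $r(x) = (b_a - x)/(y_a - x) \ge 1$, using $b_a \ge y_a$ and $x < y_a$ (the latter from $y_a = y + x$ with $y > 0$). This contradicts $r(x) = b/y < 1$, so $b_a < y_a$. Symmetrically, I would rule out $r_a \le \btheta$: by the construction of the dynamic parameters in Section~\ref{sec:calc-params}, when $r_a \le \btheta$ the marginal redemption price is constant, $p(x') = r_a$ for all $x'$, so $b(x) = b_a - r_a x = r_a(y_a - x)$ and thus $r(x) = r_a \le \btheta$, contradicting $r(x) = b/y > \btheta$. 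Combining the two exclusions yields $\btheta < b_a/y_a < 1$.

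The step I would be most careful about is not any computation but the bookkeeping: making sure the degenerate-case pricing rules I invoke ($p \equiv 1$ when $b_a \ge y_a$ and $p \equiv r_a$ when $r_a \le \btheta$) are exactly those fixed by the design, and that the three correctness ingredients are composed so that the returned $(b_a, y_a)$ genuinely satisfies $b(x; b_a, y_a) = b$. Once that identity is in hand, the two contradiction arguments are one line each and there is no genuine obstacle; in particular, no monotonicity argument along the curve is needed.
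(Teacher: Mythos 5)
There is a genuine gap, and it sits exactly in the step you flag as mere ``bookkeeping'': the assertion that Algorithms~\ref{alg:region-detection} and~\ref{alg:ba-reconstruction} (via Theorem~\ref{thm:normalization}) ``yield a $b_a$ with $b(x; b_a, y_a) = b$.'' Those correctness results are conditional: the Require/Ensure clauses and the proofs of Theorems~\ref{thm:region-detection} and~\ref{thm:ba-reconstruction} presuppose that the input state actually \emph{lies in one of the regions}, i.e.\ that it arises from some non-trivial anchor point $b_a$ with $\btheta < b_a/y_a < 1$ (Lemma~\ref{lem:precompute} is stated for a state of the form $b(x;b_a)$, and the reconstruction proof solves, in each region, an equation derived under the assumption that the region's case conditions genuinely hold for the sought $b_a$). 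For an arbitrary state with $\btheta < b/y < 1$, nothing you cite guarantees that such an anchor point exists; the algorithm would still emit a region label and a root of some quadratic, but the identity $b(x;b_a,y_a)=b$ cannot be concluded from the cited theorems. In other words, your step~1 silently assumes existence of a valid preimage, which is precisely the non-trivial content the lemma is meant to secure (the main text's remark ``it follows from continuity that such a $b_a$ must exist'' is substantiated only here).

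This is also where your proposal diverges from the paper's proof, and why your closing claim that ``no monotonicity argument along the curve is needed'' is the wrong instinct. The paper argues: if a $b_a\in(\btheta y_a, y_a)$ with $b(x;b_a,y_a)=b$ exists, the algorithm finds it by construction; the real work is existence, proved by an intermediate-value argument for $f(b_a):=b(x;b_a,y_a)$ using continuity, strict monotonicity on the relevant range (Theorem~\ref{thm:mon-b0}), and the boundary limits $f(b_a)\to y$ as $b_a\to y_a$ and $f(b_a)\to \btheta\, y$ as $b_a\to \btheta y_a$. The second limit is not automatic: it requires tracking that eventually $\alpha=\halpha_L\to\infty$, $x_U=0$, $x_L\to 0$, and then an explicit computation that $r_L=\btheta$ in that regime (Propositions~\ref{prop:alpha}, \ref{prop:xm}, \ref{prop:b}). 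Your two contradiction arguments (excluding $b_a\ge y_a$ and $b_a\le \btheta y_a$) are fine as far as they go, but they only show that \emph{no} degenerate anchor point can produce a state with $b/y\in(\btheta,1)$; they do not show that some admissible anchor point does, so they cannot replace the existence argument. To repair the proof, you would need to add exactly this surjectivity step (or an equivalent argument) before invoking the reconstruction theorems.
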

\begin{proof}
	By construction, if there is a $b_a$ with $\btheta < b_a/y_a < 1$ such that $b(x; b_a, y_a) = b$, then the algorithm determines this $b_a$ and in particular it lies in the interval $(\btheta, 1)$ as required.

	Existence (and, in fact, uniqueness) of this $b_a$ can be seen explicitly as follows. Fix $x$ and $y$ and note that this fixes $y_a = y + x$. Consider the function $f(b_a) := b(x; b_a, y_a)$. We need to show that there is a unique $b_a \in (\btheta y_a, y_a)$ such that $f(b_a) = b$. This is implied by the following facts: $f$ is strictly monotonic in $b_a$ in the region of those $b_a$ values where $f(b_a)/y \in (\btheta, 1)$ (see Theorem~\ref{thm:mon-b0}); $f$ is continuous (see the propositions in Appendix~\ref{apx:calc-params}); $f(b_a)\to y$ for $b_a \to y_a$; and $f(b_a) \to \btheta y$ for $b_a \to \btheta y_a$.
	We explicitly prove the last two statements.

	For $b_a \to y_a$, first note that we eventually have $\alpha=\balpha$ and $x_U = \bzp$ constant (propositions \ref{prop:alpha} and \ref{prop:xm}) and furthermore $x_L \to x_U$ and $r_L \to 1$ (Proposition~\ref{prop:xm}). Now, via Proposition~\ref{prop:b}, it is easy to see that $f(b_a) \to y_a - x = y$.

	For $b_a\to\btheta y_a$, the statement is trivial for $x=0$, so assume $x > 0$.
	Observe that $\alpha = \halpha_L$ eventually and $\halpha_L \to \infty$ (Proposition~\ref{prop:alpha}), and furthermore $x_U=0$ eventually (by choice of $\alpha$ or via Proposition~\ref{prop:xm}) and these statements imply that $x_L\to 0$ (see Proposition~\ref{prop:xm}). Consider a $b_a$ sufficiently close to $\btheta y_a$ such that all of the eventual statements hold, $\alpha \ge \theta$, and $x_L \le x$, so that $f(b_a) = r_L y$. We show that\footnote{%
		This does not contradict Theorem~\ref{thm:mon-b0} because that theorem only stated strict monotonicity of $f$ in the region of $b_a$ where $f(b_a)/y \in (\btheta, 1)$. We do not usually have \emph{strict} monotonicity in the larger region of $b_a$ where $b_a/y_a \in (\btheta, 1)$.	
	}
	$r_L = \btheta$. This can be seen by explicit calculation as follows.
	To simplify notation, assume WLOG $y_a=1$.
	By Proposition~\ref{prop:xm} and $x_U=0$ we have $r_L = 1 - \alpha x_L$, so the statement is equivalent to showing $\alpha x_L = \theta$. This statement is equivalent to
	\begin{align*}
		&& \alpha \left( 1 - \sqrt{1 - \frac 2 \alpha \left(1-b_a\right)} \right) &= \theta
		\\
		\Iff && \alpha - \theta &= \alpha \sqrt{1 - \frac 2 \alpha \left(1-b_a\right)}
		\\
		\Iff && \left(\alpha - \theta\right)^2 &= \alpha^2 - 2 \alpha \left(1-b_a\right)
		\\
		\Iff && \theta^2 - 2\alpha \left(\theta - (1 - b_a)\right) &= 0
		\\
		\Iff && \theta^2 - 2\alpha \left(b_a - \btheta\right) &= 0
		.
	\end{align*}
	Recall now that $\alpha=\halpha_L = \frac 1 2 \frac {\theta^2} {b_a - \btheta}$. The statement now clearly holds.
\end{proof}

\clearpage

\paragraph{Disclaimer}
\textit{This paper is for general information purposes only. It does not constitute investment advice or a recommendation or solicitation to buy or sell any investment and should not be used in the evaluation of the merits of making any investment decision. It should not be relied upon for accounting, legal or tax advice or investment recommendations. This paper may contain experimental code and technical designs that may not be ready for general use. This paper reflects the current opinions of the authors and is not made on behalf of Superluminal Labs or its affiliates and does not necessarily reflect the opinions of Superluminal Labs, its affiliates or individuals associated with Superluminal Labs. The opinions reflected herein are subject to change without being updated.}
	
\end{document}